\long\def\comment#1{} 
\definecolor{jm}{rgb}{0.5, 0.1, 0.3}
\definecolor{jh}{rgb}{0.2, 0.1, 0.6}
\definecolor{dark_green}{rgb}{0.1, 0.5, 0.1}
\newtheorem{definition}{Definition}
\newtheorem{proposition}{Proposition}
\newtheorem{theorem}{Theorem}
\newtheorem{lemma}{Lemma}
\newcommand\norm[1]{\left\lVert#1\right\rVert}  % define a norm symbol
\DeclareMathOperator*{\argmin}{arg\,min}
\newcommand{\bdm}{\begin{displaymath}}
\newcommand{\edm}{\end{displaymath}}
\newcommand\blfootnote[1]{%
  \begingroup
  \renewcommand\thefootnote{}\footnote{#1}%
  \addtocounter{footnote}{-1}%
  \endgroup
}
\begin{document}

\newpage
\begin{center}
{\LARGE\bf Spherical Principal Curves
\medskip
}
\vskip 7mm

{\large\sc Jongmin Lee, Jang-Hyun Kim, and Hee-Seok Oh \blfootnote{The first two authors contributed equally to this work.}}\\
{\large Seoul National University\\
Seoul 08826, Korea}
\end{center}
\vskip 5mm

\noindent 
{\bf Abstract}: This paper presents a new approach for dimension reduction of data observed on spherical surfaces. Several dimension reduction techniques have been developed in recent years for non-Euclidean data analysis. As a pioneer work, \cite{Hauberg} attempted to implement principal curves on Riemannian manifolds. However, this approach uses approximations to process data on Riemannian manifolds, resulting in distorted results. This study proposes a new approach to project data onto a continuous curve to construct principal curves on spherical surfaces. Our approach lies in the same line of \cite{Hastie} that proposed principal curves for data on Euclidean space. We further investigate the stationarity of the proposed principal curves that satisfy the self-consistency on spherical surfaces. The results on the real data analysis and simulation examples show promising empirical characteristics of the proposed approach.

\vskip 5mm
\noindent {\it Keywords}: Dimension reduction, Feature extraction, Principal geodesic analysis, Principal curve, Spherical domain. 

\pagenumbering{arabic}

%\newpage
%\pagenumbering{arabic}

\section{Introduction}
A variety of dimension reduction techniques have been developed to represent and analyze data on Euclidean space. Recently, there have been growing interests in the analysis of non-Euclidean data with a variety of applications; directional data \citep{Mardia1977, Mardia2014, Gray}, shape data \citep{Kendall1984, Huckemann2006, Huckemann2010, Mallasto}, and motion analysis \citep{Hauberg, Mallasto}. For example, \cite{Siddiqi} and \cite{Cippitelli} introduced a Cartesian product of sphere $S^2$ and $\mathbb{R}$ for medial representation and skeleton data, respectively. For these representations, the conventional dimension reduction methods on Euclidean space have been modified by considering geodesics on non-Euclidean space \citep{Fletcher2004, Huckemann2006, Huckemann2010, Jung2011, Jung2012, Panaretos}. As a study closely related to our proposal, \cite{Hauberg} developed principal curves on Riemannian manifolds. However, \cite{Hauberg}  uses an approximate method by projecting data onto a \textit{finite} set of points, unlike the original principal curve in \cite{Hastie} which projects data onto a \textit{continuous} curve. This approximate projection causes a problem that may project different data points onto a single point mistakenly. This study proposes a new principal curve approach for spherical data by projecting the data onto a continuous curve without any approximations and improves the performance of dimension reduction. Our proposed approach is two-fold: One is an extrinsic approach that requires the setting of additional embedding space for a given manifold. The other is an intrinsic approach that does not need an embedding space. This intrinsic approach is difficult to calculate \citep{Srivastava}, but it is necessary to develop principal curves on generic manifolds. In this study, we investigate the stationarity of the principal curves on spherical surfaces from both approaches.

The remainder of this paper is organized as follows. Section 2 briefly reviews conventional principal curves and intrinsic and extrinsic means on manifolds. In \Cref{circle}, a newly developed exact principal circle on spheres is studied, which is used for the initialization of the proposed principal curves. \Cref{proposed} presents the proposed principal curves with a practical algorithm and investigates the stationarity of them theoretically. In \Cref{numerical:experiment}, the experimental results of the proposed method are provided through real earthquake data from the U.S. Geological Survey, real motion capture data, and simulation studies on $S^2$ and $S^4$. \Cref{proofs} discusses a justification of \textit{exact} projection step and rigorous proofs of theoretical properties of the proposed principal curves. Finally, concluding remarks are given in Section 7.

The main contributions of this study can be summarized as follows: (a) We propose both extrinsic and intrinsic approaches to form principal curves on $d$-sphere $S^d$, $d\ge 2$. (b) We verify the stationarity of the proposed principal curves on $S^d$. (c) We show the usefulness of the proposed method through real data analysis and simulation studies.

%The detailed proofs of the theoretical properties of the proposed principal curves method are given in \Cref{proof:stationarity}.  

\section{Backgrounds}
\subsection{Principal Curves}
The principal curve in \cite{Hastie} can be considered as a nonlinear generalization of PCA that finds an affine subspace maximizing the variance of the projections of data. A curve is a function from one-dimensional closed interval to a given space, and a curve $f$ is called self-consistent or a \textit{principal curve} of a random variable $X$ if the curve satisfies 
\begin{equation}
\label{pc}
f(\lambda) = \mathbb{E}[X \ | \ \lambda_{f}(X)=\lambda],
\end{equation}
where $\lambda_f(x)$ is a projection index of a point $x$ onto the curve $f$. It implies that $f(\lambda)$ is the average of all data points projected onto $f(\lambda)$ itself. One of the most important consequences of the self-consistency is that the principal curve is a critical point with respect to reconstruction error for small perturbations \citep{Hastie}. However, it is difficult to formulate a principal curve by solving the self-consistency equation of (\ref{pc}). Thus, \cite{Hastie} represented a curve as the first order spline, connected by $T$ points. Then, they iteratively updated the curve to achieve the self-consistency condition using the following two steps, \textit{projection} and \textit{expectation}: (a) In the projection step, the given data are projected onto the curve. (b) In the expectation step, $T$ points of the curve are updated to satisfy the self-consistency. 

\subsection{Means on Manifolds}
Manifold is a topological space that locally resembles a Euclidean space. \textit{Riemannian manifold} $M$ is a smooth manifold equipped with smoothly varying inner product on tangent space. A \textit{(minimal) geodesic} is the shortest curve between two points in $M$ and its length is called \textit{geodesic distance}, denoted by $d_{Geo}(\cdot,\, \cdot)$. The class of Riemannian manifold includes a variety of spaces, such as Euclidean space $\mathbb{R}^d$, sphere $S^d$ \citep{Mardia2014, Mardia1977, Gray}, $\mathbb{R}P^d$ and $\mathbb{C}P^d$ (Kendall's shape space; \citep{Kendall1984, Huckemann2006, Huckemann2010}), $PD(d)$ (space of $d\times d$ symmetric positive definite matrices; \citep{Fletcher2007, Mallasto}), and product space of $S^2$ (medial representations; \citep{Siddiqi, Fletcher2004, Jung2011}). For more details about Riemannian manifold, see \cite{Boothby}. 

The concept of the expected value of a distribution can be naturally extended to manifolds, called \textit{Fr\'echet mean}. Given a probability distribution $Q$ on $M$ with a distance $\rho(\cdot, \cdot)$, the Fr\'echet mean $m\in M$ is defined as 
\begin{equation*}
\argmin\limits_{m\in M} \int \rho^2(m,\, x) Q(dx). 
\end{equation*}
The Fr\'{e}chet mean with geodesic distance is termed \textit{intrinsic mean} \citep{Bhattacharya2003}. Meanwhile, by embedding a given manifold $M$ into Euclidean space $\mathbb{R}^d$, the Fr\'echet mean can be calculated using Euclidean distance in $\mathbb{R}^d$, called \textit{extrinsic mean}. With an embedding $\xi:M \hookrightarrow \mathbb{R}^d$, the extrinsic mean is defined as  
\begin{equation*}
\label{frechetmean1}
\argmin\limits_{m\in M} \int \|\xi(m) - \xi(x)\|^2 Q(dx).
\end{equation*}
It is equivalent to the projection of the expectation in $\mathbb{R}^d$ to $M$ \citep{Bhattacharya2003}. That is, given a projection mapping $\pi: \mathbb{R}^d \to M$ defined as $\pi(y) = \argmin\limits_{m\in M} \|\xi(m) - y\|$, the extrinsic mean can be calculated as
$\pi\big( \int \xi(x) Q(dx)\big)$.
The extrinsic mean is computationally efficient compared to the intrinsic mean \citep{Bhattacharya2012}, and for a distribution $Q$ supported in a small region, the extrinsic mean is close with the intrinsic mean \citep{Bhattacharya2005}.

\subsection{Principal Curves on Riemannian Manifolds}\label{sec:hauberg}
\cite{Hauberg} proposed principal curves on Riemannian manifolds by expressing a curve as a set of $T$ points, $f = \{C_1,\, \ldots,\, C_{T} \}$, joined by geodesics. The estimation algorithm of the curve follows that of \cite{Hastie} with an approximation. Specifically, the mean operation in the expectation step is performed by intrinsic mean, and the projection is conducted by finding the nearest point in $f$ as 
\begin{equation*}
    \mbox{proj}(x) = \argmin_{C_i\in f} d_{Geo}\big(x,\, C_i\big),   
\end{equation*}
which is not an \textit{exact} projection onto the continuous curve.

\section{Enhancement of Principal Circle for Initialization} \label{circle}
Methods for fitting circles to data on $S^2$ are actively used in many applications, especially in astronomy and geology, to recognize undisclosed patterns of data \citep{Mardia1977, Gray}. This section improves the principal circle to be used as an initialization of the principal curves proposed in \Cref{proposed}. 

\subsection{Principal Geodesic and Principal Circle}
The principal curve algorithm of \cite{Hastie} uses the first principal component as the initial curve, which is easily calculated by singular value decomposition (SVD) of the data matrix in Euclidean space. Along with this line, the proposed principal curve algorithm in \Cref{proposed} requires an initial curve. The principal geodesic analysis (PGA) by \cite{Fletcher2004} can be considered as a generalization of PCA that performs dimension reduction of data on the Cartesian product of simple manifolds, such as $\mathbb{R}^3$, $S^2$, and $\mathbb{R}_+$. To this end, \cite{Fletcher2004} projected each manifold component of the data into a tangent space at the intrinsic mean of each component. As a result of the tangent space approximation of each component, data are approximated by points in Euclidean space, so applying PCA allows dimension reduction to be performed through the inverse process of the tangent projection, \textit{i.e.} {\it exponential map} that preserves a distance and angle at a base point. For spherical cases, they mainly perform tangent space projection using an inverse exponential map, called {\it log map}. The explicit forms of exponential and log maps of $S^2$ are described in \cite{Fletcher2004, Jung2011} and \cite{Jung2012}. 
 
However, PGA always results in a great circle going through the intrinsic mean on the sphere, as shown in \Cref{fig:pgpc}, and the class of great circles on a sphere is sometimes limited to suitably fit a dataset on the sphere \citep{Jung2011, Hauberg}. For example, the left panel of \Cref{fig:pgpc} shows earthquake data from the U.S. Geological Survey showing the location (blue dot) of significant earthquakes with Mb magnitude 8 or higher around the Pacific since 1900. The data will be analyzed in detail in \Cref{numerical:experiment}. In \Cref{fig:pgpc}, while the result (pink) by PGA does not fit the data correctly, our principal circle (red), presented later in \Cref{exact}, improves the representation of the data. Further, in the right panel of \Cref{fig:pgpc}, our principal circle suitably fits the circular simulated data, whereas the result (pink) by PGA does not capture the variation of the data. The PGA's failure stems from the fact that the above two data sets are far from their intrinsic means, as noted in \cite{Jung2011}, \cite{Jung2012}, and \cite{Hauberg}.
\begin{figure}
	\centering
	\includegraphics[scale=0.25]{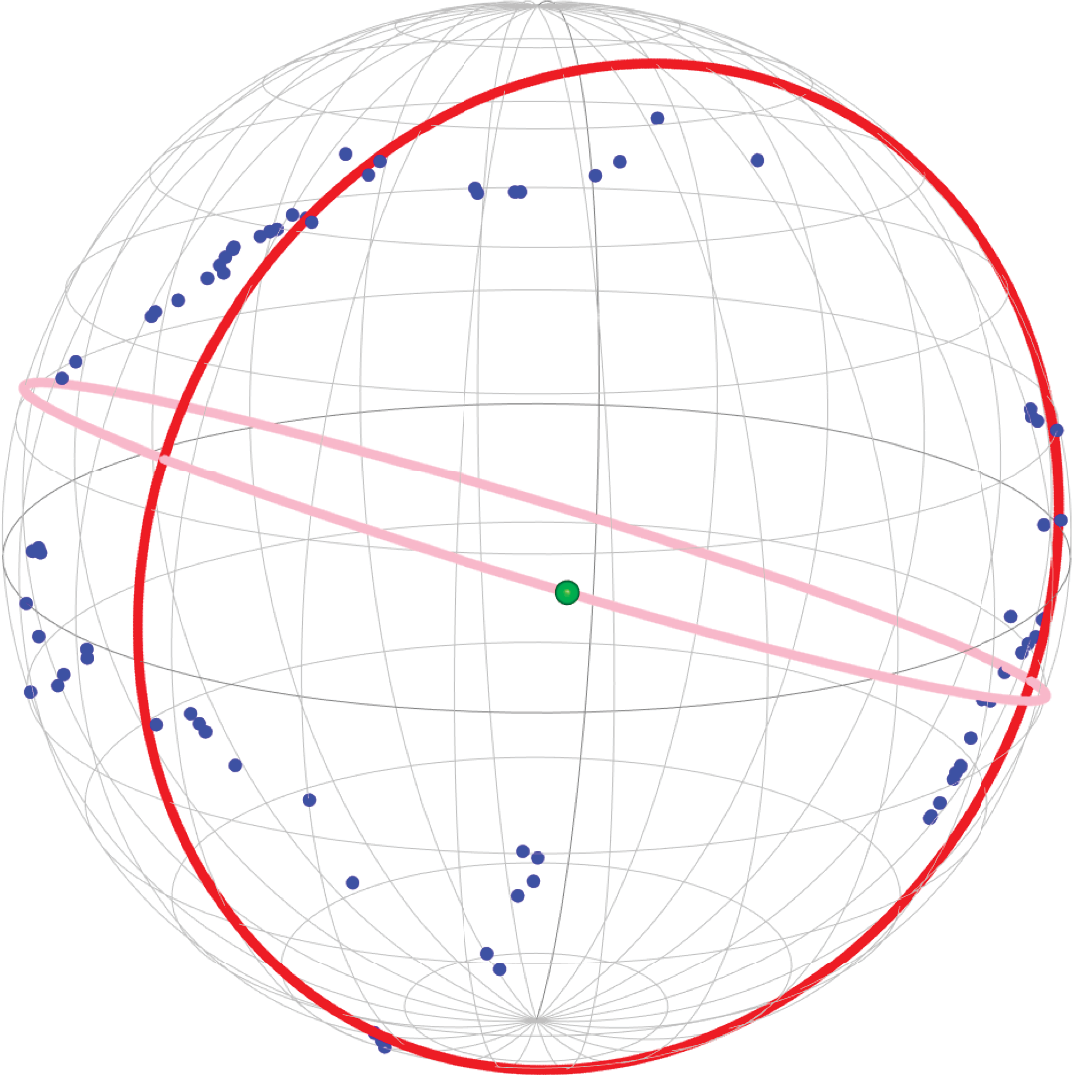}
	\hspace{1cm}
	\includegraphics[scale=0.25]{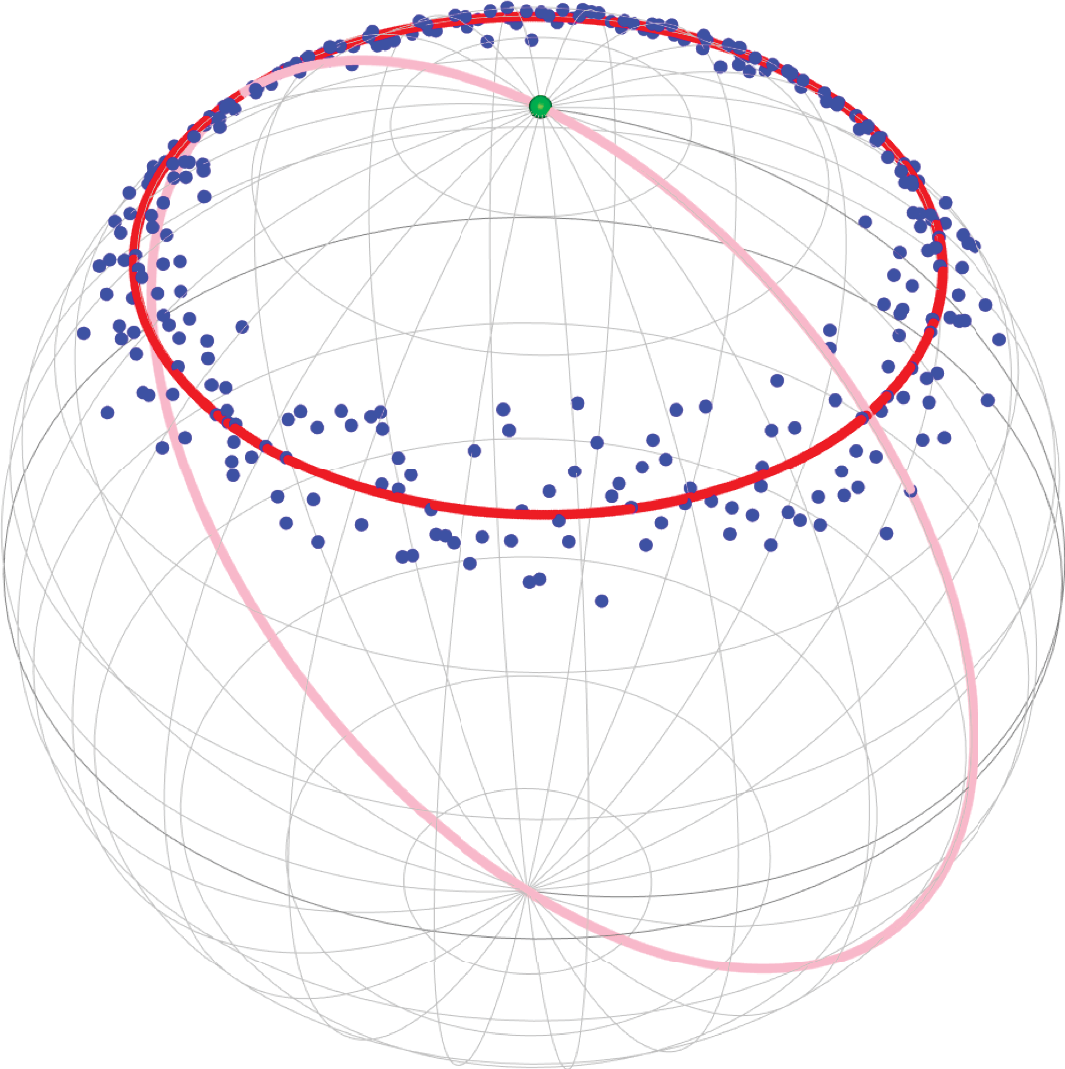}
	\caption{Left: Spherical distribution of significant earthquakes (blue) with its intrinsic mean (green), the result (pink) by PGA, and the result (red) by our proposed principal circle. Right: Circular simulated data (blue) with its intrinsic mean (green), the result (pink) by PGA, and the result (red) by our proposed  principal circle.}
	\label{fig:pgpc} 
\end{figure}

In the literature, there is an attempt by \cite{Jung2011} that generalizes the PGA to a circle on $S^2$. The circle on $S^2$ that minimizes a reconstruction error is called \textit{principal circle}, where the reconstruction error is defined as the total sum of squares of the geodesic distance between the curve and the data. \cite{Jung2011} used a double iteration algorithm that uses the log map to project the data into the tangent space and then finds the principal circle. However, this approach has two problems. First, using the tangent approximation when minimizing the distance may causes numerical errors. If the data points are located away from the mean, the numerical errors may increase because there is no local isometry between the sphere and its tangent plane according to the \textit{Gauss's Theorema Egregium} (see p. 363-370 of \cite{Boothby} or Ch 8 of \cite{Tu} for details). Second, due to the topological difference between the sphere and the plane, the existence of principal circles in the tangent plane is not guaranteed. For example, \Cref{fig:inhomo} shows simulated data, where the underlying structure is a great circle, and the intrinsic mean is the North Pole $(0,\, 0,\, 1)$, where the data points are mostly concentrated around the North Pole. From the compactness of the sphere, the least-squares circle always exists regardless of the data structure. It is an advantage of the intrinsic approaches. On the other hand, the least-squares circle does not exist if the data points projected onto the tangent space at their intrinsic mean are collinear, as shown in the middle and right panels of \Cref{fig:inhomo}. It coincides that several circle fitting procedures in a plane, such as \cite{Kasa} and \cite{Coope}, fail when the data points are collinear, as noted in \cite{Umbach}. Moreover in this case, the (tangent) plane cannot consider the periodicity of the data, as opposed to the left panel of \Cref{fig:inhomo}. Ignoring the periodic structure of data, as noted in \cite{Eltzner}, may reduces the efficiency of a method.
\begin{figure}
	\centering
	\includegraphics[scale=0.21]{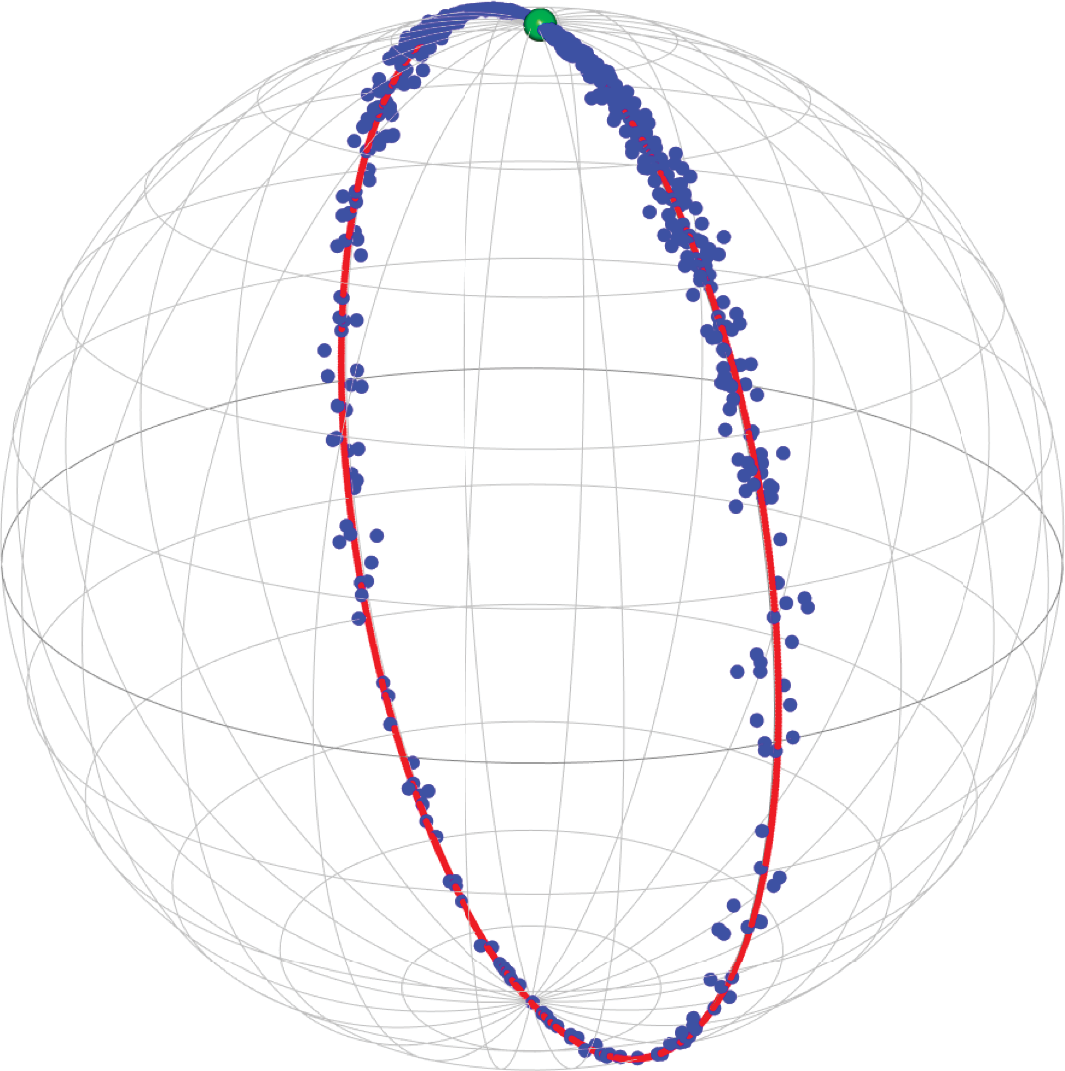}
	\includegraphics[scale=0.22]{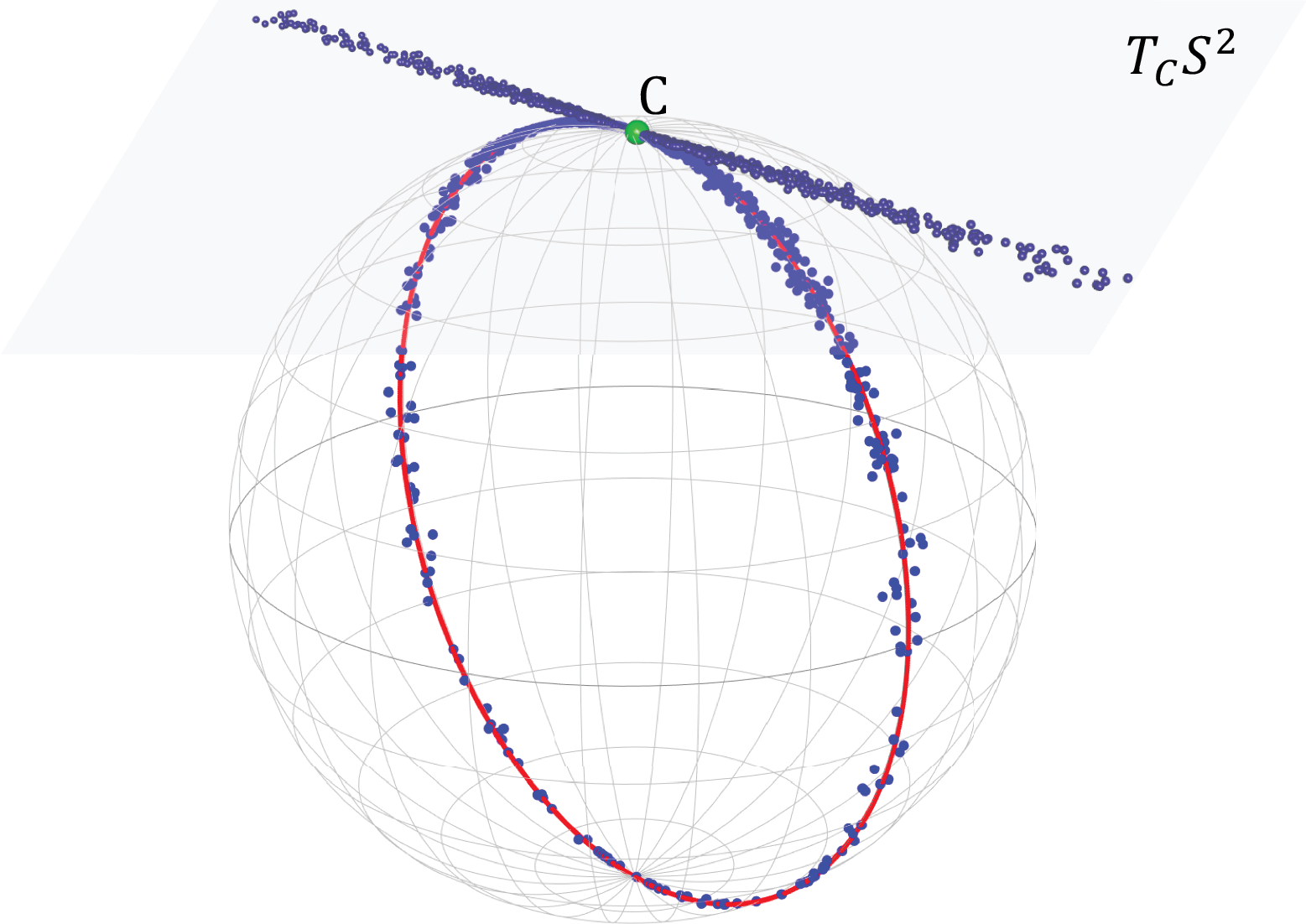}
    \includegraphics[scale=0.17]{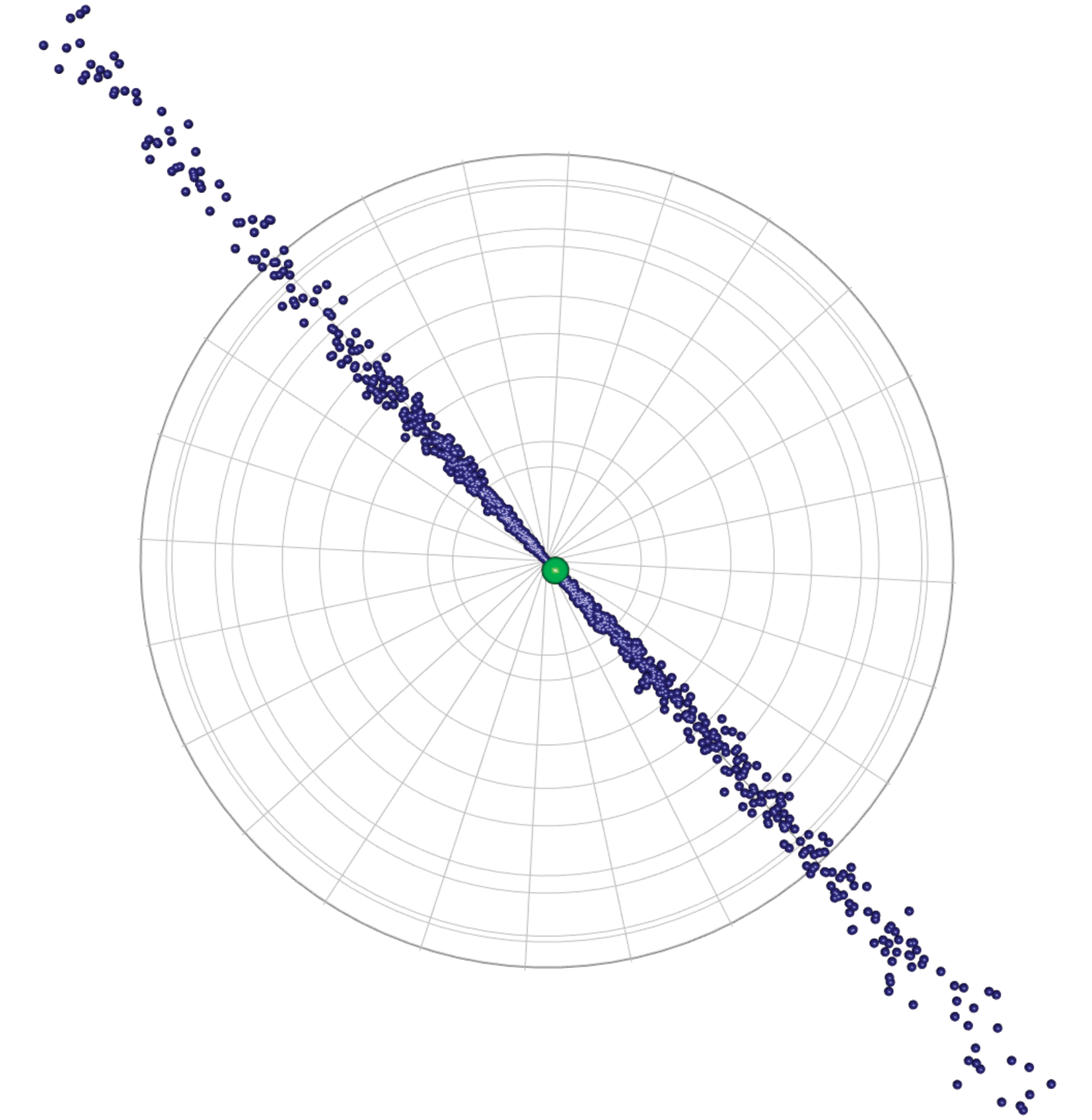}
	\caption{Left: Simulated data points (blue) with the intrinsic mean $(0,\, 0,\, 1)$ (green), and the result by our proposed principal circle (red). Middle: The projected points from the sphere onto the tangent plane at $C=(0,\, 0,\, 1)$. Right: The projected points viewed from above the Northern Hemisphere.}
	\label{fig:inhomo} 
\end{figure}

This study proposes a new principal circle that does not rely on tangent projection for better initialization of the proposed principal curve presented in \Cref{proposed}. We obtain the constraint-free optimization problem by expressing the center of the circle using the spherical coordinate system in \Cref{exact} and \Cref{hyper}.

\subsection{Exact Principal Circle}\label{exact}
For our principal circle, we consider an intrinsic optimization algorithm that does not use any approximations. Let $d_{Geo}(x,\, y)$ be the geodesic distance between $x,\, y \in S^2$. For a given dataset $D$ and a circle $C$ on $S^2$, let $\delta(D,\, C)$ be the sum of squares of distances between circle and data, defined as
\[
\delta(D,\, C) = \sum_{x\in D} d_{Geo}\big(x,\,  \mbox{proj}_C(x)\big)^{2}, 
\]
where $\mbox{proj}_C(x)$ denotes a projection of $x$ on $C$. The goal is to find a circle $C$ on $S^2$ that minimizes $\delta(D,\, C)$. To solve this optimization problem, we represent a circle $C$ by a center $c$ of the circle and a radius $r\in [0,\, \pi]$, the geodesic distance between the center $c$ and the circle $C$. This representation is not unique \citep{Jung2011}. For example, let $c^\prime \in S^2$ be the antipodal point of $c$ that is diametrically opposite to $c$ on $S^2$, then $(c,\, r)$ and $(c^\prime,\, \pi-r)$ represent the same circle $C$. Nevertheless, it is not crucial to the optimization problem because we simply find a representation of the least square circle. By using a spherical coordinate system, it is able to parameterize $c$ as $(\theta,\, \rho)$, where $\theta$ denotes the azimuthal angle and $\rho$ is the polar angle. By symmetry of the circle, $d_{Geo}\big(x,\, \mbox{proj}_C(x)\big)$ can be easily calculated by
\[
d_{Geo}\big(x,\, \mbox{proj}_C(x)\big) = d_{Geo}(x,\, c) - r.
\]
Thus, we have  
\begin{equation}
\label{ss1}
\delta(D,\, C) = \sum_{x\in D} \big(d_{Geo}(x,\, c) - r\big)^{2}.
\end{equation}

With letting $c = (\theta_{c},\, \rho_{c})$ and $x = (\theta_{x},\, \rho_{x})$ in the spherical coordinate system, the geodesic distance $d_{Geo}(x,\, c)$ is given by the spherical law of cosines with three points $c$, $x$, and the polar point (see Lemma 3 in \Cref{proof:stationarity} below for details)
\begin{equation}
\label{dgeo1}
d_{Geo}(x,\, c)=\arccos\big(\cos\rho_{c}\cos\rho_{x} +\sin\rho_{c}\sin\rho_{x}\cos(\theta_{c} - \theta_{x})\big).
\end{equation}
By putting (\ref{dgeo1}) into (\ref{ss1}), it follows that $\delta(D,\, C)$ is represented as a three-parameter differentiable function $\delta_{D}(\theta_{c},\, \rho_{c},\, r)$ in domain $[0,\, 2\pi] \times [0,\, \pi] \times [0,\, \pi]$ as follows,
\begin{eqnarray}
\label{delta1}
\delta_{D}(\theta_c,\, \rho_c,\, r) = \sum_{x\in D} \big(\arccos\big(\cos\rho_c\cos\rho_x + \sin\rho_c \cdot \sin\rho_x\cos(\theta_c - \theta_x)\big)-r \big)^2.
\end{eqnarray}
Since $[0,\, 2\pi]\times [0,\, \pi] \times [0,\, \pi]$ is compact, the function $\delta_{D}(\theta_{c},\, \rho_{c},\, r)$ holds a global minimum value. Thus, it can apply the gradient descent method to find the solution. Here is the algorithm to find a principal circle from the above description.
\begin{algorithm}[H]
	\caption{~~Exact Principal Circle by gradient descent}
	\label{alg1}
	\begin{algorithmic}
		\State Initialize $(\theta_{c},\, \rho_{c},\, r)$ as ($\overline{\theta},\, \overline{\rho},\, \pi/2$)
		\While{ ($\Delta \delta(D,\, C) \ge \mbox{threshold}$) } 
		\State $(\theta_{c},\, \rho_{c},\, r) \leftarrow (\theta_{c}, \rho_{c},\, r) - \beta \nabla \delta_{D}(\theta_{c},\, \rho_{c},\, r) $
		\EndWhile  
	\end{algorithmic}
\end{algorithm}
As in many nonlinear least-square algorithms, such as Gauss-Newton algorithm and Levenberg-Marquardt algorithm (see Ch 4 of \cite{Scales} for details), the above Algorithm 1 may converge to a local minimum or a saddle point instead of the global minimum, since $\delta_D(\theta_c,\, \rho_c,\, r)$ is non-convex. Thus, initial values should be selected carefully. If the data points in $D$ are not too apart and localized, then it is reasonable to choose $(\theta_x,\, \rho_x,\, \pi/2)$ for some $x\in D$ as an initial. The spherical coordinates of the intrinsic mean of $D$ with radius $r=\pi/2$, denoted by ($\overline{\theta},\, \overline{\rho},\, \pi/2)$, if necessary with varying $r\in [0,\, \pi]$, is also recommended as initial values. In the case of a non-localized data set, one can implement the algorithm with various initial settings as much as one wants, compare the consequences of $\delta$, and finally choose the circle with the lowest $\delta$ as the principal circle. Note that, in existing methods for fitting circles to data on spheres, such as \cite{Gray, Jung2011}, and \cite{Jung2012}, there are no assurances that their algorithms finally achieve the circle minimizing (\ref{ss1}). Although $\delta$ is not convex globally, it is convex on a neighborhood of a global minimum point. Hence, it is reasonably expected that if an initial value is suitably close to an optimum point, then Algorithm 1 converges to the optimum. A specification about the neighborhood for which $\delta$ is convex, and rigorous proof for convergence of \Cref{alg1} on that neighborhood remains a challenge. In the real data analysis and the simulated studies later on \Cref{numerical:experiment}, however, implementations of Algorithm 1 with several initial values result in almost the same principal circles and converge rapidly. Thus, there are no practical difficulties in our experiments. In addition, $\beta$ is the step size of Algorithm 1, and it relies on the dataset $D$. The algorithm may diverge when $\beta$ is large (e.g., greater than .01). In simulated examples and real data on \Cref{numerical:experiment}, we use .001. Since too small $\beta$ causes computational time to be high, an appropriate $\beta$ should be selected properly throughout experiments from a relatively larger value of $\beta$ to the lower one.

\subsection{Extension to Hyperspheres}\label{hyper}
In the case of high-dimensional spheres, to find a one-dimensional circle that attempts to represent a given data closely, we provide both extrinsic and intrinsic ways. The former is easy to implement and more computationally feasible because it uses an extrinsic approach and is not  exactly found. The latter directly extends the exact principal circle in the previous section into higher-dimensional spheres using the framework of \textit{principal nested spheres} \citep{Jung2012}; however, it takes time to compute compared to the former approach.

\subsubsection{Circle as an Initialization}
 Later in \Cref{simul:S^4}, we will use the following extrinsic method as an initial estimate of the spherical principal curves for waveform simulated data on $S^4$. Specifically, we consider $S^d=\{y=(y_1,\, y_2,\, ...,\, y_{d+1})\in \mathbb{R}^{d+1} \ | \  \sum_{i=1}^{d+1} y_{i}^2=1\}$ for $d\ge 2$, as an embedded surface in the ambient space $\mathbb{R}^{d+1}$. That is, $\left\{x_i \right\}_{i=1}^n \subset S^d \hookrightarrow \mathbb{R}^{d+1}$ are regarded as elements in $\mathbb{R}^{d+1}$, not taking into account a nonlinear dependence of the data; though, ensuring lower computational complexity. Note that any one-dimensional circle on $S^d$ is an intersection of a two-dimensional plane and $S^d$. Hence, the strategy is to find the 2-plane $P \subset \mathbb{R}^{d+1}$ that closely represents the data $\left\{x_i \right\}_{i=1}^n$ with respect to the standard distance in $\mathbb{R}^{d+1}$, rather than geodesic distance in $S^d$. That is, the plane $P$ is the two-dimensional vector subspace of $\mathbb{R}^{d+1}$ spanned by first two principal components of the data, and then $P\cap S^d$ is a one-dimensional circle to find. Although the extrinsic circle is capable of approximating the meaningful data, there may be some instances that need more precise initial estimate for the data.

\subsubsection{Exact Principal Circle}\label{hypercircle}
For a better initial guess of the proposed principal curves, we provide an exact principal circle on $S^d=\{y=(y_1,\, y_2,\, ...,\, y_{d+1})\in \mathbb{R}^{d+1} \ | \  \sum_{i=1}^{d+1} y_{i}^2=1\}$ for $d\ge 3$. The arguments in the \Cref{exact} can be applied to higher-dimensional spheres $S^d$ for $d\ge 3$ if the geodesic distance of Equation (\ref{ss1}) can be precisely calculated. To this end, let $D=\left\{x_i\right\}_{i=1}^n$ be a dataset on $S^d$, and denote a $(d-1)$-dimensional subsphere on $S^d$ as $C$. Using a spherical coordinates for $S^d$, $x=(x_1,\, x_2,\, ...,\, x_d,\, x_{d+1})\in S^d\subset \mathbb{R}^{d+1}$ can be parametrized as 
\begin{eqnarray*}
x_1 &=& \cos(\varphi_1) \\ 
x_2 &=& \sin(\varphi_1) \cos(\varphi_2), \\
x_3 &=& \sin(\varphi_1) \sin(\varphi_2) \cos(\varphi_3) \\ 
\vdots & & \\
x_d &=& \sin(\varphi_1)\cdot \cdot \cdot \sin(\varphi_{d-1})\cdot \cos(\varphi_d) \\
x_{d+1} &=& \sin(\varphi_1)\cdot \cdot \cdot \sin (\varphi_{d-1}) \cdot \sin(\varphi_d), 
\end{eqnarray*}
where $\varphi_1,\, \varphi_2,\, \cdots, \varphi_{d-1},\, \varphi_d$ are angular coordinates with $\varphi_{d}\in [0,\, 2\pi)$ and the others ranging over $[0,\, \pi)$. Note that $d_{Geo}(x,\, c) = \arccos(x\cdot c)$, where $\cdot$ denotes the (standard) inner product in $\mathbb{R}^{d+1}$. Thus,
\begin{eqnarray}
\label{dgeo2}
d_{Geo}(x,\, c)=\arccos\big(\cos(\varphi_{1c})\cos(\varphi_{1x}) + \sum_{k=1}^{d-2} \big[\prod_{i=1}^k \sin(\varphi_{ic})\sin(\varphi_{ix})\big]\cdot\cos(\varphi_{(k+1)c})\cos(\varphi_{(k+1)x}) \nonumber \\
+   \big[\prod_{i=1}^{d-1}\sin(\varphi_{dc})\sin(\varphi_{dx})\big] \cdot \cos(\varphi_{dc}-\varphi_{dx}) \big), \nonumber \\
\end{eqnarray}
where $\left\{\varphi_{ic}\right\}_{i=1}^d$ and $\left\{\varphi_{ix} \right\}_{i=1}^d$ are the corresponding angular coordinates of $c$ and $x$, respectively. By putting (\ref{dgeo2}) into (\ref{ss1}), it follows that $\delta(D,\, C)$ is represented as a $(n+1)$-parameter differentiable function $\delta_{D}(\varphi_{1c},\, ...,\, \varphi_{dc},\, r)$ in domain $[0,\, \pi]^{d-1}\times[0,\, 2\pi]\times[0,\, \pi]$ as follows, 
\begin{align}
\label{delta2}
\delta_{D}(\varphi_{1c},\, \ldots,\, \varphi_{dc},\, r)=\sum_{x\in D} \bigg(\arccos&\Big(\cos(\varphi_{1c})\cos(\varphi_{1x}) \nonumber \\ 
&+\sum_{k=1}^{d-2} \big[\prod_{i=1}^k \sin(\varphi_{ic})\sin(\varphi_{ix})\big]\cdot\cos(\varphi_{(k+1)c})\cos(\varphi_{(k+1)x}) \nonumber \\
&+\big[\prod_{i=1}^{d-1}\sin(\varphi_{ic})\sin(\varphi_{ix})\big] \cdot \cos(\varphi_{dc} - \varphi_{dx})\Big) -r\bigg)^2.
\end{align}
Note that, in the case of $d=2$, the above equation (\ref{delta2}) becomes (\ref{delta1}). $\delta_{D}$ holds a global minimum value due to the compactness of the domain $[0,\, \pi]^{d-1}\times[0,\, 2\pi]\times[0,\, \pi]$. Therefore, an exact principal circle on $S^d$ can be obtained by gradient descent, the same way in Algorithm \ref{alg1}, except that the number of parameters is $d+1$. Let $(\overline{\varphi_1},\,  \overline{\varphi_2},\, \ldots,\, \overline{\varphi_d})$ denote the spherical coordinates of the intrinsic mean of $D$. Here is the algorithm to find a principal circle on $S^d$.
\begin{algorithm}[H]
	\caption{~~Exact Principal Nested Sphere on Hypersphere $S^d$}
	\label{alg2}
	\begin{algorithmic}
		\State Initialize $(\varphi_{1c},\, \varphi_{2c},\, ...,\, \varphi_{dc},\, r)$ as ($\overline{\varphi_1},\, \overline{\varphi_2},\, ...,\, \overline{\varphi_d},\, \pi/2$)
		\While{ ($\Delta \delta(D,\, C) \ge \mbox{threshold}$) } 
		\State $(\varphi_{1c},\, \varphi_{2c},\, \ldots,\, \varphi_{dc},\, r) \leftarrow (\varphi_{1c},\, \varphi_{2c},\, \ldots,\, \varphi_{dc},\, r) - \beta \nabla{\delta_{D}} (\varphi_{1c},\, \varphi_{2c},\, ...,\, \varphi_{dc},\, r)$. 
		\EndWhile  
	\end{algorithmic}
\end{algorithm}
It is possible that Algorithm \ref{alg2} converges to a local minimum or a saddle point of $\delta_{D}$, owing to its non-convexity. Therefore, an initial value should be carefully chosen, for instance, a data point in $D$ and the intrinsic mean of $D$. The discussions about initial values and step size $\beta$ are the same as those of Algorithm \ref{alg1}. 

By applying the Algorithm \ref{alg2} to a given data iteratively, we can obtain a one-dimensional sphere, \textit{i.e.,} an exact principal circle on $S^d$ that can be the initialization of the spherical principal curves. For more details about the procedure, see \cite{Jung2012}. It is noteworthy that from the perspective of the principal nested spheres, our method can be applied to find nested spheres in an \textit{exact} way.

%Before closing this section, some useful non-convex optimization techniques, described in \cite{jain2017non} and \cite{zhang2020symmetry}, have been recently developed to deal with the non-convex objective functions. The optimization techniques such as stochastic gradient descent, alternating minimization, and their variations in \cite{jain2017non} can be applied to our problem. Specific applications of these approaches and related theoretical investigations remain as future work.

\section{Proposed Principal Curves}\label{proposed}
This section presents our new \textit{exact} principal curves on $d$-sphere $S^d$ for $d\ge 2$ from both intrinsic and extrinsic perspectives. We further investigate the stationarity of the proposed principal curves. 

\subsection{Exact Projection Step on $S^d$}
As mentioned in \Cref{sec:hauberg}, the approach of \cite{Hauberg} does not perform the exact projections onto curves. On the other hand, the exact projections on $S^d$ for $d\ge 2$ are carried out in our method, which results in more elaborated principal curves. To this end, we parameterize the curve as a set of $T$ points joined by geodesics as in \cite{Hauberg}. Specifically, we first project the data point to each geodesic segment of the curve and then obtain the exact projection on the curve by choosing the closest geodesic segment. Let $\lambda_f(x)$ be the projection index of a point $x$ to the curve $f(\lambda)$ for $\lambda \in [0,\, 1]$, 
\begin{align}\label{eq:proj}
    \lambda_f(x) = \argmin_\lambda d_{Geo}\big(x,\, f(\lambda)\big).
\end{align} 
The projection of $x$ onto the curve can be obtained as $f\big(\lambda_f(x)\big)$.

The following subsections describe a procedure for projecting a point onto a geodesic segment on $S^d$. Given $A$,\, $B$,\, $C \in S^{d}\subset \mathbb{R}^{d+1}$, we find the closest point to $C$ on the geodesic segment joining $A$ and $B$. When $A=B$, the process is obvious, and in the case of $A=-B\in \mathbb{R}^{d+1}$, there is no unique geodesic connecting $A$ and $B$. Hence, we only consider the case that $A$ and $B$ are linearly independent, \textit{i.e.}, $(A\cdot B)^2\neq 1$, where $\cdot$ denotes the dot product in $\mathbb{R}^{d+1}$. We first deal with the projection on $S^2$ and then extend it into hyperspherical cases.

\subsubsection{Projection on $S^2$}\label{sec:s2}

\begin{figure*}[!ht]
    \centering
    \begin{subfigure}[b]{0.27\textwidth}
        \includegraphics[width=\textwidth]{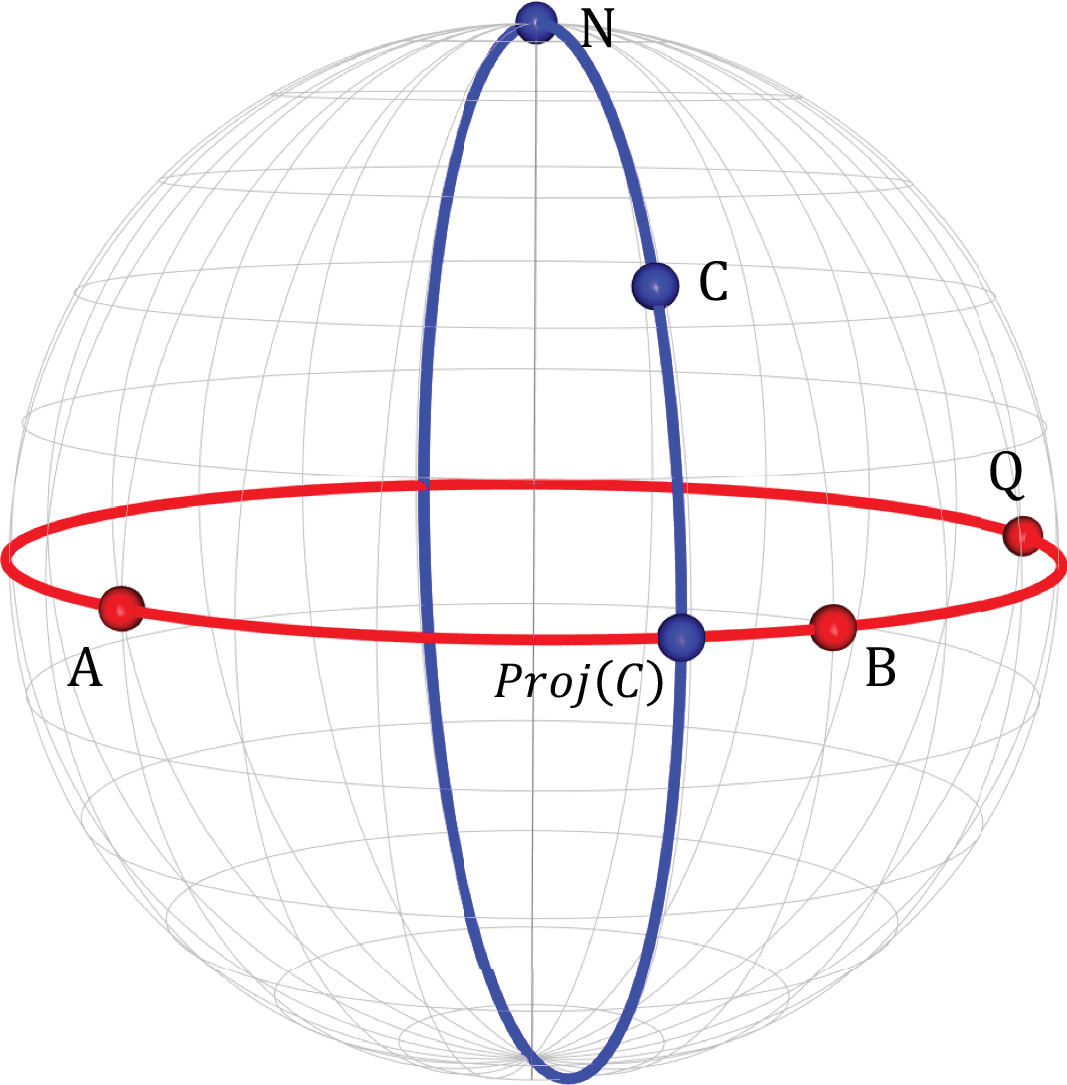}
        \caption{}
        \label{fig:proj(a)}
    \end{subfigure}
        \hspace{1cm}
    \begin{subfigure}[b]{0.27\textwidth}
        \includegraphics[width=\textwidth]{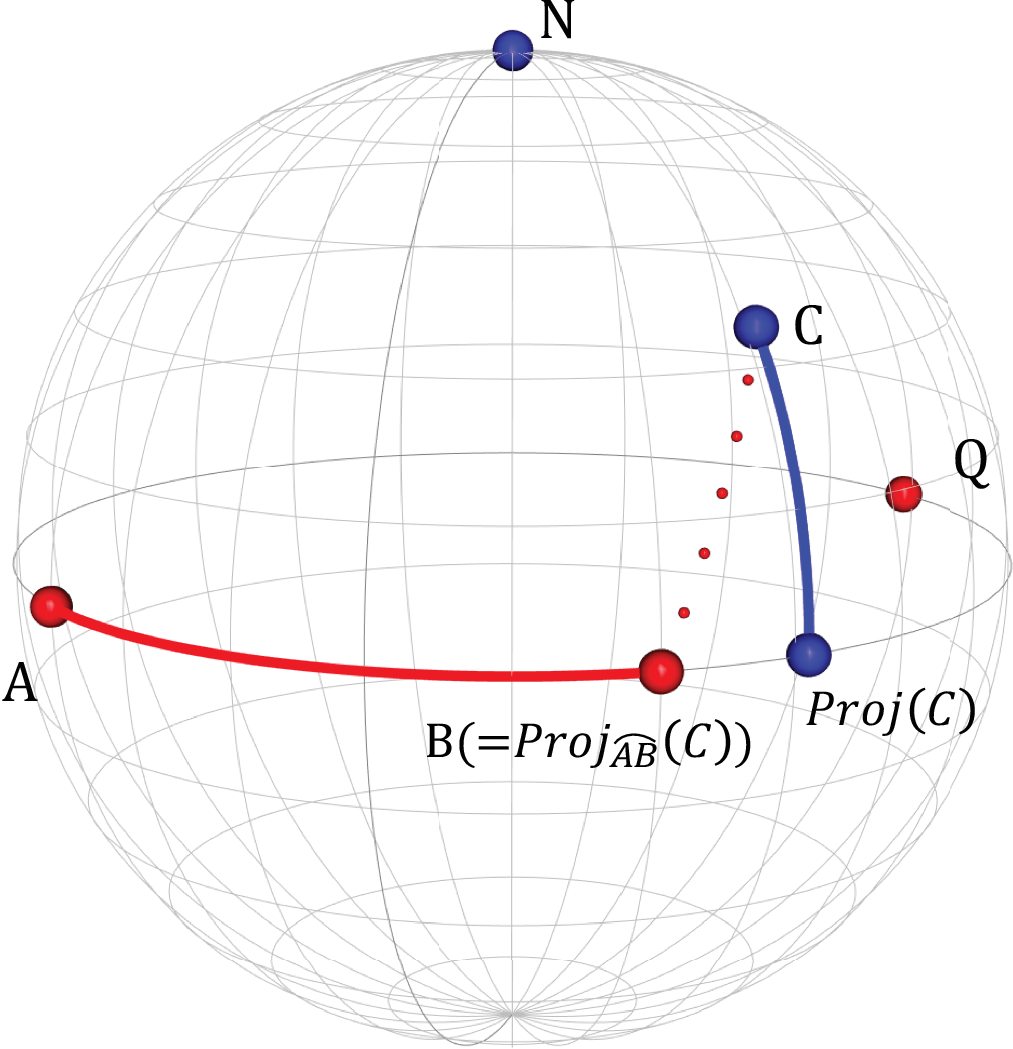}
        \caption{}
        \label{fig:proj(b)}
    \end{subfigure}
        \hspace{1cm}
    \begin{subfigure}[b]{0.27\textwidth}
        \includegraphics[width=\textwidth]{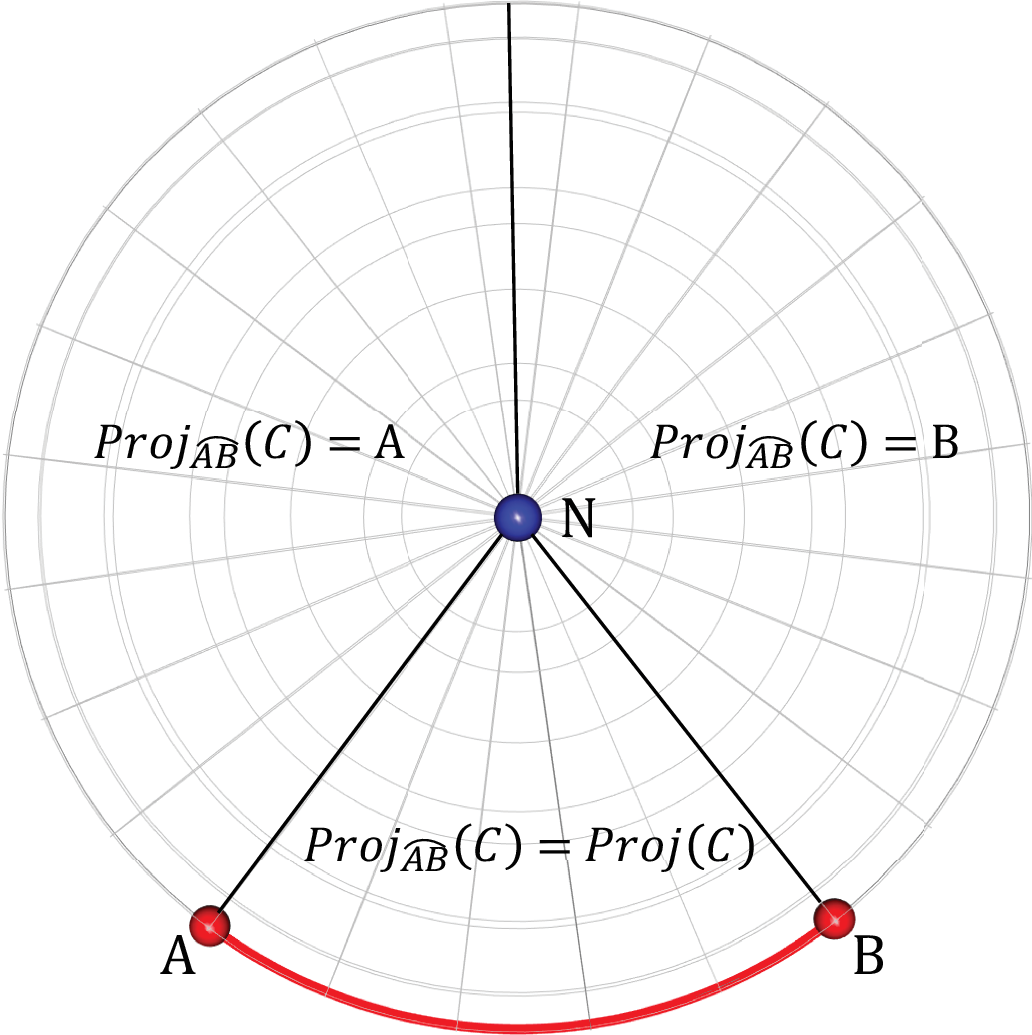}
        \caption{}
        \label{fig:proj(c)}
    \end{subfigure}
    \caption{Illustration of the projection procedure on $S^2$. 
(a) The case that $C$ is projected inside $\wideparen{AB}$, \textit{i.e.}, $\mbox{proj}_{\wideparen{AB}}(C)=\mbox{proj}(C)$ and $I\ge 0$. The projection of $C$ is an intersection point of two great circles. (b) The case that $C$ is projected onto $B$ in a non-orthogonal way (red dotted line), \textit{i.e.}, $\mbox{proj}(C)\ne \mbox{proj}_{\wideparen{AB}}(C)=B$ and $I<0$. (c) An image of the sphere viewed from above the Northern Hemisphere in the projection of $C$. 
}
    \label{fig:proj}
\end{figure*}
 
Before describing the projection procedure on $S^2$, it is important to notice that $(A\cdot B)^2 \neq 1$ is equivalent to $A\times B\neq 0$, where $\times$ denotes the cross product in $\mathbb{R}^3$. In addition, if $A\times B/\norm{A\times B}=\pm C$, then any points on geodesic through $A$ and $B$ have the same distance from $C$. From now on, we assume $A\times B/\norm{A \times B}\ne \pm C$.

\Cref{fig:proj} shows the projection procedure. We define the North Pole $N$ concerning $A$ and $B$ as $N=\frac{A\times B}{\norm{A \times B}} \in S^2$ and a center $Q$ of the great circle through $N$ and $C$ as $Q= \frac{N\times C}{\norm{N\times C}} \in S^2$ that is contained in the great circle through $A$ and $B$. Then, the projection of $C$ onto the great circle through $A$ and $B$, $\mbox{proj}(C)$, becomes an intersection point of two great circles, as shown in \Cref{fig:proj(a)}, 
\begin{equation*}
   \mbox{proj}(C) = Q\times N = \frac{(A\times B) \times C}{{\norm{(A \times B)\times C}}}\times \frac{(A\times B)}{\norm{A\times B}} \in S^2.
\end{equation*}
Note that $\mbox{proj}(C)$ is not always included in the geodesic segment $\wideparen{AB}$ joining $A$ and $B$ as \Cref{fig:proj(b)}. For this reason, we define an indicator $I=-\big(A-\mbox{proj}(C)\big)\cdot \big(B-\mbox{proj}(C)\big)$, indicating whether $\mbox{proj}(C)$ is inside $\wideparen{AB}$ or not, \textit{i.e.,} orthogonally projected onto $\wideparen{AB}$ or not. Finally, the projection of $C$ onto $\wideparen{AB}$, $\mbox{proj}_{\wideparen{AB}}(C)$, is 
\begin{equation*}
    \mbox{proj}_{\wideparen{AB}}(C)=
    \begin{cases}
     \mbox{proj}(C), &\mbox{if }I\ge 0 \\
     \argmin_{E\in \left\{A,\, B \right\}} d_{Geo}(C,\, E), & \mbox{if }I<0. 
    \end{cases}
\end{equation*}

\subsubsection{Projection on Hypersphere}
For $A,\, B,\, C \in S^{d}\subset \mathbb{R}^{d+1}$, if $B\cdot C=C\cdot A=0$, then all points on $\wideparen{AB}$ have the same geodesic distance of $\pi/2$ from $C$, which is verified in \Cref{proof:proj}; hence, assume that $A$, $B$, and $C$ do not satisfy $B\cdot C=C\cdot A=0$. Let $V$ be a two-dimensional vector space in $\mathbb{R}^{d+1}$ spanned by $A$ and $B$.

\begin{figure}[!ht]
  \centering
  \includegraphics[scale=0.30]{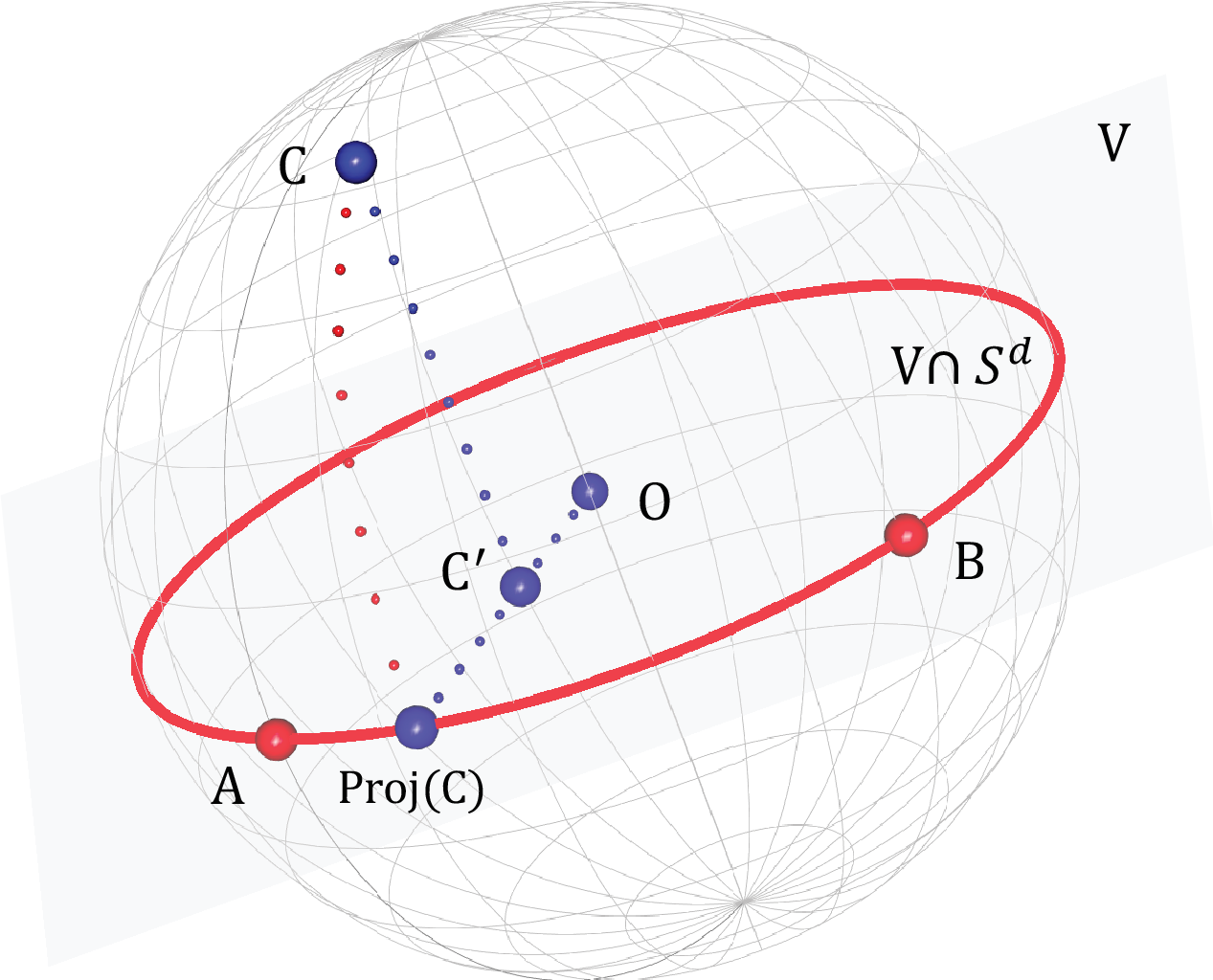}
  \caption{Illustration of the projection procedure on $S^d$.}
  \label{fighyper}
\end{figure}

As shown in \Cref{fighyper}, we aim to find the projection of $C$ onto $V\cap S^d$, $\mbox{proj}(C)$, by following two steps: (Step 1) Locate the projection of $C$ onto $V$, $C'$. (Step 2) Find the projection of $C'$ onto $V\cap S^d$. Note that the resulting projection is equivalent to the projection of $C$ onto $V\cap S^d$, $\mbox{proj}(C)$. The rigorous justification of the above procedure is provided in \Cref{proof:proj}.

(Step 1): We find the closest point $C'\in V$ from $C$. Let $C' = \mu A + \lambda B$ for $\mu,\, \lambda \in \mathbb{R}$. Then $C'$ should satisfy the orthogonal condition, $(C-C')\cdot A=(C-C')\cdot B=0$. By plugging the equation $C'=\mu A+\lambda B$ into the above condition and solving the systems of linear equations with respect to $\mu$ and $\lambda$, it follows that 
\[
C'=\frac{C\cdot A-(A\cdot B)(B\cdot C)}{1-(A\cdot B)^2}A + \frac{B\cdot C-(A\cdot B)(C\cdot A)}{1-(A\cdot B)^2}B,
\]
where the denominator is non-zero and $C'\neq 0 \in \mathbb{R}^{d+1}$ because of the assumptions; $(A\cdot B)^2\neq 1$, and $A$, $B$, and $C$ do not satisfy $B\cdot C=C\cdot A=0$.

(Step 2): The projection of $C'$ onto $V\cap S^d$, $\mbox{proj}(C)$, is obtained by just normalizing $C'$ so that it is in $S^d$. Therefore, we have 
\begin{eqnarray*}
     \mbox{proj}(C) = \frac{C'}{\norm{C'}} = \frac{\big(C\cdot A-(A\cdot B)(B\cdot C)\big)A+ \big(B\cdot C - (A\cdot B)(C\cdot A)\big)B}{\norm{\big(C\cdot A-(A\cdot B)(B\cdot C)\big)A+ \big(B\cdot C - (A\cdot B)(C\cdot A)\big)B}}.
\end{eqnarray*}

Similarly, we define the indicator  $I=-\big(A-\mbox{proj}(C)\big)\cdot \big(B-\mbox{proj}(C)\big)$ to find the projection of $C$ onto $\wideparen{AB}$, $\mbox{proj}_{\wideparen{AB}}(C)$. Due to the fact that $A$, $B$, and $\mbox{proj}(C)$ are in the one-dimensional unit circle $V\cap S^d$, we obtain $I\ne 0$ unless $\mbox{proj}(C)=A$ or $B$. Since $I$ is continuous with respect to $\mbox{proj}(C) \in V\cap S^d$, it indicates that whether $\mbox{proj}(C)$ is in $\wideparen{AB}$ or not. We finally obtain $\mbox{proj}_{\wideparen{AB}}(C)$ as
\begin{equation*}
    \mbox{proj}_{\wideparen{AB}}(C)=
    \begin{cases}
     \mbox{proj}(C), &\mbox{if }I\ge0 \\
     \argmin_{E\in \left\{A,B \right\}} d_{Geo}(C,\, E), & \mbox{if }I<0. 
    \end{cases}
\end{equation*}
Note that the distance between $C$ and $\wideparen{AB}$ is the geodesic distance from $C$ to $\mbox{proj}_{\wideparen{AB}}(C)$, which can be calculated as 
\begin{equation}
\label{dgeo}
d_{Geo}(C,\, \mbox{proj}_{\wideparen{AB}}(C))=\arccos(C\cdot \mbox{proj}_{\wideparen{AB}}(C)).
\end{equation}

\subsection{Expectation Step on $S^d$}
The expectation step follows the principal curve of \cite{Hauberg}, \textit{i.e.,} updates the weighted average with smoothing that makes the curve closer to the self-consistency condition. Suppose that we have $n$ data points $D = \{x_i\}_{i=1}^n$ and the corresponding projection indices $\{\lambda_i\}_{i=1}^n$, where $\lambda_i=\lambda_f(x_i)$ for $i=1,\, \ldots,\, n$. Let $T$ denote the number of points of an initial curve. Then, the local weighted smoother iteratively updates the $t^{\text{th}}$ point of the principal curve, $C_t$, with the weighted mean of data points. In this study, we use a quadratic kernel $k(\lambda) = (1-\lambda^2)^2\cdot \delta_{|\lambda|\le 1}$, as \cite{Hauberg}, and the weight of each data point is given by $w_{t,\, i}= k(|\lambda_f(C_t) - \lambda_i| / \sigma)$, where $\sigma = q\cdot (\mbox{length\ of}\ f)$.

\subsubsection{Extrinsic Approach}
The extrinsic mean on $S^d$ can be calculated by considering the canonical embedding $S^d \hookrightarrow \mathbb{R}^{d+1}$. Specifically, for a curve $f = \{C_1,\, ...,\, C_{T}\}$ and each point $C_{t}$, the extrinsic mean is obtained by averaging the data points represented in Euclidean coordinates as
\begin{align}
    m_t(D,\, f) =\sum_{i=1}^n w_{t,\, i}x_i / \lVert \sum_{i=1}^n w_{t,i}x_i \rVert, ~~t=1,\, \ldots,\, T
\end{align}
where $\norm{\cdot}$ is the standard norm in $\mathbb{R}^{d+1}$. Then $C_{t}$ is updated by $m_t(D,\, f)$. The extrinsic approach is advantageous in terms of the computational complexity compared to the intrinsic approach. Furthermore, the extrinsic way ensures the stationarity of the principal curves on hyperspheres $S^d$ for $d\ge 2$, which will be discussed in \Cref{sec:stationarity}.

\subsubsection{Intrinsic Approach}
From the intrinsic perspective, the weighted mean of data points can be obtained by the optimization 
\begin{align}
\label{Imean}
    m_t(D,\, f) = \argmin_{x}\sum_{i=1}^n w_{t,\, i} d^2_{Geo}(x,\, x_i),~~ t=1,\, \ldots,\, T,
\end{align}
and then each $C_t$ is updated by $m_t(D,\, f)$. The intrinsic mean exists uniquely if the points are in an open hemisphere of $S^d$, \textit{i.e.,} $\exists p\in S^d$ \textit{s.t.} $d_{Geo}(x_{i},\, p) < \frac{\pi}{2}$ for $1\le i\le n$ \cite{Buss}. Since the intrinsic mean cannot be obtained in a closed form, to solve \Cref{Imean}, algorithms based on tangent space approximation, such as \cite{Buss, Fletcher2004}, can be used.  

Before closing this section, as an alternative measure of the centrality of data, the geometric median can be considered to robustify the principal curves for a dataset that might contain outliers instead of the extrinsic or intrinsic mean. Median principal curves and their associated characteristics can be developed along with the same line of our procedure. Because of the limitation of space, this part is not discussed in the current paper.

\subsection{Algorithm}
\subsubsection{Initialization}
For a better estimation of principal curves, we initialize a principal curve as an exact principal circle on $d$-sphere $S^d$. The detailed descriptions of the circle and its algorithm were previously provided in \Cref{circle}.

\subsubsection{Spherical Principal Curves}
The proposed spherical principal curves on $S^d$ can be obtained by \Cref{alg:pc} below. 
\begin{algorithm}[H]
    %\setstretch{1.15}
	\caption{~~Spherical Principal Curves}
    \label{alg:pc}
	\begin{algorithmic}
		\State Initialize curve $f = \{C_1,\, \ldots,\, C_{T}\}$.
		\State Parameterize the curve as $f(\lambda)$ by unit speed.
		\State Calculate $\lambda_f(x_i)$ in \Cref{eq:proj} for $i = 1,\, \ldots,\, n$. 
		\State Calculate errors $\delta(D,\, f) = \sum_{i=1}^n d^2_{Geo}\bigr(x_i,\,f\big(\lambda_f(x_i)\big)\bigr)$. 
		
		\While{ ($\Delta \delta(D,\, f) \ge \mbox{threshold}$) } 
		\State (Expectation) $C_t \leftarrow m_t(D,\, f)$ for $t=1,\, \ldots,\, T$. 
		%\State Set curve $f = \{C_1,\, ...,\, C_{T}\}$.
		\State Reparameterize the curve by unit speed.
		\State (Projection) Calculate $\lambda_f(x_i)$ for $i = 1,\, \ldots,\, n$.
		\State Calculate $\delta(D,\, f) = \sum_{i=1}^n d^2_{Geo}\bigr(x_i,\, f\big(\lambda_f(x_i)\big)\bigr)$. 
		\EndWhile  
	\end{algorithmic}
\end{algorithm}
Note that $d^2_{Geo}\bigr(x_i,\, f\big(\lambda_f(x_i)\big)\bigr)$ is calculated by (\ref{dgeo}). 
As far as Euclidean space is concerned as embedding space, the extrinsic approach is advantageous for computational efficiency \citep{Bhattacharya2012}. However, if the data points are not contained within local regions at the expectation step, the intrinsic method may have better performances than the extrinsic one. Furthermore, the intrinsic approach can be attractive because of its inherent metric.

\subsection{Stationarity of Principal Curves}\label{sec:stationarity}
For a random vector $X$ in $\mathbb{R}^d$, $d\in \mathbb{N}$, the stationarity of the principal curve of $X$ is given by \cite{Hastie} as 
\begin{equation}
\label{eq:stat}
\frac{\partial \mathbb{E}_X [d^2(X,f+\epsilon g)]}{\partial\epsilon} \biggr|_{\epsilon = 0} = 0,  
\end{equation}
where $f$ and $g$ are smooth curves in $\mathbb{R}^d$ satisfying $\norm{g}\le 1$ and $\norm{g^\prime}\le 1 $, and $d(X,\, f)$ denotes the (Euclidean) distance from $X$ to the curve $f$.

However, since spheres are not vector spaces such as $\mathbb{R}^d$, additions are not directly defined on spheres. Thus, it is necessary to redefine some concepts, such as addition and perturbation, in order to extend the properties of the principal curves in Euclidean space to spheres. To this end, we conversely consider $f+g$ instead of $g$. Specifically, let $f$ and $f+g$ be smooth curves on $d$-sphere parameterized with $\lambda \in [0,\, 1]$. Then, we define $f+\epsilon g$ in a pointwise sense as follows. 

\begin{definition}
For $a, b\in S^d$ and $\epsilon\in [0,\, 1]$, div($a,\, b,\, \epsilon$) is a set of points on geodesics between $a$ and $b$ satisfying $\forall c\in div(a,\, b,\, \epsilon$), $d_{Geo}(a,\, c) = \epsilon d_{Geo}(a,\, b)$ and $c$ is on a geodesic between $a$ and $b$.
\end{definition}
Note that if $d_{Geo}(a,\, b)< \pi$, then the geodesic between $a$ and $b$ on $S^d$ is unique. In this case, $div(a,\, b,\, \epsilon)$ is a single point set and $div(a,\, b,\, -\epsilon)$ can be defined as a reflection of $div(a,\, b,\, \epsilon)$ with respect to $a$.

\begin{definition}
Let $f$ and $f+g$ be smooth curves on $S^d$ parameterized with $\lambda \in [0,\, 1]$ satisfying $\norm{g} < \pi$, where $\norm{g}:= \max_{\lambda \in [0,\, 1]} d_{Geo}\big(f(\lambda),\,  (f+g)(\lambda)\bigr)$. Then, for $\epsilon\in [-1,\, 1]$, $f+\epsilon g$ is a curve on $S^d$, where $(f+\epsilon g)(\lambda) = div(f(\lambda),\, (f+g)(\lambda),\, \epsilon)$,\, $\forall \lambda \in [0,\, 1]$.
\end{definition}
Note that $f+\epsilon g$ is a smooth curve on $S^d$. For a detailed proof, refer to the proposition 1 in \Cref{proof:stationarity}. Let $X$ be a random vector on $S^d$ that has a probability density function. Then, we call $f$ as an \textit{extrinsic principal curve} of $X$, if $f$ is self-consistent with $X$ in the embedding space as
\[
\pi\big(\mathbb{E} [\xi(X) \ | \ \lambda_f(X)=\lambda]\big) = f(\lambda)\ ~\mbox{for}~\mbox{\textit{a.e.}}~\lambda,
\]
where $\xi: S^d \rightarrow \mathbb{R}^{d+1}$ is the canonical embedding and $\pi: \mathbb{R}^{d+1} \setminus \{0\} \to S^d$ by $X \to \frac{X}{\norm{X}}$ is the standard projection (retraction) from $\mathbb{R}^{d+1}$ to $S^d$. In analogy to \Cref{eq:stat}, we provide the following theorem on spheres. Note that $\cdot$ represents the standard inner product on $\mathbb{R}^{d+1}$ and $d_{Geo}(X,\, f+\epsilon g)$ denotes the geodesic distance from $X$ to the curve $f+\epsilon g$.

\begin{definition}
 $\norm{g'}:=\max_{\lambda \in [0,\, 1]} \norm{g'(\lambda)}$, where $\norm{g'(\lambda)} = \max_{\epsilon \in [0,\, 1]} \norm{\frac{\partial^2 (f + \epsilon g)(\lambda)}{\partial \lambda \partial \epsilon}}$.
\end{definition}

\begin{theorem}
Let $f$, $f+g: [0,\, 1] \to S^{d}$, $d\ge 2$ be smooth curves satisfying $\norm{g} < \pi$ and $\norm{g'}\le 1$. Let $X$ be a random vector on $S^2$ or a random vector on $S^d$, $d\ge 3$ with $X\in C(\zeta)$, where $C(\zeta) := \{x \in S^d \ | \ |f''(\lambda_{f}(x))\cdot x| > \zeta \}$ for a small $\zeta > 0$. Then $f$ is an extrinsic principal curve of $X$ if and only if
    \begin{equation}
    \label{eq:thm1}
        \frac{\partial \mathbb{E}_X[\cos\big(d_{Geo}(X,\, f+\epsilon g)\big)]}{\partial\epsilon} \biggr|_{\epsilon = 0} = 0. 
    \end{equation}
\end{theorem}
\begin{proof}
    See \Cref{proof:stationarity}.
\end{proof}
Note that since $2-2\cos x \approx x^2$ for small $x$, \Cref{eq:thm1} can be interpreted as an analogy of  \Cref{eq:stat}.

We further consider the intrinsic perspective of the stationarity. We define a curve $f$ as an \textit{intrinsic principal curve} of $X$ if the intrinsic mean of $X$ conditioned on $\lambda_{f}(X)=\lambda$ is equal to $f(\lambda)$ for \textit{a.e.} $\lambda$, 
\[
\mathbb{E}_{int}[X \ | \ \lambda_f(X)=\lambda] = f(\lambda)\ ~\mbox{for}~\mbox{\textit{a.e.}}~\lambda,
\]
where $\mathbb{E}_{int}[\cdot]$ represents an intrinsic mean of a random variable on $S^d$.

Note that the intrinsic mean of a random variable $Y$ on $S^d$ is unique if $d_{Geo}(Y,\, p)< \frac{\pi}{2}$ \textit{a.s.} for $\exists p\in S^d$, \textit{i.e.,} the support of $Y$ is in an open hemisphere \citep{Pennec}. We verify that the intrinsic principal curves on $S^2$ satisfy the stationarity. 
\begin{theorem}
Let $f$, $f+g: [0,\, 1] \to S^2$ be smooth curves satisfying $\norm{g} < \pi$ and $\norm{g'}\le 1$. Let $X$ be a random vector on $S^2$ with $X \in B(\zeta)$, where $B(\zeta) := \{x\in S^2 \ | \ |f''(\lambda_{f}(x))\cdot x| > \zeta \}$ for a small $\zeta > 0$. Then, $f$ is an intrinsic principal curve of $X$ if and only if   
\begin{equation}
	\label{eq:thm2}
	\frac{\partial \mathbb{E}_X[d^{2}_{Geo}(X,\, f+\epsilon g)]}{\partial\epsilon} \biggr|_{\epsilon = 0} = 0. 
	\end{equation}
\end{theorem}
\begin{proof}
See \Cref{proof:stationarity}.
\end{proof}
The constraints $C(\zeta)$ and $B(\zeta)$ in Theorems 1 and 2 are required to ensure the differentiation of the projection index $\lambda_{f+\epsilon g}(X)$ with respect to $\epsilon$. Note that the constraints are almost negligible by setting $\zeta$ infinitesimally small; see Lemmas 4 and 6 in \Cref{proof:stationarity} for details.

We finally remark that the stationarity of the principal curves in Euclidean space provides a rationale for the principal curves in \cite{Hastie} that is a nonlinear generalization of the linear principal component. Following the same line, the above stationarity results provide a theoretical justification that the proposed approaches directly generalize the principal curves in \cite{Hastie} from Euclidean space to spheres. In the intrinsic approach, the case of $S^d$ with $d\ge 3$ remains a challenge.

\section{Numerical Experiments}\label{numerical:experiment}
This section conducts numerical experiments with real data analysis and simulated examples to assess the practical performance of the proposed methods. 
The experiment can be reproduced at \url{https://github.com/Janghyun1230/Spherical-Principal-Curve}. Moreover, we provide R package, \textbf{spherepc} at \url{https://cran.r-project.org/package=spherepc}, which implements the spherical principal curves for a variety of datasets lying on $S^2$.

\subsection{Real Data Analysis}

\subsubsection{Earthquake Data on $S^2$}

\begin{figure*}[!ht]
	\centering
	\includegraphics[scale=0.125]{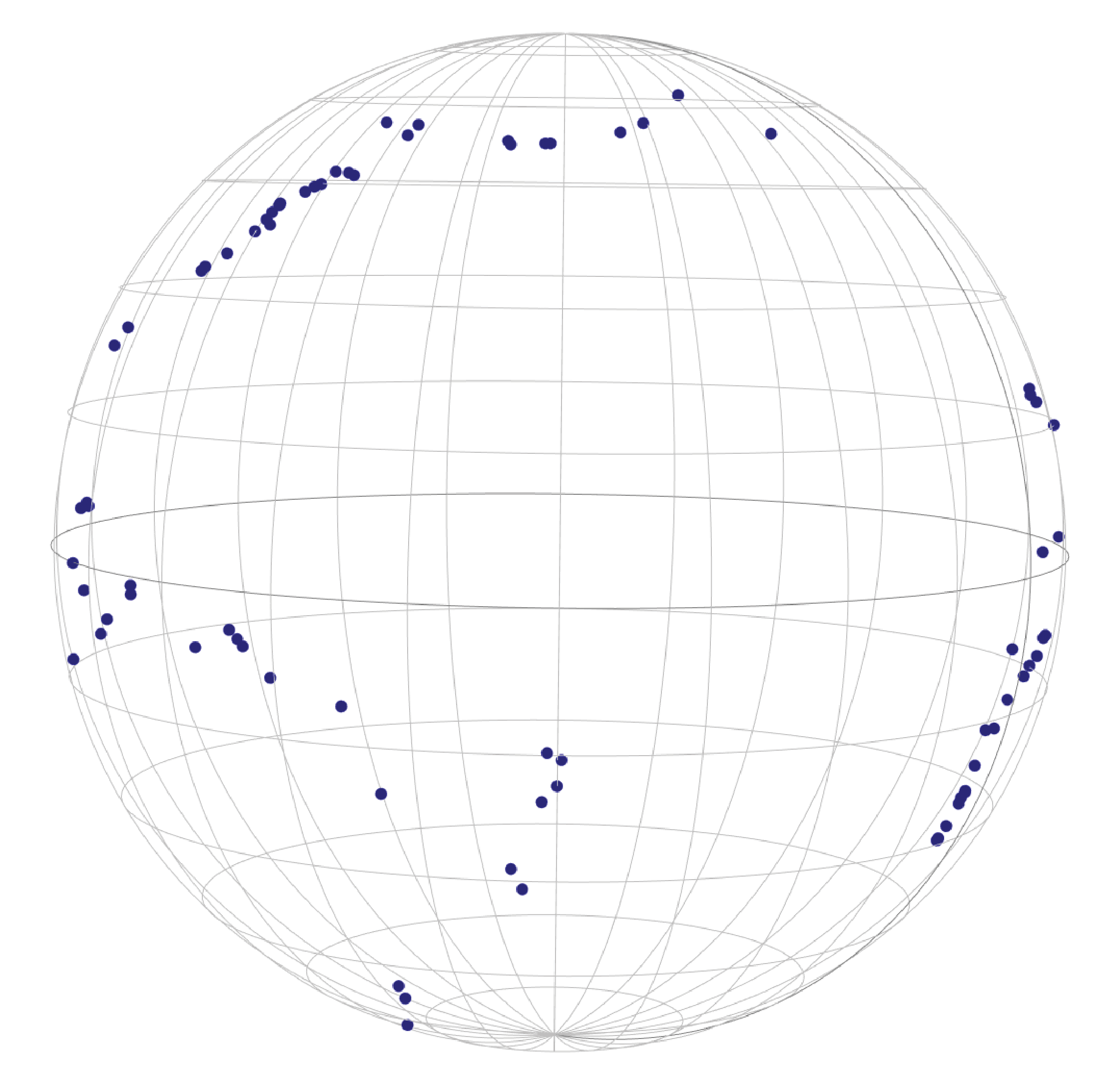}
	\includegraphics[scale=0.125]{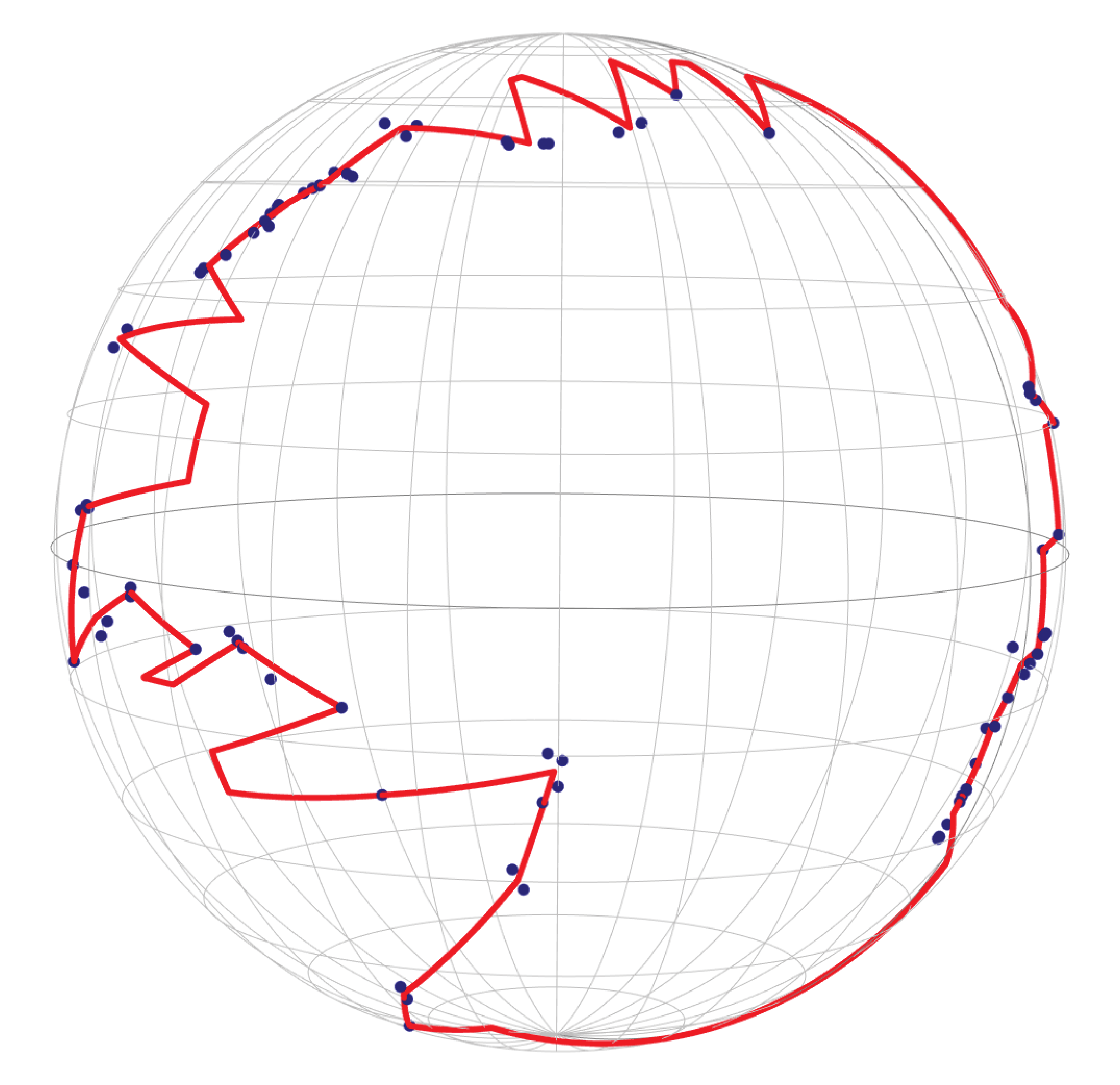}
	\includegraphics[scale=0.125]{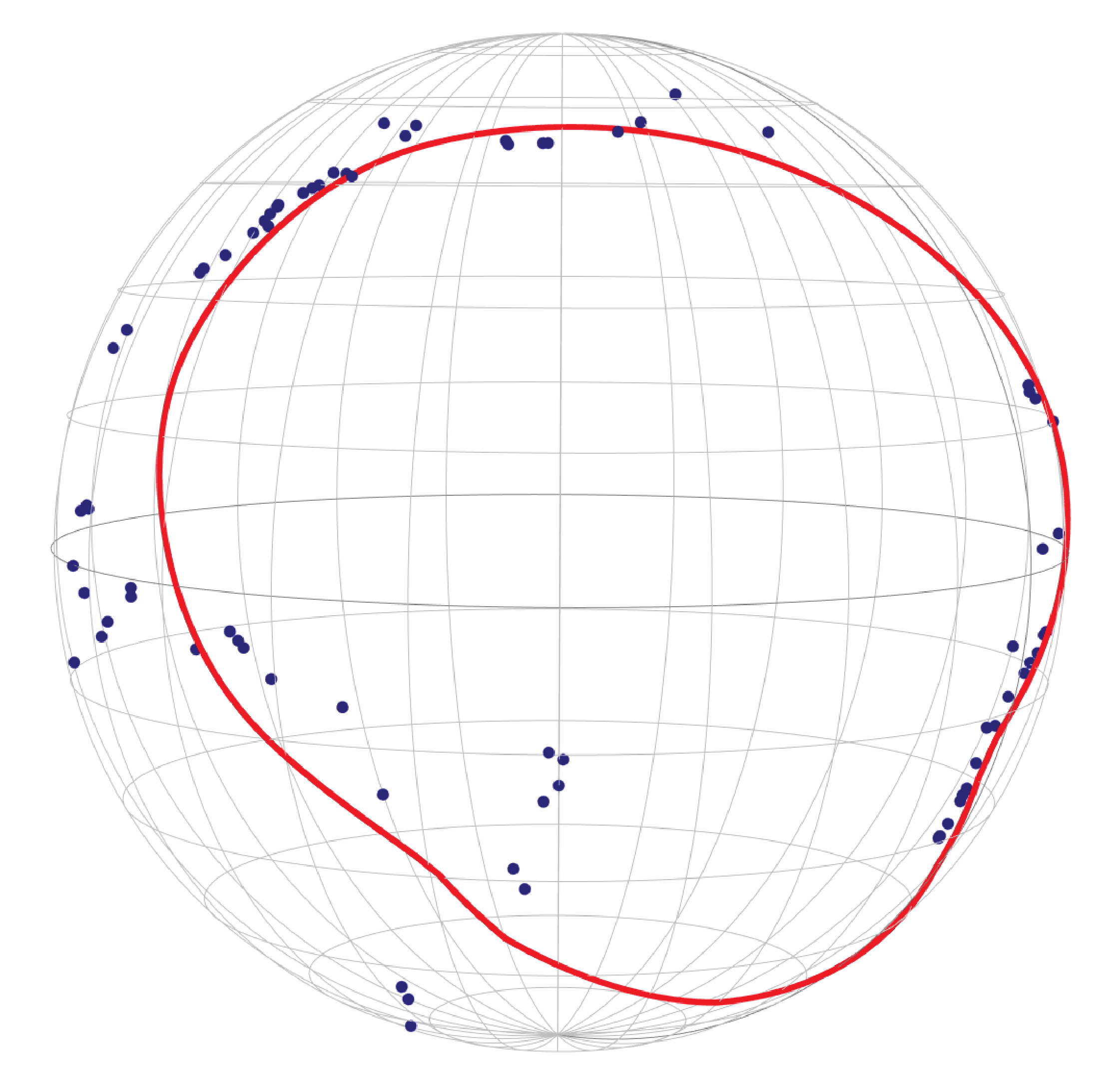}
	\caption{The distribution of earthquake data (left). The proposed extrinsic principal curves of $T=500$ with $q=0.01$ (middle) and $q=0.2$ (right). Blue points represent the observations and red lines are the fitted curves.}
	\label{fig:earth}
\end{figure*}

We consider earthquake data from the U.S. Geological Survey (\url{https://earthquake.usgs.gov/earthquakes/map/}) in \Cref{fig:earth} that represent the distribution of significant earthquakes (8+ Mb magnitude) around the Pacific Ocean since 1900. As shown in the figure, 77 observations are distributed in the vicinity of the borders between the Pacific, Eurasian, and Nazca plates. Since the plates are gradually moving towards different directions, recognizing the unrevealed patterns of borders provides essential information about seismological events such as earthquakes and volcanoes \citep{Mardia1977, Biau}. In the following experiment, we utilize the spherical principal curves to recover the plates' borders by extracting curvilinear features of the observations.

We have implemented the proposed principal curves connected by $T=500$, with various values of hyperparameter $q$ that is the bandwidth of kernel in the expectation step. \Cref{fig:earth} shows the results with $q=0.01$ and 0.2. We observe that a small $q$ produces a wiggly and overfitted curve. It is noteworthy that the choice of $q$ affects the quality of the fitted curve. \cite{Duchamp} proved that principal curves are always the saddle point of the expectation of the squared distance from a particular random variable, pointing out that cross-validation is not reliable for the model selection of principal curves, \textit{i.e.,} determination of $q$. \cite{Kegl2000} defined principal curves that minimize reconstruction errors in the constraint of the curve length, but used a heuristic way to determine the corresponding hyperparameter, the length of the curves. In the current study, the value of $q$ is selected by visual inspection through all our experiments. An objective way to select $q$ is left for future research. 

\begin{figure}[]
	\centering
	\includegraphics[scale=0.15]{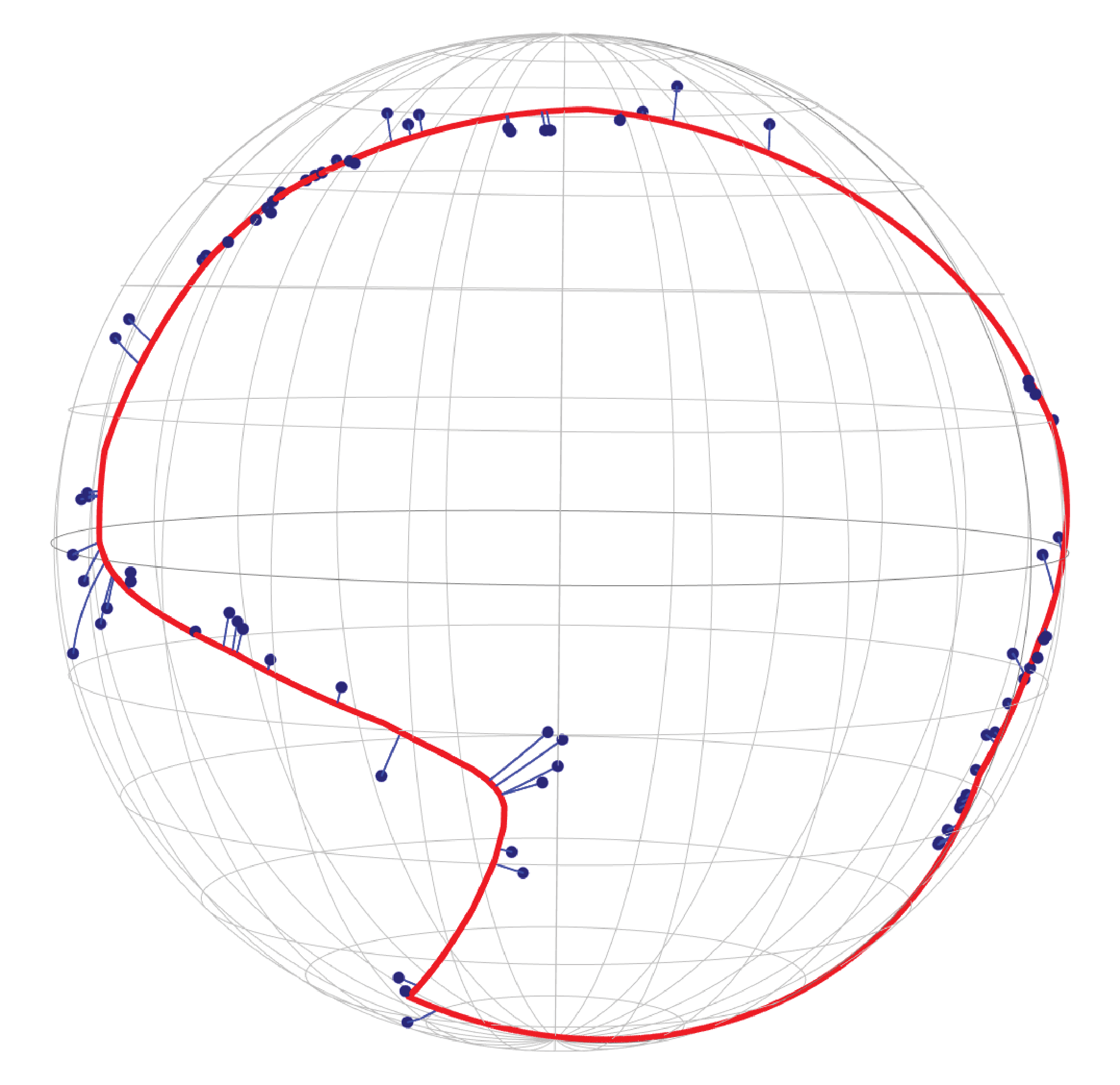}
	\hspace{0cm}
	\includegraphics[scale=0.15]{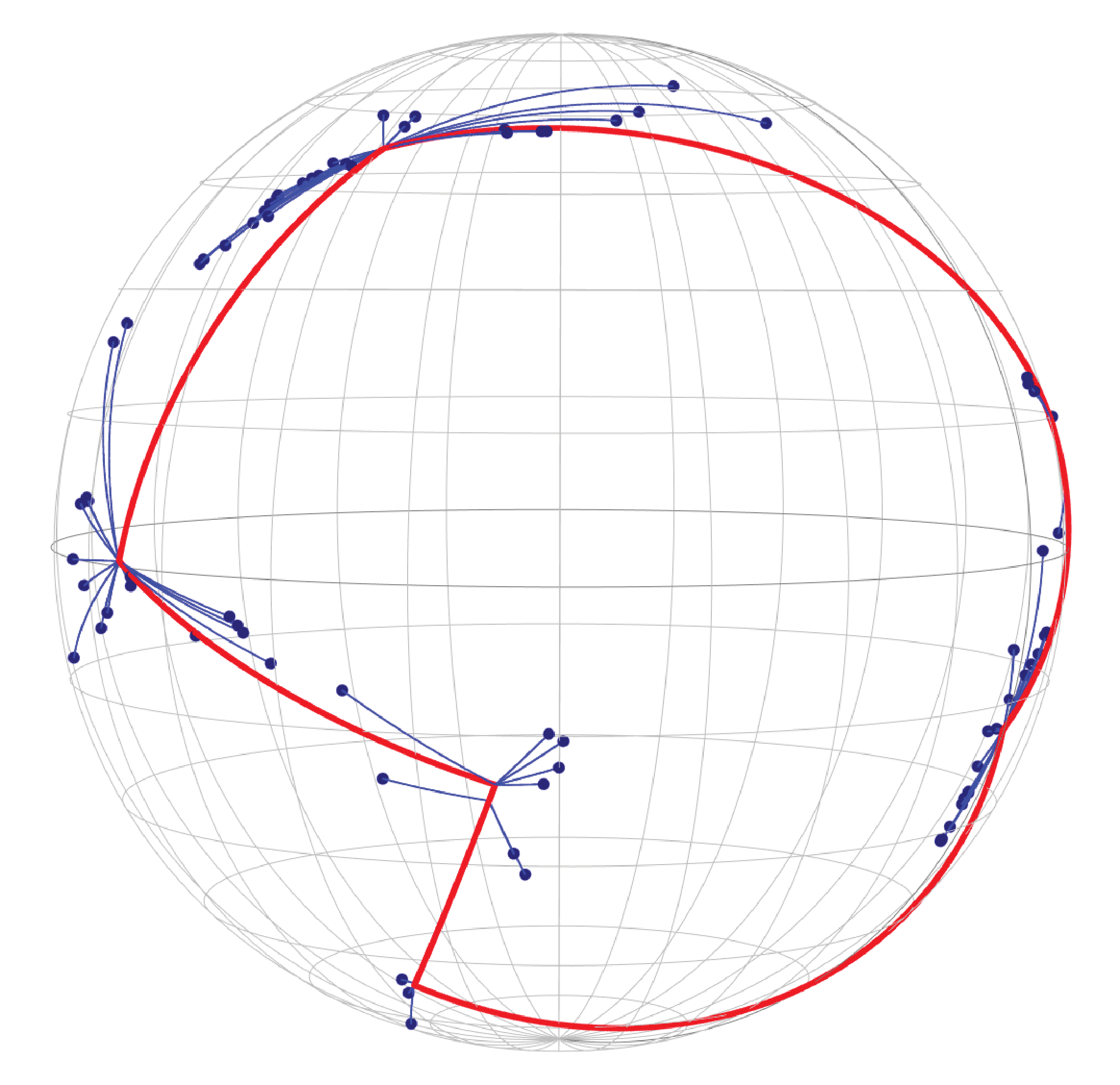}
	\vspace{-1mm}
	\caption{Projection results by the proposed extrinsic method (left) and Hauberg's method (right) with $T=77$ and $q=0.1$.} 
	\label{fig:compare} 
\end{figure}

As one can see, the proposed extrinsic curve represents a given data as a continuous curve, while the Hauberg method projects several local data at one point.

We further compare the proposed extrinsic principal curves with the method of \cite{Hauberg}. \Cref{fig:compare} shows both results with $q=0.1$, where the purple lines represent the fitted curves, and the blue lines represent the projections from the data to the curve. The proposed extrinsic principal curve continuously represents the given data on the curve, while the method of \cite{Hauberg} projects several local points to a single location. The comparison is further summarized in Table 1. As a result, the number of distinct projections (\# proj) by our method is much larger than that of Hauberg's method. It implies that the proposed principal curve continuously represents the data, whereas the method of Hauberg tends to cluster the data. We also measure a reconstruction error (RE) defined as \small{$\sum_{i=1}^n d^2_{Geo}\big(x_i,\hat{f}\big(\lambda_{\hat{f}}(x_i)\big)\big)$} \normalsize with observations $\{x_i\}_{i=1}^n$ and fitted values \small{$\{\hat{f}\big(\lambda_{\hat{f}}(x_i)\big)\}_{i=1}^n$}\normalsize. As listed in Table 1, our method outperforms Hauberg's method in terms of the reconstruction error. 

\begin{table}[!ht]
    \centering
    \caption{The values of RE and \# proj by the proposed methods and Hauberg's method on the earthquake data}
		\begin{tabular}{c||c|c|ccc}
		    \hline 
			\multicolumn{3}{c|}{}                                                        & \text{Extrinsic} & \text{Intrinsic} & \text{Hauberg} \rule[-1.0ex]{0pt}{3.2ex} \\ 
			\hline \hline
			& \multirow{2}{*}{$q=0.2$} & RE & 2.662     & 4.391     & 12.067   
			\rule{0pt}{2.2ex}  \\
			\multirow{7}{*}{$T=77$}  &  
			& \# proj & 74/77     & 72/77     & 22/77  \rule[-1.0ex]{0pt}{0pt}\\ 
			\cline{2-6} 
			& \multirow{2}{*}{$q=0.1$} &  RE & 0.463     & 0.467     & 4.920  \rule{0pt}{2.4ex}  \\
			&                         & \# proj & 76/77     & 76/77     & 9/77   \rule[-1.0ex]{0pt}{0pt} \\ \cline{2-6} 
			& \multirow{2}{*}{$q=0.05$} &  RE  & 0.359     & 0.359     & 1.313  \rule{0pt}{2.4ex} \\
			&                         & \# proj & 74/77     & 73/77     & 16/77   \rule[-1.0ex]{0pt}{0pt} \\ \cline{2-6} 
			& \multirow{2}{*}{$q=0.01$} &  RE & 0.061     & 0.061     & 0.227  \rule{0pt}{2.4ex} \\
			&                         & \# proj& 75/77     & 75/77     & 27/77   \rule[-1.0ex]{0pt}{0pt} \\ \cline{1-6} 
			& \multirow{2}{*}{$q=0.2$} &  RE & 2.193     & 3.460     & 11.300  \rule{0pt}{2.4ex} \\
			\multirow{7}{*}{$T=500$} &                         
			& \# proj & 75/77     & 72/77     & 30/77   \rule[-1.0ex]{0pt}{0pt} \\ 
			\cline{2-6} 
			& \multirow{2}{*}{$q=0.1$} &  RE  & 0.715     & 0.732     & 3.903 \rule{0pt}{2.4ex}  \\
			&                         & \# proj & 75/77     & 74/77     & 18/77   \rule[-1.0ex]{0pt}{0pt} \\ \cline{2-6} 
			& \multirow{2}{*}{$q=0.05$} &  RE  & 0.298     & 0.200     & 0.963  \rule{0pt}{2.4ex} \\
			&                         & \# proj & 75/77     & 75/77     & 27/77   \rule[-1.0ex]{0pt}{0pt} \\ \cline{2-6} 
			& \multirow{2}{*}{$q=0.01$} &  RE  & 0.036     & 0.036     & 0.121  \rule{0pt}{2.4ex} \\
			&                         & \# proj & 75/77     & 75/77     & 37/77  \rule[-0.9ex]{0pt}{0pt} \\ 
			\hline
		\end{tabular}
	\label{table1}
\end{table}

\subsubsection{Motion Capture Data on $S^2$}
We now consider a benchmark data on $S^2$, motion capture data of a person walking in a circular pattern \citep{Ionescu2011, Ionescu2014, Hauberg, Mallasto}. The data represent the orientation of the person's left \textit{thigh bone} and naturally lie on $S^2$. There are 338 data points in the data set that are periodic.

 \Cref{fig:motion} shows both results with $q= 0.03,\, 0.05$, where the red and yellow lines represent the fitted curves, and the blue lines represent the projections from the data to the curves. The proposed extrinsic principal curve continuously represents the given data on the curve, while the method of Hauberg projects several local points to a single location. Furthermore, \Cref{table2} lists the quantitative results of the proposed methods and the method of \cite{Hauberg}. As listed, the proposed methods outperform Hauberg's method in terms of the reconstruction error and represent the data more precisely.

\begin{figure}[ht]
	\centering
	\includegraphics[scale=0.13]{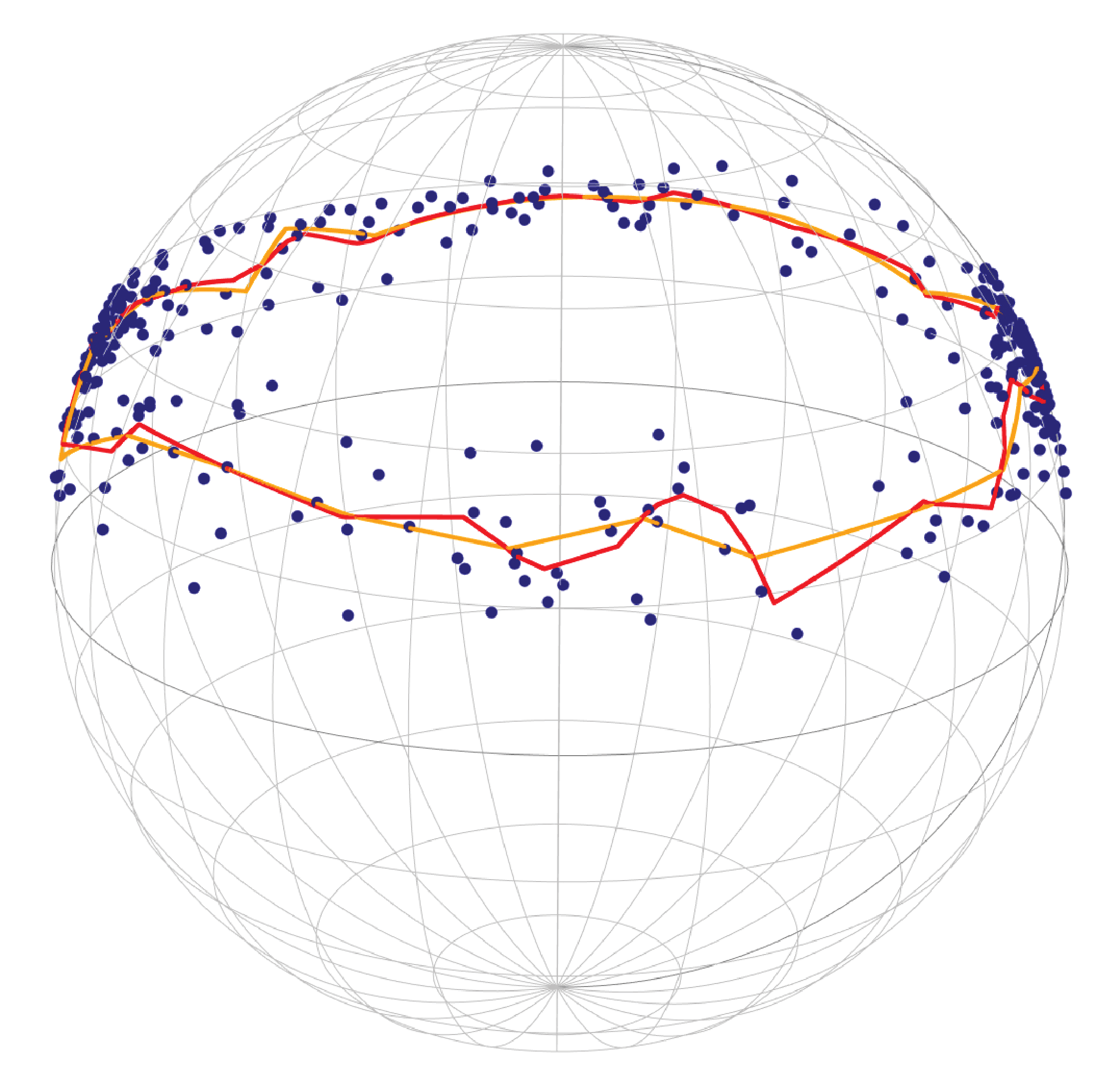}
	\includegraphics[scale=0.13]{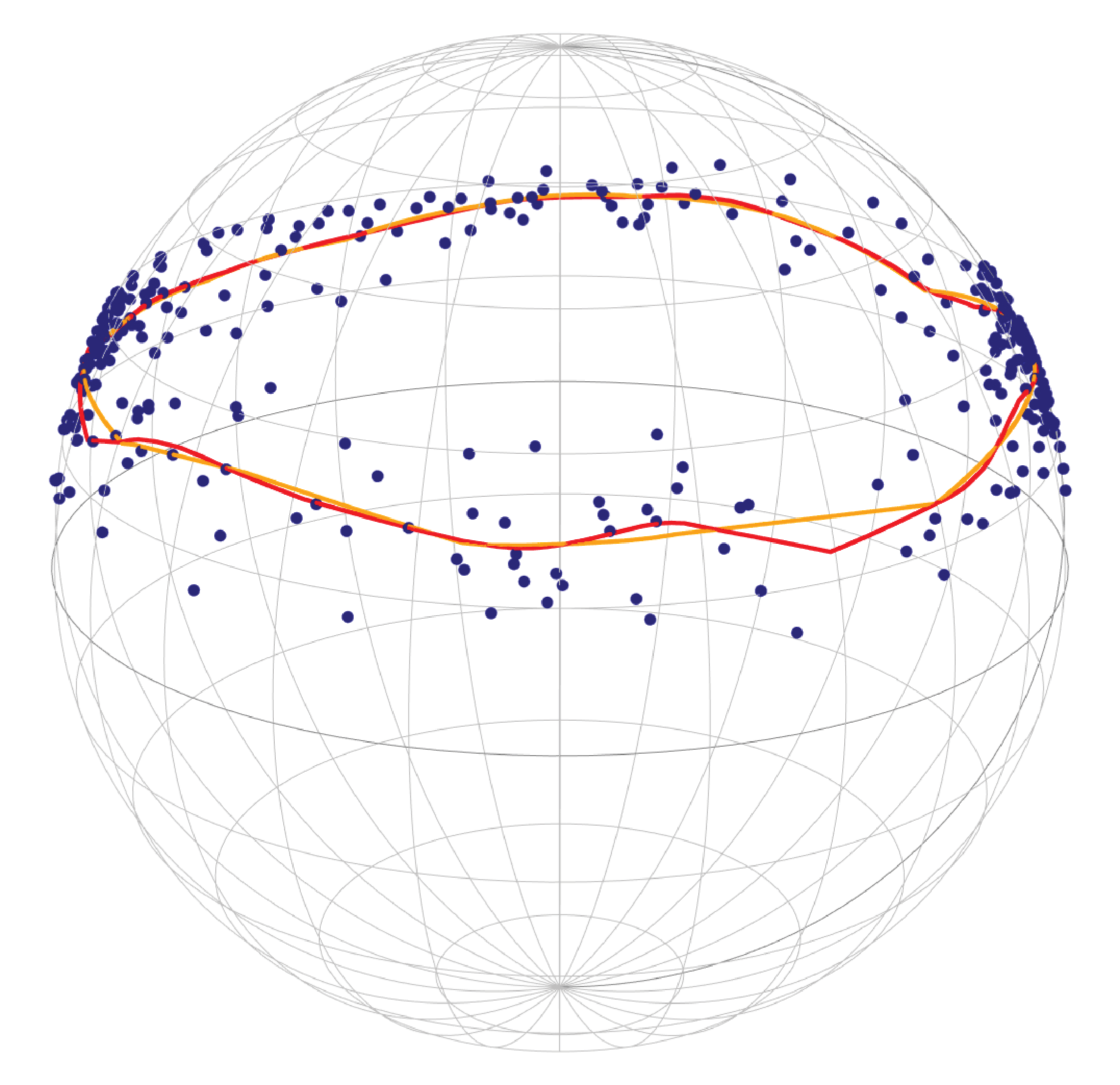}
	\includegraphics[scale=0.13]{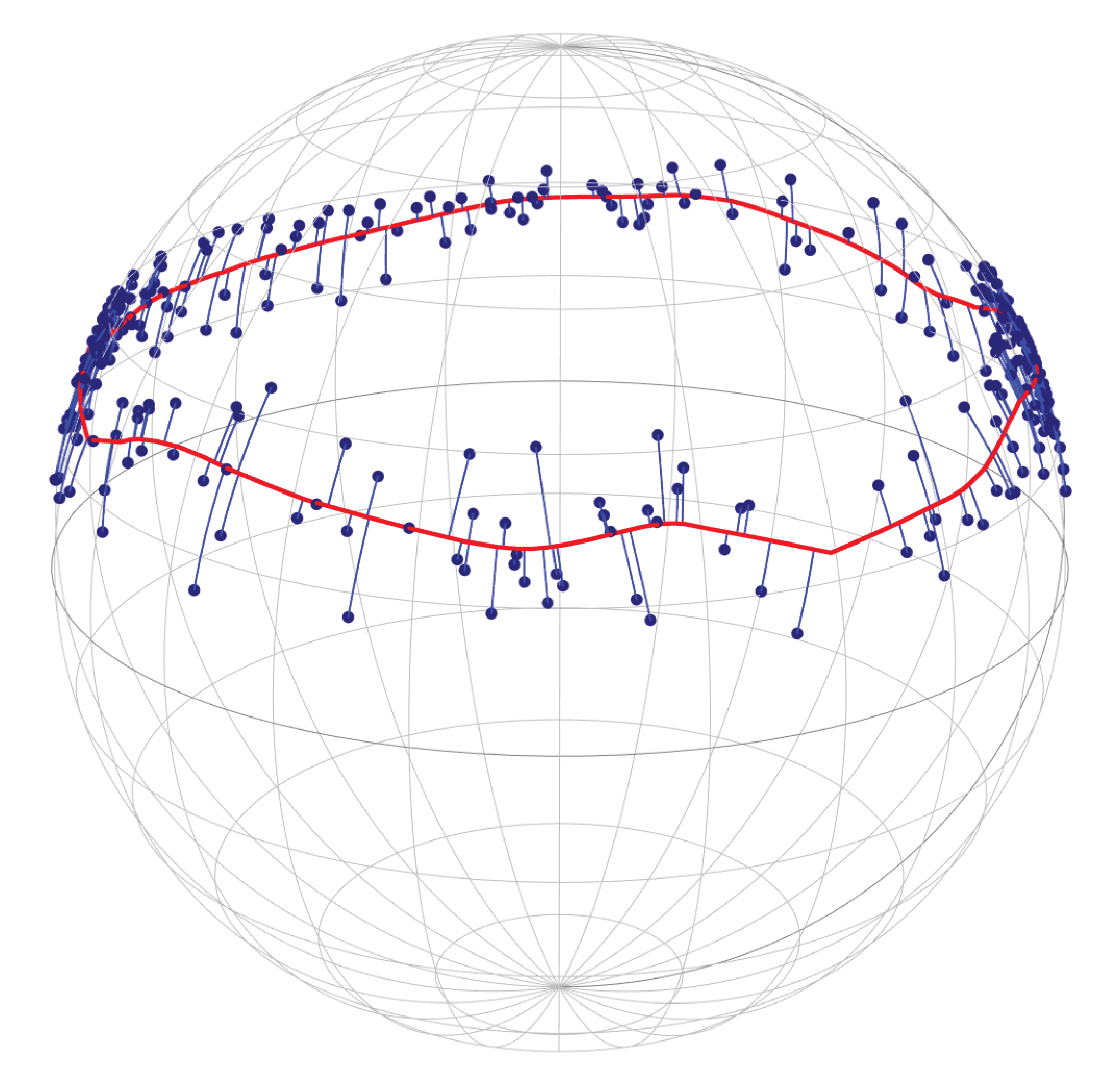}
	\includegraphics[scale=0.13]{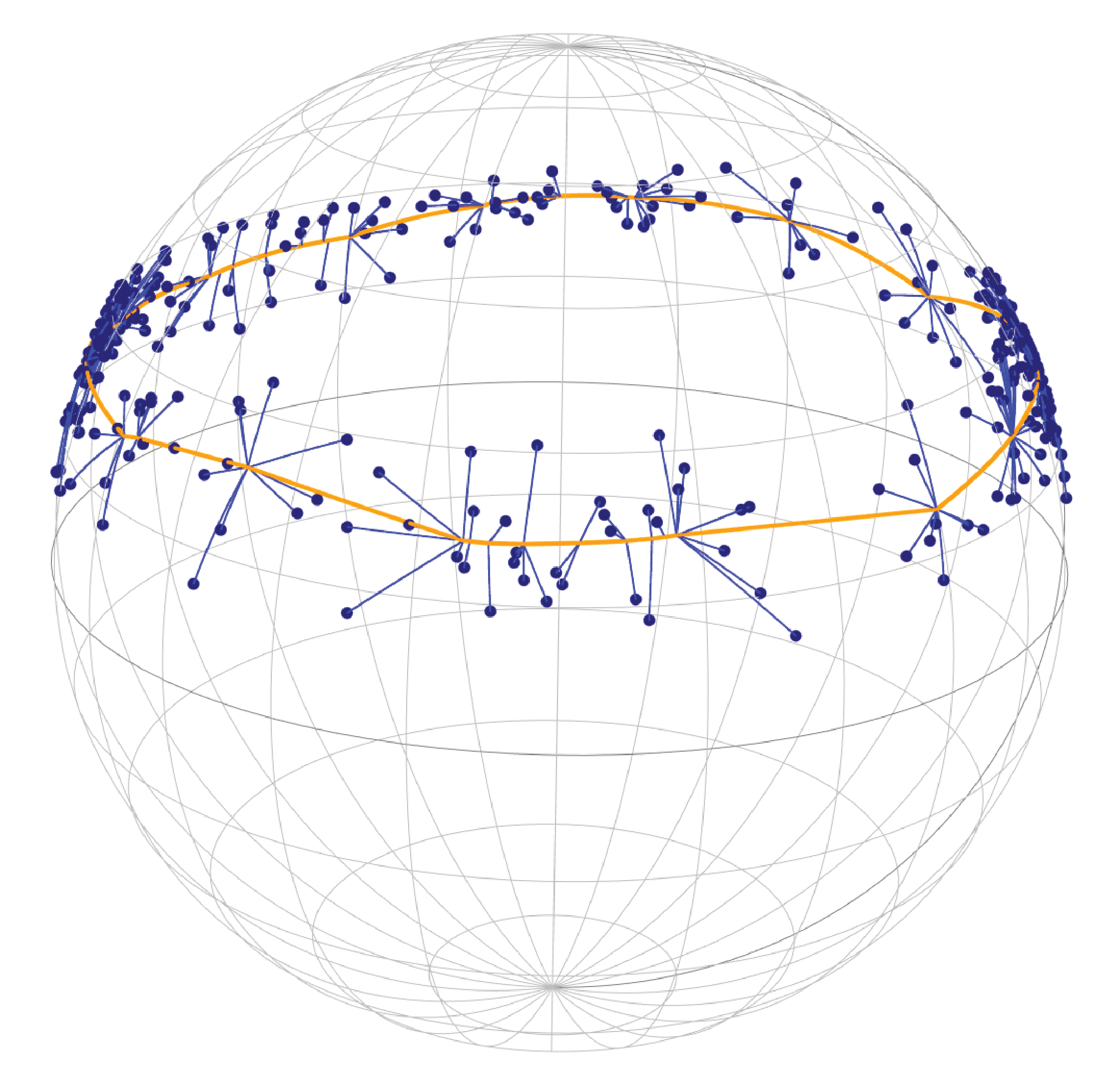}
	\vspace{-1mm}
	\caption{Results by the proposed extrinsic method (red) and Hauberg's method (yellow) with $T=100$. From top left to bottom right, results with $q=0.03$ and $q=0.05$, projection results by the two methods (blue) with $q=0.05$.}
	\label{fig:motion} 
\end{figure}

\begin{table}[!ht]
    \centering
    \caption{The values of RE and \# proj by the proposed methods and Hauberg's method on the motion capture data}
		\begin{tabular}{c||c|c|ccc}
			\hline
			\multicolumn{3}{c|}{}                      & \text{Extrinsic} & \text{Intrinsic} & \text{Hauberg}  \rule[-1.0ex]{0pt}{3.2ex} \\ 
			\hline \hline
			& \multirow{2}{*}{$q=0.05$} &  RE &  2.502  & 2.504 & 2.534 
			\rule{0pt}{2.2ex}  \\
			\multirow{5}{*}{$T=500$}  &  & \# proj & 336/338 & 337/338  & 223/338
			\rule[-1.0ex]{0pt}{0pt} \\ 
			\cline{2-6} 
			& \multirow{2}{*}{$q=0.03$} &  RE  & 1.741 & 1.741 & 2.637 \rule{0pt}{2.4ex}  \\
			&                         & \# proj & 332/338  & 333/338  & 119/338   
			\rule[-1.0ex]{0pt}{0pt} \\ 
			\cline{2-6} 
			& \multirow{2}{*}{$q=0.01$} &  RE  & 0.669 & 0.669 &  1.253 \rule{0pt}{2.4ex} \\
			&                         & \# proj & 315/338  & 317/338  & 92/338   
            \rule[-0.9ex]{0pt}{0pt} \\ 
			\hline
		\end{tabular}
	\label{table2}
\end{table}

\subsection{Simulation Study}
\subsubsection{Simulation on $S^2$}\label{simul:S^2}
We consider two types of functions on the unit sphere with spherical coordinates $(r=1,\, \theta,\, \phi)$, where $\theta$ is the azimuthal angle and $\phi$ is the polar angle: ~(Circle) it is formed of $(r=1,\, \theta,\, \phi)$ with $0\le \theta < 2\pi$ and $\phi=\pi/4$. ~(Wave) it is defined as $(r=1,\theta,\phi)$ with $0\le \theta < 2\pi$ and $\phi=\alpha\sin(\theta f)+\pi/2$, where the frequency $f=4$ and the amplitude $\alpha=1/3$.

\begin{figure*}[!ht]
	\centering
	\includegraphics[scale=0.13]{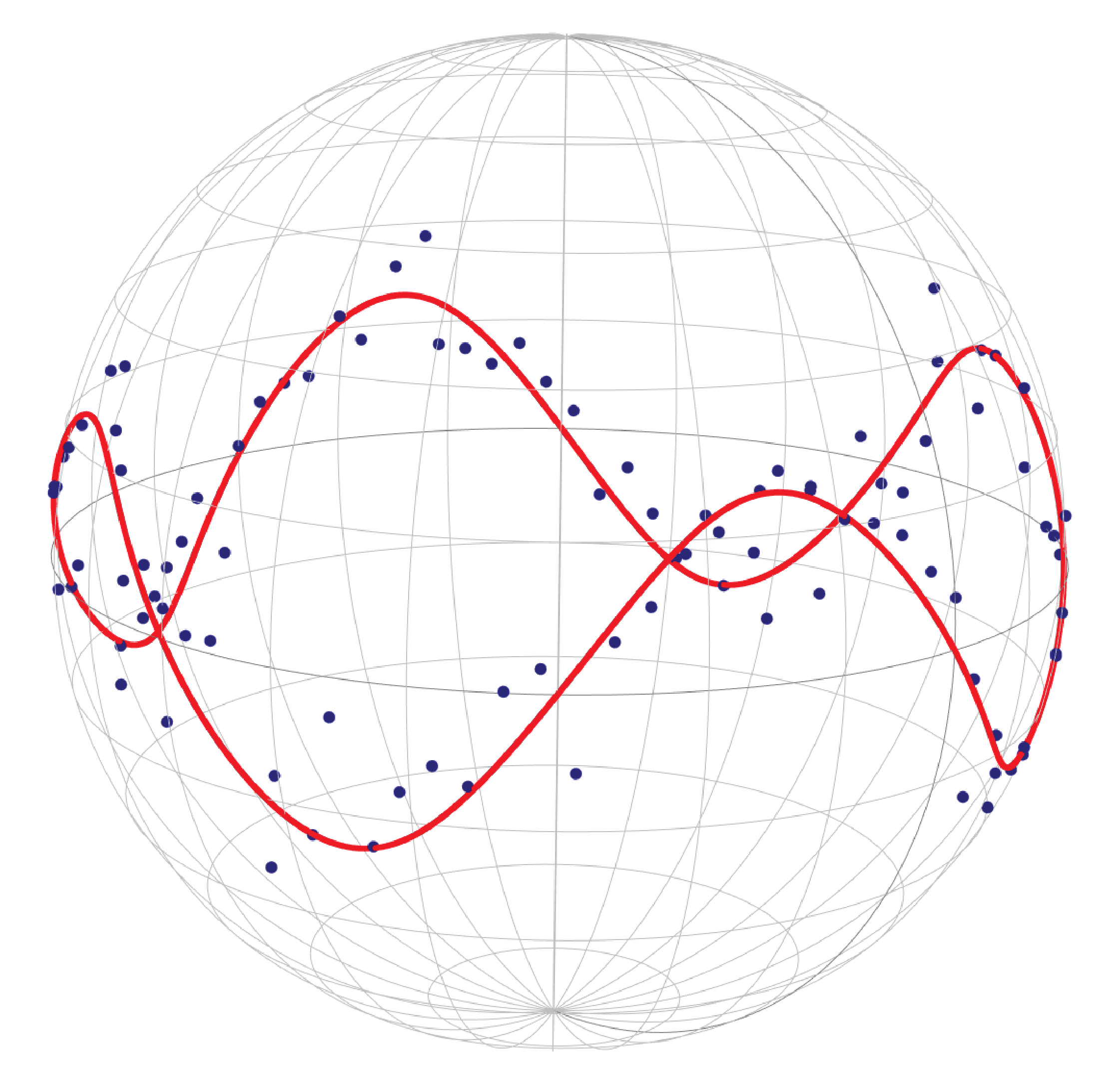}
	\hspace{0.0cm}
	\includegraphics[scale=0.13]{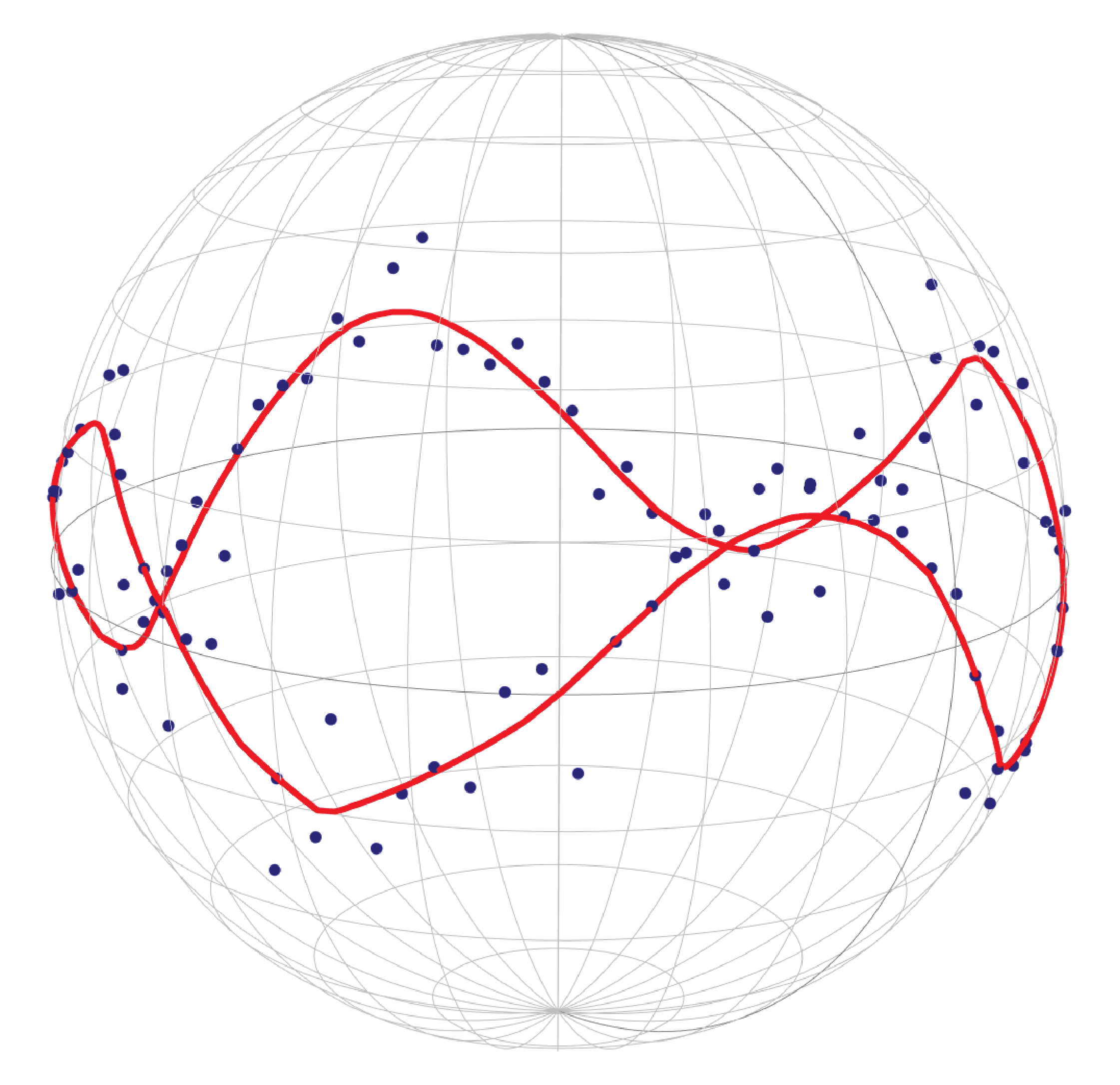}
	\hspace{0.0cm}
	\includegraphics[scale=0.13]{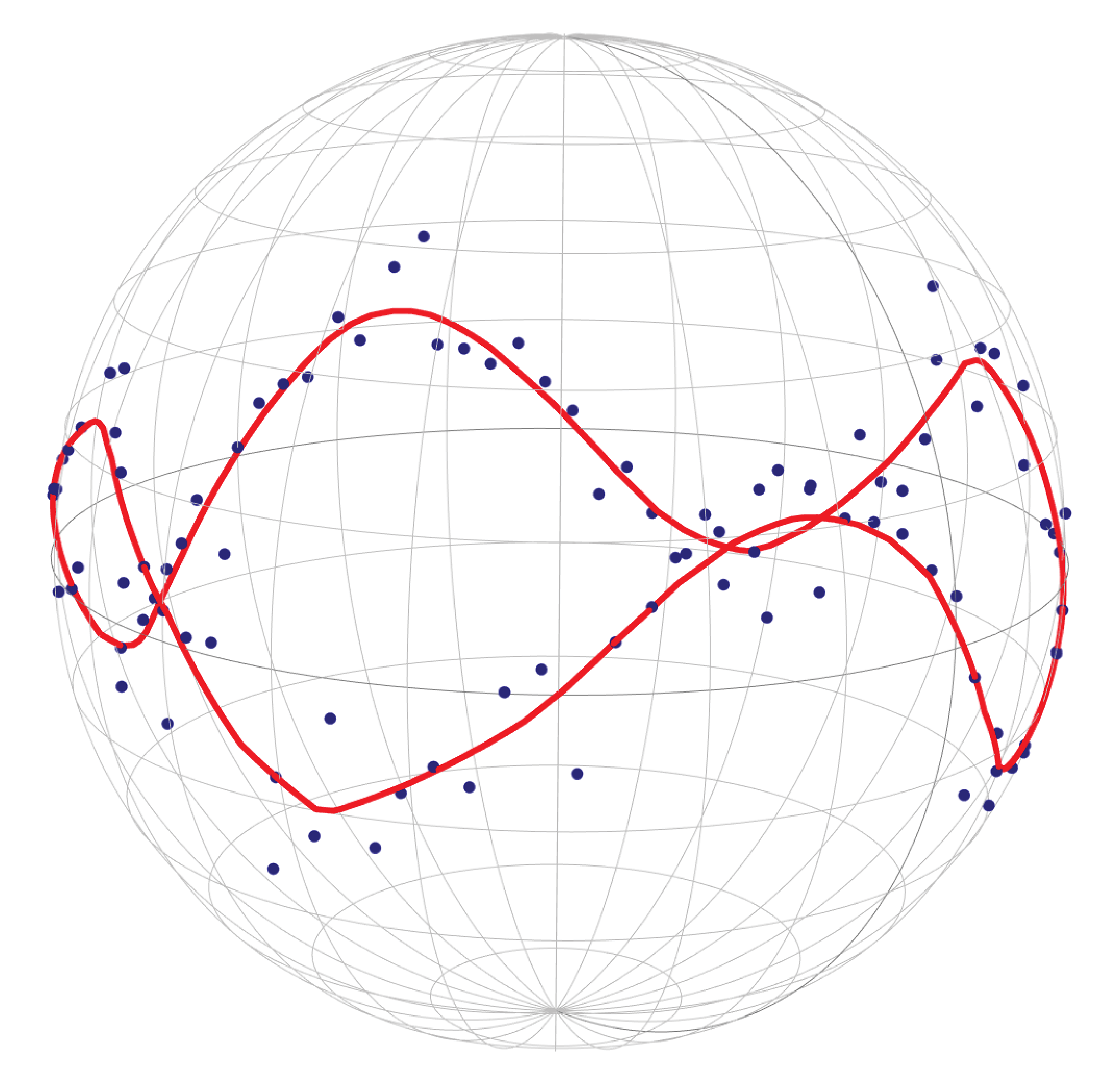}
	\hspace{0.0cm}
	\includegraphics[scale=0.13]{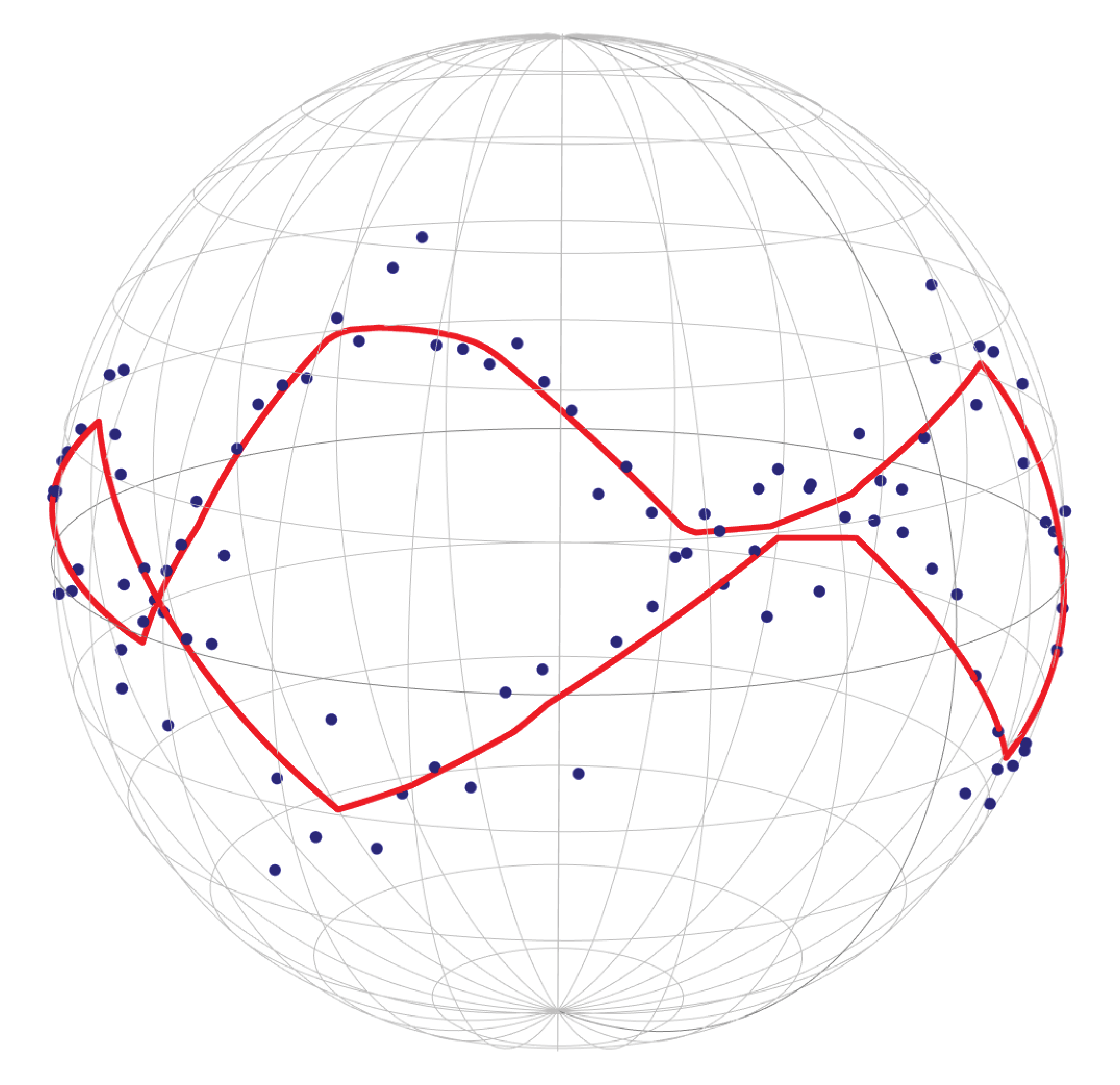}
	\vspace{-1mm}
	\caption{From top left to bottom right: True waveform and noisy data (blue dots), the extrinsic principal curve, the intrinsic principal curve, and the curve by Hauberg's method with $T=100$.} 
	\label{fig:wave}
\end{figure*}

For each type of functions, we generate $n=100$ data points by sampling $\theta$ uniformly in $[0,\, 2\pi)$ and adding Gaussian noises sampled from $N(0,\, \sigma^2)$ to $\phi$. \Cref{fig:wave} shows the results on the waveform data with $T=500$ and $q=0.05$.  Both extrinsic and intrinsic principal curves extract the true waveform effectively, while Hauberg's approach yields a rather sharp curve. In \Cref{influence:Tq}, we provide additional visual results with various parameter settings.

\begin{table*}[ht]
	%\centering
		\caption{Averages of reconstruction errors and their standard deviations in the parentheses by each method}
	\resizebox{1.0\textwidth}{!}{
		\begin{tabular}{c||c|c|cccccc}
			\hline
			\multirow{3}{*}{}      & \multirow{3}{*}{True form} & \multirow{3}{*}{Method} & \multicolumn{6}{c}{Noise level}\\
			&                            &                         & \multicolumn{3}{c}{$\sigma=0.07$}          & \multicolumn{3}{c}{$\sigma=0.1$}                   \\ \cline{4-9}
			&                            &                         & $q=0.05$         & $q=0.03$        & \multicolumn{1}{c|}{$q=0.01$}        & $q=0.05$        & $q=0.03$         & $q=0.01$   \rule{0pt}{2.2ex}   \\ \hline \hline
			\multirow{4}{*}{$T=100$} & \multirow{2}{*}{Circle}    
			                             & Proposed     & 0.093 (0.026)  & 0.12 (0.027) & \multicolumn{1}{c|}{0.095 (0.013)} & 0.201 (0.048) & 0.216 (0.046)  & 0.137 (0.025) \rule{0pt}{2.2ex} \\
			&                            & Hauberg                 & 0.117 (0.073) & 0.408 (0.149) & \multicolumn{1}{c|}{0.298 (0.038)}  & 0.370 (0.205) & 0.74 (0.208)   & 0.494 (0.063) \rule[-0.9ex]{0pt}{0pt} \\ \cline{2-9}
			& \multirow{2}{*}{Wave}      
			                             & Proposed     & 0.71 (0.114)   & 0.329 (0.097) & \multicolumn{1}{c|}{0.084 (0.023)} & 0.673 (0.150) & 0.346 (0.113)  & 0.124 (0.038) \rule{0pt}{2.2ex} \\
			&                            & Hauberg                 & 2.444 (0.059)  & 2.158 (0.155) & \multicolumn{1}{c|}{0.568 (0.055)} & 2.544 (0.118) & 2.103 (0.563)  & 0.796 (0.094) \rule[-0.9ex]{0pt}{0pt}  \\ \hline
			\multirow{4}{*}{$T=500$} & \multirow{2}{*}{Circle}    
			                             & Proposed     & 0.088 (0.026)  & 0.118 (0.023) & \multicolumn{1}{c|}{0.091 (0.018)} & 0.21 (0.050) & 0.207 (0.043)  & 0.129 (0.018) \rule{0pt}{2.2ex} \\
			&                            & Hauberg                 & 0.089 (0.027)  & 0.205 (0.079) & \multicolumn{1}{c|}{0.269 (0.034)} & 0.233 (0.087) & 0.453 (0.177)  & 0.397 (0.079) \rule[-0.9ex]{0pt}{0pt}  \\ \cline{2-9}
			& \multirow{2}{*}{Wave}      
			                             & Proposed     & 0.535 (0.065)  & 0.239 (0.056) & \multicolumn{1}{c|}{0.072 (0.020)} & 0.574 (0.094) & 0.237 (0.082)  & 0.110 (0.031) \rule{0pt}{2.2ex} \\
			&                            & Hauberg                 & 2.006 (0.697)  & 1.831 (0.146) & \multicolumn{1}{c|}{0.529 (0.043)} & 1.906 (0.847) & 1.756 (0.696)  & 0.688 (0.073) 
			\rule[-0.9ex]{0pt}{0pt} \\ \hline
		\end{tabular}%
	}
	\label{table3}
\end{table*}

We next quantify the performance of the proposed methods by measuring a reconstruction error between the fitted and true curves to measure the reconstruction ability of the methods. For the fitted curve $\hat{f}$, the reconstruction error is defined as \small{$\sum_{i=1}^n d^2_{Geo}\big(x_i,\, \hat{f}\big(\lambda_{\hat{f}}(\tilde{x}_i)\big)\big)$}\normalsize, where $\{x_i\}_{i=1}^n$ denote the true values of the generating curves and $\{\tilde{x}_i\}_{i=1}^n$ denote noisy sample values. We also count the number of distinct projection points to evaluate the continuity of resulting curves of the methods. Moreover, we compare the proposed spherical principal curves with Hauberg's method over various settings $T=100,\, 500$, $q=0.05,\, 0.03,\, 0.01$, and $\sigma= 0.07,\, 0.1$. 

\Cref{table3} lists the average values of reconstruction errors and their standard deviations over 50 simulation sets. As listed, the proposed (intrinsic) principal curves outperform the Hauberg's method, recovering the true curves accurately. \Cref{table4} provides the average values of distinct projection points and their standard deviations. The proposed method provides a very large number of distinct projection points compared to that of Hauberg's method. Overall, as listed in \Cref{table3} and \ref{table4}, our methods perform better than Hauberg's method, including the case that the number of points of the curves ($T=500$) is much larger than the number of data points ($n=100$). In addition, the results of the intrinsic and extrinsic principal curves are similar in terms of both reconstruction error and the number of distinct projection points, which appear with the fact that the intrinsic and extrinsic means are almost identical for localized data, as noted in \cite{Bhattacharya2005}. The results of the extrinsic approach are almost identical to those of the intrinsic one, and hence are omitted. 

\begin{table*}[ht]
	%\centering
		\caption{Averages of distinct projection points and their standard deviations in the parentheses}
	\resizebox{1.0\textwidth}{!}{%
		\begin{tabular}{c||c|c|cccccc}
			\hline
			& \multirow{3}{*}{True form} & \multirow{3}{*}{Method} & \multicolumn{6}{c}{Noise level}                                                           
			\\  
			&                            &                         & \multicolumn{3}{c}{$\sigma=0.07$}                                & \multicolumn{3}{c}{$\sigma=0.1$}                 
			\\  \cline{4-9} 
			&                            &                         & $q=0.05$       & $q=0.03$       &\multicolumn{1}{c|}{$q=0.01$}   &$q=0.05$        & $q=0.03$         & $q=0.01$        \rule{0pt}{2.2ex} \\ \hline \hline
			\multirow{4}{*}{$T=100$} & \multirow{2}{*}{Circle}     
			                            & Proposed     & 99.02 (0.32) & 98.92 (0.34)  &\multicolumn{1}{c|}{98.84 (0.47)} &99.08 (0.34)  & 98.72 (1.11)   & 98.20 (1.12)  \rule{0pt}{2.2ex} \\  
			&                            & Hauberg                 & 87.70 (7.95) & 56.68 (17.99) &\multicolumn{1}{c|}{64.70 (3.22)} &69.80 (12.28) & 47.04 (15.44)  & 60.42 (2.83) \rule[-0.9ex]{0pt}{0pt}\\ \cline{2-9} 
			& \multirow{2}{*}{Wave}       
			                            & Proposed     & 93.36 (4.47) & 97.28 (2.13)  &\multicolumn{1}{c|}{99.32 (0.51)} &95.82 (3.77)  & 96.72 (2.22)   & 99.10 (0.65)  \rule{0pt}{2.2ex} \\  
			&                            & Hauberg                 & 22.72 (2.77) & 25.94 (2.65)  &\multicolumn{1}{c|}{62.14 (2.49)} &24.5 (3.63)   & 32.16 (16.72)  & 58.84 (3.04) \rule[-0.9ex]{0pt}{0pt}\\ \hline 
			\multirow{4}{*}{$T=500$} & \multirow{2}{*}{Circle}      
			                            & Proposed     & 99.08 (0.27) & 99.02 (0.25)  &\multicolumn{1}{c|}{98.76 (0.69)} &99.1 (0.30)   & 99.04 (0.49)   & 99.30 (1.09)  \rule{0pt}{2.2ex} \\  
			&                            & Hauberg                 & 97.8 (1.47)  & 89 (8.63)     &\multicolumn{1}{c|}{79.28 (4.60)} &93.64 (5.29)  & 78.72 (13.08)  & 73.86 (7.28) \rule[-0.9ex]{0pt}{0pt}\\ \cline{2-9}  
			& \multirow{2}{*}{Wave}        
			                            & Proposed     & 99.18 (0.39) & 98.5 (1.27)   &\multicolumn{1}{c|}{99.26 (0.56)} &99.22 (0.42)  & 98.84 (1.20)   & 99.18 (0.66)  \rule{0pt}{2.2ex} \\  
			&                            & Hauberg                 & 45.2 (24.8)  & 43.38 (3.72)  &\multicolumn{1}{c|}{73.20 (3.42)} &52.04 (26.81) & 50.64 (19.99)  & 71.52 (4.38) \rule[-0.9ex]{0pt}{0pt}\\ \hline
		\end{tabular}
	}
	\label{table4}
\end{table*}

\subsubsection{Influence of $T$ and $q$}\label{influence:Tq}
Here we discuss the influence of the hyperparameters $T$ and $q$. To this end, we consider the waveform simulated data used in \Cref{simul:S^2}. \Cref{fig:sweep} visualizes the fitted curves by the proposed extrinsic method for various $q$'s in the range of $[0.01,\, 0.1]$ at intervals of 0.01 with a fixed $T=500$. As shown in the top panels of \Cref{fig:sweep}, the resulting curve with $q=0.01$ is wiggly, and  the curve with $q=0.1$ is almost flat. In general, the curves tend to overfit data when the $q$ value is small, whereas the curves tend to underfit data when the $q$ value is large. On the other hand, the bottom panels of \Cref{fig:sweep} show the fitted curves by the same method for a fixed $q=0.06$ and varying $T$ in $\{10,\, 20,\, 50,\, 100,\, 200,\, 500\}$. The curve of the bottom left panel implemented by a small $T$ value, such as $T=10$, does not represent the data well. For appropriate $T$ values, the spherical principal curves of the right panel successfully recover the underlying structure of the data.
\begin{figure}
	\centering
	\includegraphics[scale=0.127]{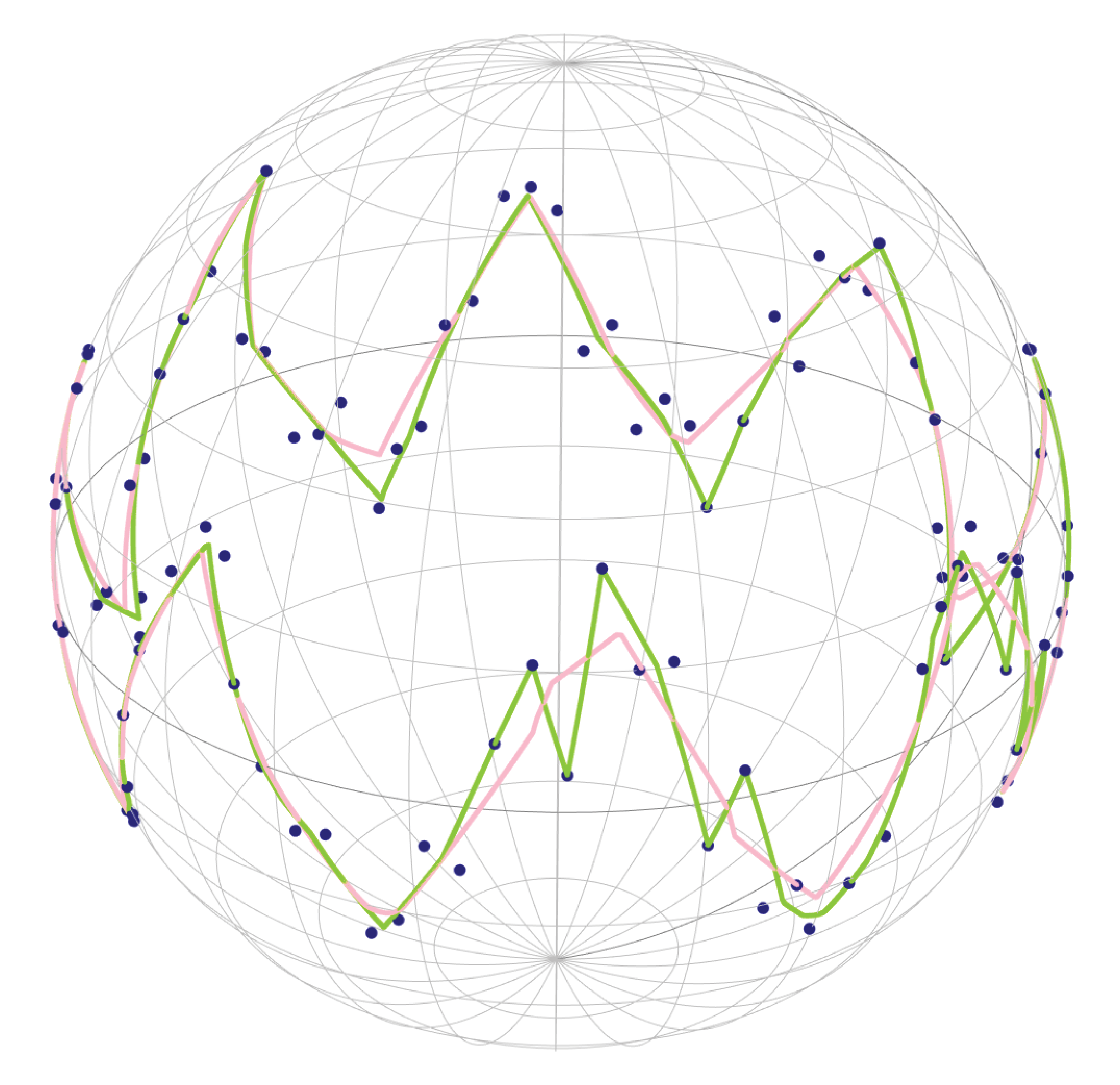}
	\hspace{0.3cm}
	\includegraphics[scale=0.127]{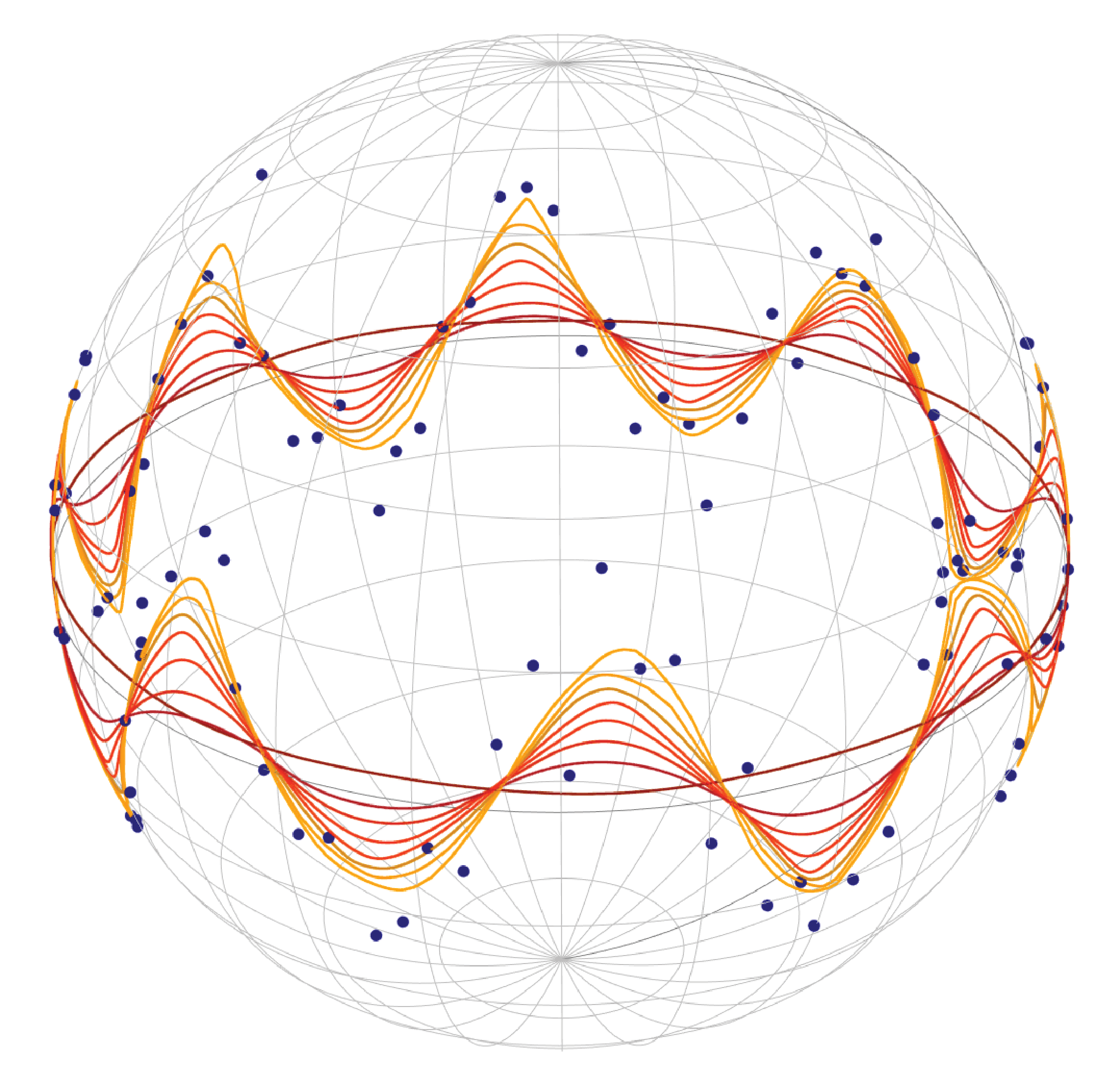}	
	\includegraphics[scale=0.127]{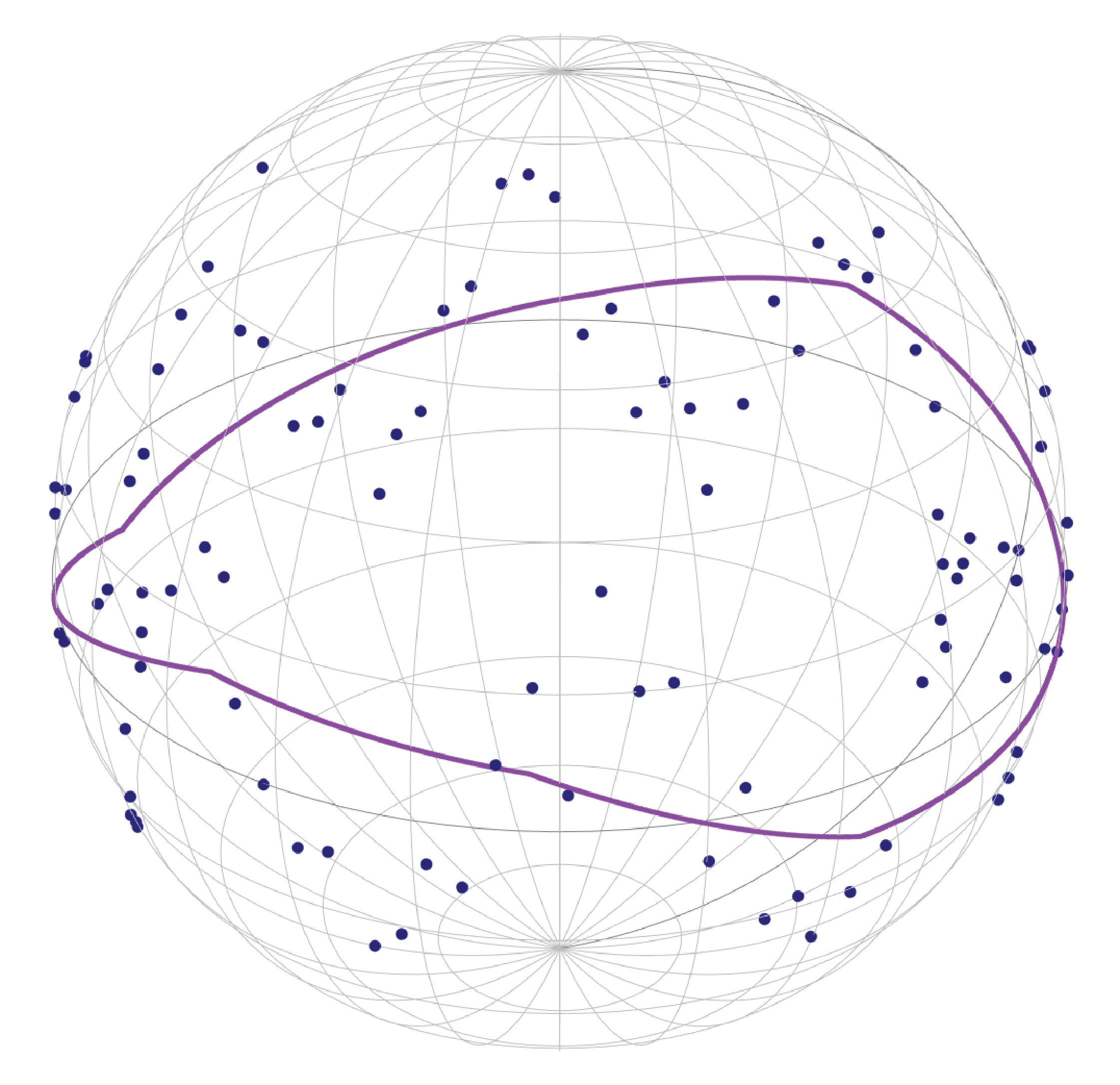}
	\hspace{0.3cm}
	\includegraphics[scale=0.127]{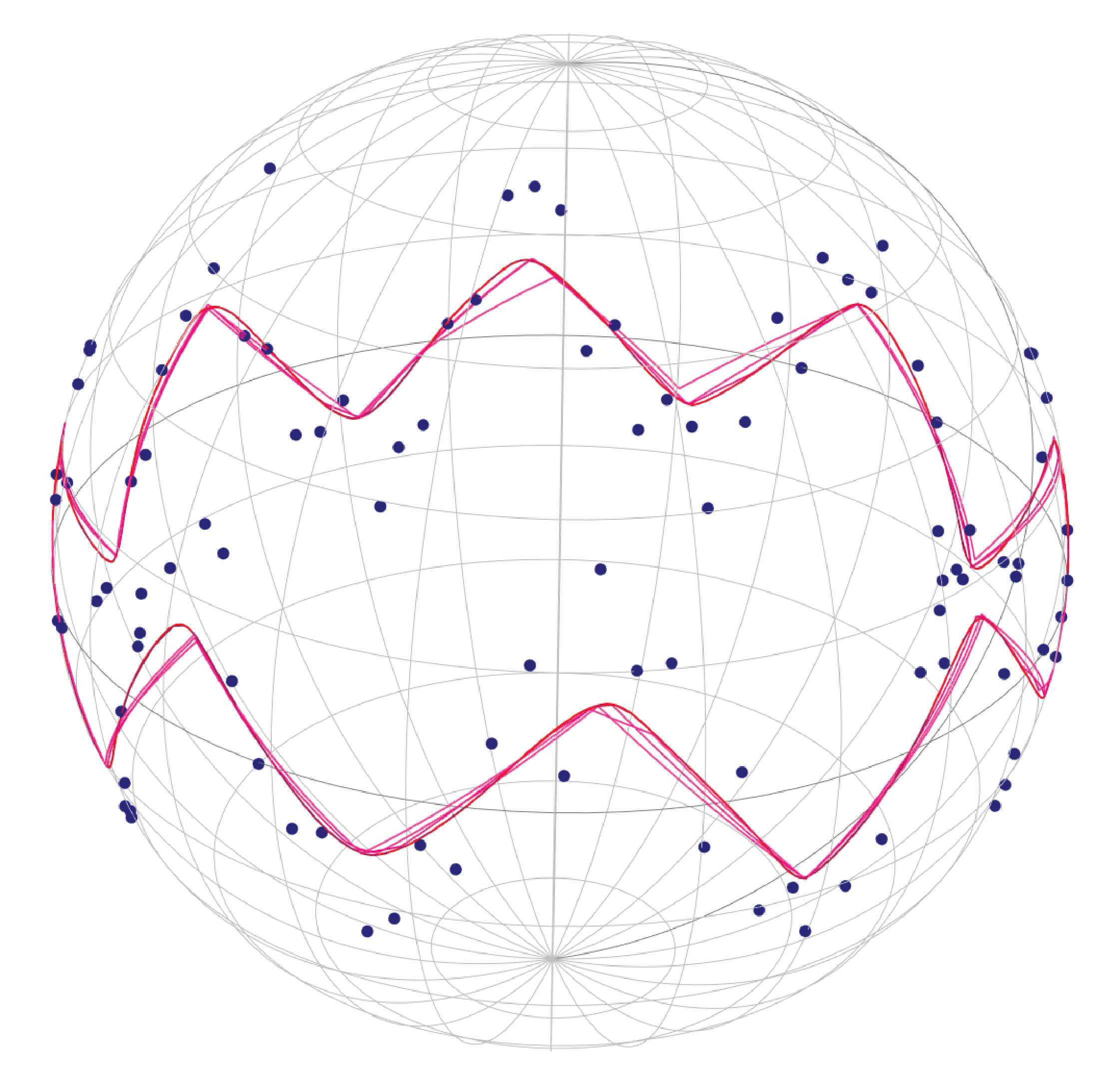}	
	\caption{Noisy waveform simulated data (blue). Top left: Extrinsic principal curves with $q=0.01$ (green) and $0.02$ (pink) for fixed $T=500$. Top right: Influence of varying $q$ over $[0.03,\, 0.1]$ with a step size 0.01 (from yellow to brown) for fixed $T=500$. Bottom left: Extrinsic principal curves (purple) with $T=10$ and $q=0.06$. Bottom right: Influence of varying $T$ in $\left\{20,\, 50,\, 100,\, 200,\, 500 \right\}$ (from violet to red) for a fixed $q=0.06$.}
	\label{fig:sweep} 
\end{figure}

\subsubsection{Simulation on Hypersphere}\label{simul:S^4}

\begin{table}[ht]
	\centering
		\caption{A simulation result of waveform data on $S^4$}
		\begin{tabular}{c|ccc}
		    \hline 
			\multirow{1}{*}{Method}           & $q=0.03$          & $q=0.005$           & $q=0.002$   \rule[-1.0ex]{0pt}{3.2ex}  \\ \hline \hline
			\text{Proposed (extrinsic)}        & 0.211 (0.230)     &  0.179 (0.162)      & 0.199 (0.235)          \rule{0pt}{2.2ex} \\
		    \text{Proposed (intrinsic)}        & 0.729 (0.493)     &  0.267 (0.264)      & 0.150 (0.232)          \rule{0pt}{2.2ex} \\
		    \text{Hauberg}                    & 1.990 (0.815)     &  0.481 (0.215)    & 0.357 (0.251)  \rule[-1.0ex]{0pt}{3.1ex}  \\ \hline
		\end{tabular}%
	\label{table5}
\end{table}

We conduct a simulation study on a hypersphere. To this end, we consider a waveform simulated data on $S^4$ represented by four angular parameters $\varphi_1,\, \varphi_2,\, \varphi_3 \in [0,\, \pi),\, \mbox{and}~ \varphi_4 \in [0,\, 2\pi)$. The explicit representation on $S^d$, $d\ge 3$ is given in \Cref{hypercircle}. Mimicking a waveform dataset on $S^2$ in \Cref{simul:S^2}, we craft simulation sets $(r=1,\, \varphi_1,\, \varphi_2,\, \varphi_3,\, \varphi_4)$ with $\varphi_1=\varphi_2=\varphi_3=\alpha\sin(\varphi_4 f) + \pi/2$ and $0\le \varphi_4 < 2\pi$, frequency $f=2$, and amplitude $\alpha=1/2$. Data points of $n=200$ are generated by sampling $\varphi_4$ uniformly in $[0,\, 2\pi)$ and adding the random noises sampled from $N(0,\, \sigma^2)$ to $\varphi_1$ with $\sigma=0.05$. \Cref{table5} lists the average values of reconstruction errors defined on \Cref{simul:S^2} and their standard deviations over 50 simulation sets for each method with $T=300$. As listed, the proposed principal curves outperform Hauberg's method, recovering the true curves more closely.

\section{Proofs}\label{proofs}
\subsection{Justification of the Projection Steps on $S^d$}\label{proof:proj}
Let $A=(a_{1},\, a_{2},\, \ldots,\, a_{d+1})$,\, $B=(b_{1},\, b_{2},\, \ldots,\, b_{d+1})$, $C=(c_{1},\, c_{2},\, \ldots,\, c_{d+1})\in S^{d} \subset \mathbb{R}^{d+1}$ with $(A\cdot B)^2 \neq 1$. Any point $P$ on $\wideparen{AB}$ is denoted by $P=\mu A + \lambda B$ for $\mu,~\lambda \in \mathbb{R_+}$ with $\mu^2+\lambda^2=1$. If $B\cdot C = C\cdot A=0$, then we have 
\begin{eqnarray*}
d_{Geo}(C,\, P)  =  \arccos\big(C\cdot (\mu A + \lambda B)\big) =  \pi/2.
\end{eqnarray*}
Hence, any points on $\wideparen{AB}$ have the same geodesic distance of $\pi/2$ from $C$. We may assume that $A$, $B$, and $C$ do not satisfy $B\cdot C=C\cdot A=0$. 

The orthogonal complement of $V$ in $\mathbb{R}^{d+1}$, $V^{\perp}$, has a dimension of $d-1$, owing to the fact that $\mathbb{R}^{d+1}=V\oplus V^{\perp}$ with $\oplus$ denoting the direct sum. As a column vector notation, we choose an orthonormal basis for $V$ as $R_1,\, R_2 \in \mathbb{R}^{d+1}$ and an orthonormal basis for $V^{\perp}$ as $R_3,\, R_4,\, \ldots,\, R_{d+1}\in \mathbb{R}^{d+1}$. Define a $(d+1)\times (d+1)$ matrix $R = [R_1, R_2,\, R_3,\, \ldots,\, R_d,\, R_{d+1}]^{T}$. Clearly, $R$ is a rotation (orthogonal) matrix, \textit{i.e.} $R\in O(n)=\left\{X\in M_{d+1,\, d+1}(\mathbb{R}) \ | \ X^{T}X=I \right\}$ and satisfies that $RA = (\tilde{a}_{1},\, \tilde{a}_{2},\, 0,\, 0,\, \ldots,0)$ and $RB = (\tilde{b}_{1},\, \tilde{b}_{2},\, 0,\, 0,\, \ldots,\, 0)$. Let $\tilde{A}=RA$,\, $\tilde{B}=RB$, and $\tilde{C}=RC=(\tilde{c}_1,\,  \tilde{c}_2,\, \ldots,\, \tilde{c}_d, \tilde{c}_{d+1})$. Let $\tilde{V}$ be a two-dimensional vector space spanned by $\tilde{A}$ and $\tilde{B}$, as shown in the right panel of \Cref{fig:hyper}. It follows that $\tilde{V}=\left\{x=(x_1,\, x_2,\, x_3,\, \ldots,\, x_{d+1}) \ | \ x_3=x_4=\cdots=x_{d+1}=0 \right\}$.
We denote the projection of $\tilde{C}$ onto $\tilde{V}$ as $\tilde{C'}=(\tilde{c}_1,\, \tilde{c}_2,\, 0,\, \ldots,\, 0) \in \mathbb{R}^{d+1}$ with $\tilde{c}_1^2+\tilde{c}_2^2 \neq 0$. For any $\tilde{P}=(\tilde{p}_1,\, \tilde{p}_2,\, 0,\, \ldots,\, 0) \in \tilde{V}\cap S^{d}$, it follows that 
\begin{equation}
\label{geodist}
d_{Geo}(\tilde{C},\, \tilde{P})=\arccos(\tilde{c_1}\tilde{p_1}+\tilde{c_2}\tilde{p_2}) \ge \arccos(\sqrt{\tilde{c}_1^2+\tilde{c}_2^2}), 
\end{equation}
where the last inequality holds due to the Cauchy-Schwarz inequality $(\tilde{c_1}\tilde{p_1}+\tilde{c_2}\tilde{p_2})^2 \le (\tilde{c}_1^2+\tilde{c}_2^2)(\tilde{p}_1^2+\tilde{p}_2^2)=(\tilde{c}_1^2+\tilde{c}_2^2)$. The equality of (\ref{geodist}) holds when $(\tilde{p}_1,\, \tilde{p}_2) = t(\tilde{c}_1,\, \tilde{c}_2)$ for some $t\in \mathbb{R}_{+}$. It means that the closest point $\tilde{P}$ on $\tilde{V}\cap S^d$ from $\tilde{C}$ is found by normalizing $\tilde{C'}$ so that it is in $S^d$. Since $R$ is an orthogonal matrix, for any $P \in V \cap S^d$ and $\tilde{P}=RP \in \tilde{V} \cap S^d$, it follows, as a column vector notation, that
\begin{eqnarray*}
    d_{Geo}(\tilde{C},\, \tilde{P}) =  \arccos(\tilde{C}^{T}\tilde{P})=\arccos(C^T R^T RP) 
    =  \arccos(C^T P) 
    =  d_{Geo}(C,\, P).
\end{eqnarray*}
Accordingly, $\mbox{proj}(C)$ is obtained by applying $R^{-1}$ to $\mbox{proj}(\tilde{C})$ that is the projection of $\tilde{C}$ onto $\tilde{V}\cap S^d$. Since the rotation is a rigid motion, it completes the proof.
\begin{figure}[!ht]
  \centering
  \includegraphics[scale=0.3]{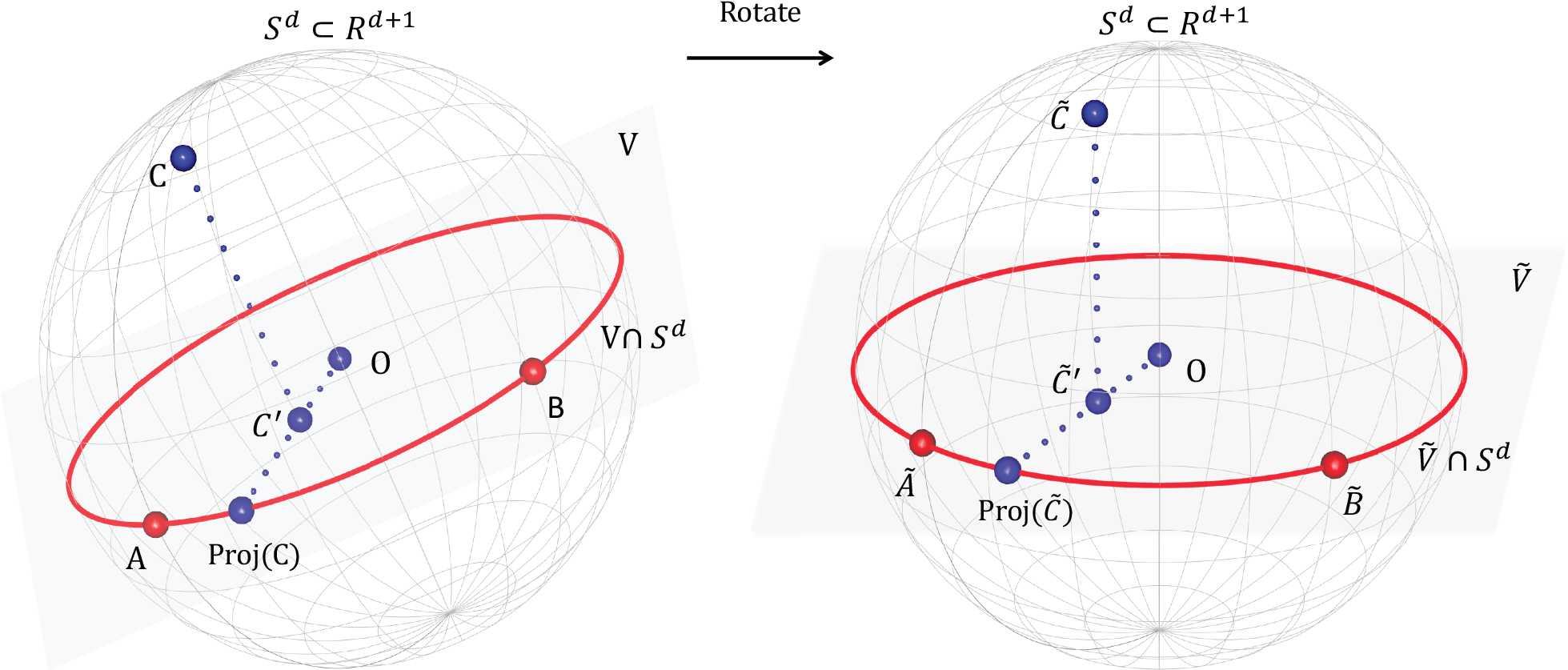}
  \caption{(Left) The projection process of $C$ onto the one-dimensional great circle $V\cap S^{d}$ (red) in a hypersphere $S^{d}\subset \mathbb{R}^{d+1}$: (i) find the projection of $C$ onto $V$, $C'$ and (ii) obtain the projection of $C'$ onto $V\cap S^{d}$, $\mbox{proj}(C)$. (Right) The rotated configuration of the objects.}
  \label{fig:hyper}
\end{figure}

\subsection{Stationarity of Principal Curves}\label{proof:stationarity}
Here we cover a smooth (infinitely differentiable) curve that does not cross on a sphere $\big(\textit{i.e.,}~ \lambda_{1}\ne \lambda_{2} \in [0,\, 1) \Rightarrow f(\lambda_1)\ne f(\lambda_{2})\big)$, including curves with end points and closed curves, which can be both parameterized over interval $[0,\, 1]$ by a constant speed, \textit{i.e.} $f'(\lambda)=s>0$ for any $\lambda\in [0,\, 1]$. In the latter case, a boundary condition is needed; any order partial derivatives of $f$ at end points are the same, \textit{i.e.,} $f^{(k)}(0)=f^{(k)}(1) ~ \mbox{for}~\mbox{all} \ k \ge 0$. For a random vector $X$ on a sphere, we further assume that the curve $f$ are not short enough to cover the support of $X$ well, \textit{i.e.,} $\lambda_{f}(X) \neq 0, 1 ~ \mbox{for}~\mbox{\textit{a.e.}}~X$. For example, any closed curve satisfies the condition $\lambda_f(X)\neq 0,\, 1$ for \textit{a.e.} $X$, meaning that almost all $X$ is orthogonally projected onto the curve $f$. Note that $f$ is smooth on $[0,\, 1]$, \textit{i.e.} $f$ is smoothly extended on $[0,\, 1]$; thus any order its derivatives are continuous on $[0,\, 1]$. Our main purpose is to prove the stationarity of extrinsic, intrinsic principal curves $f: [0,\, 1] \to S^d$ for $d\ge2$ that satisfy the equations (8) and (9) in Theorems 1 and 2. We first consider the 2-sphere, and then extend $d$-spheres, $d\ge2$.

When moving from Euclidean space to spherical surfaces, topological properties such as measurability and continuity are preserved, while the formula using specific distance should be modified. This modification could be obtained by embedding a spherical surface $S^d$ into a $(d+1)$-dimensional Euclidean space. Specifically, we embed a spherical surface as a unit sphere centered at the origin, \textit{i.e.} $S^d \hookrightarrow \mathbb{R}^{d+1}$, and investigate further derivations. When $d=2$, for a smooth curve $f: [0,\, 1] \to S^2$, suppose that $f$ is parameterized by a constant speed with respect to $\lambda$ and is expressed as three-dimensional coordinates $(f(\lambda)_1,\, f(\lambda)_2,\, f(\lambda)_3)$. Then the following lemmas are held. 
\begin{lemma}
	A curve $f$ satisfies $f'(\lambda) \cdot f(\lambda) = 0$ and $f''(\lambda) \cdot f'(\lambda) = 0$, $\forall\lambda \in [0, 1]$, where $\cdot$ denotes inner product in $\mathbb{R}^3$. 
\end{lemma}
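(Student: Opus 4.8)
The plan is to exploit two scalar constraints that the parameterization automatically satisfies and to differentiate each of them exactly once. Since $f$ maps into the unit sphere $S^2 \subset \mathbb{R}^3$, every value $f(\lambda)$ has unit Euclidean norm, so the scalar function $\lambda \mapsto f(\lambda)\cdot f(\lambda)$ is identically equal to $1$ on $[0,1]$. Because $f$ is smooth (indeed $C^\infty$ on a neighborhood of $[0,1]$, as recorded in the Appendix preamble), I may differentiate this identity in $\lambda$; the product rule gives $2\,f'(\lambda)\cdot f(\lambda)=0$, which is the first claim $f'(\lambda)\cdot f(\lambda)=0$ for every $\lambda$. Geometrically this just says the velocity of a curve confined to the sphere is tangent to the sphere.

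For the second identity I would invoke the constant-speed hypothesis. Constant speed means precisely that $\|f'(\lambda)\|$ is independent of $\lambda$, equivalently that $\lambda \mapsto f'(\lambda)\cdot f'(\lambda)$ is a constant function on $[0,1]$. Differentiating this constant and again applying the product rule yields $2\,f''(\lambda)\cdot f'(\lambda)=0$, hence $f''(\lambda)\cdot f'(\lambda)=0$, as required. Here it is worth noting, in passing, that the unit-speed reparameterization used in the algorithms is in the intrinsic geodesic metric, which coincides with the Euclidean speed $\|f'(\lambda)\|$ of the embedded curve because $S^2$ carries the metric induced from $\mathbb{R}^3$; thus ``constant speed'' may safely be read as constancy of the Euclidean quantity $f'\cdot f'$.

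The only points that demand any care — and they are the closest thing to an obstacle, which is to say essentially none — are regularity justifications rather than algebra. I must use that $f$ is smoothly extended to an open interval containing $[0,1]$ (guaranteed by the standing assumptions, together with the matching conditions $f^{(k)}(0)=f^{(k)}(1)$ in the closed-curve case) so that $f''$ exists and one-sided differentiation at the endpoints $\lambda=0,1$ is legitimate. Granting these, both identities follow immediately from differentiating a constant, so no genuine difficulty remains.
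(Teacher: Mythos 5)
Your proof is correct and follows exactly the paper's own route: the paper likewise obtains both identities by differentiating the constraints $f(\lambda)\cdot f(\lambda)=1$ and $f'(\lambda)\cdot f'(\lambda)=\mathrm{constant}$. Your additional remarks on the smooth extension to an open interval and on the equivalence of geodesic and Euclidean constant speed are harmless elaborations of hypotheses the paper takes for granted.
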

\begin{proof}
It is directly obtained from $f(\lambda)\cdot f(\lambda)=1$ and $f'(\lambda)\cdot f'(\lambda) = constant$.
\end{proof}

\begin{lemma}
	Suppose that $f(\lambda)$ and $x$ are expressed as three-dimensional vectors. Then, it follows that $d_{Geo}(f(\lambda),\, x)=\arccos(f(\lambda)\cdot x)$, where $\arccos(f(\lambda)\cdot x)$ is the angle between $f(\lambda)$ and $x$. Then, $\frac{df}{d\lambda} (\lambda_f(0,\, 0,\, 1))_3 =0$. Thus, it follows that $\frac{df}{d\lambda} (\lambda_f(0,\, 0,\, 1)) = a\big(-f(\lambda_f(0,\, 0,\, 1))_2, f(\lambda_f(0,\, 0,\, 1))_1,\, 0\big)$ for some $a \in \mathbb{R}$. Note that $\lambda_f(x) $ denotes the projection index of point $x$ to the curve $f$. 
\end{lemma}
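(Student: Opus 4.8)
The plan is to prove the three assertions in the statement one at a time, each reducing to an elementary first-order condition. Write $p = (0,0,1)$ and abbreviate $\lambda_0 := \lambda_f(0,0,1)$. For the distance formula I would invoke the standard fact that on the unit sphere centred at the origin the minimal geodesic between two points is an arc of a great circle, whose length equals the angle it subtends at the centre; for unit vectors $f(\lambda)$ and $x$ this angle is $\arccos\big(f(\lambda)\cdot x\big)$ since $f(\lambda)\cdot x = \cos\theta$. This is exactly the claimed identity $d_{Geo}(f(\lambda),x)=\arccos\big(f(\lambda)\cdot x\big)$.

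Next, specializing to $x = p$ gives $d_{Geo}\big(f(\lambda),p\big) = \arccos\big(f(\lambda)_3\big)$, because $p\cdot f(\lambda) = f(\lambda)_3$. The key simplification is that $\arccos$ is strictly decreasing on $[-1,1]$, so $\lambda_0$ minimizes the geodesic distance to $p$ if and only if it maximizes the third coordinate $\lambda \mapsto f(\lambda)_3$. Reformulating the projection in this way avoids the singularity of $\arccos$ at $\pm 1$ and leaves a clean differentiable objective. Since $\lambda_0$ is an interior point of $[0,1]$ (by the standing assumption that projection indices lie in $(0,1)$) and $f$ is smooth, the first-order condition for an interior maximum of $\lambda \mapsto f(\lambda)_3$ yields $\frac{df}{d\lambda}(\lambda_0)_3 = 0$, which is the second assertion.

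Finally I would combine this with Lemma 1, which gives $f'(\lambda_0)\cdot f(\lambda_0)=0$. Writing $f'(\lambda_0) = \big(f_1', f_2', 0\big)$ — the third entry being $0$ by the previous step — and $f(\lambda_0) = (f_1, f_2, f_3)$, the orthogonality relation collapses to $f_1' f_1 + f_2' f_2 = 0$. Hence the planar vector $(f_1', f_2')$ is orthogonal to $(f_1, f_2)$ and is therefore a scalar multiple of its in-plane perpendicular $(-f_2, f_1)$; naming this scalar $a$ gives $\frac{df}{d\lambda}(\lambda_0) = a\big(-f(\lambda_0)_2, f(\lambda_0)_1, 0\big)$, the third assertion.

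The only delicate point is the degenerate case where the projection point $f(\lambda_0)$ is a pole, i.e. $\big(f_1, f_2\big) = (0,0)$: then the in-plane perpendicular is undefined and $a$ is not determined. This corresponds to the curve meeting $\pm p$ at its closest point, a non-generic situation excluded by the standing assumptions that $p$ lies off the curve and is orthogonally projected onto it; away from this case $(f_1, f_2)\neq 0$ and the scalar $a$ is well defined. I expect this edge case, rather than any of the first-order computations, to be the part requiring genuine care.
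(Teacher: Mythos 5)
Your proof is correct and takes essentially the same route as the paper's, whose entire argument is that $d_{Geo}(f(\lambda),p)=\arccos(f(\lambda)_3)$ for $p=(0,0,1)$ is minimized at $\lambda_f(p)$, so the claims follow ``by differentiation'' (with Lemma 1 used implicitly for the final assertion, as you use it explicitly). Your refinements --- recasting the minimization as maximizing $\lambda \mapsto f(\lambda)_3$ to dodge the non-differentiability of $\arccos$ at $\pm 1$, and flagging the degenerate case $(f_1,f_2)=(0,0)$ where the last assertion genuinely breaks down --- are added care within the same argument, not a different approach, and in fact patch small gaps the paper's terse proof leaves open.
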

\begin{proof}
	For $p = (0,\, 0,\, 1)$, it follows that $d_{Geo}(f(\lambda),\, p)=\arccos(f(\lambda)_3)$. From the assumption that $f(\lambda)$ is a smooth curve and the fact that $d_{Geo}(f(\lambda),\, p)$ has the minimum at $\lambda_f(0,\, 0,\, 1)$,  the remaining part of the lemma follows by differentiation with respect to $\lambda$. 
\end{proof}

\begin{lemma} (Spherical law of cosines) 
	Let $u$, $v$, $w$ be points on a sphere, and $a$, $b$ and $c$ denote  $d_{Geo}(w,\, u)$, $d_{Geo}(w,\, v)$ and $d_{Geo}(u,\, v)$, respectively. If $C$ is the angle between $a$ and $b$, \textit{i.e.,} the angle of the corner opposite $c$, then, 
	$
	\cos c=\cos a \cos b + \sin a \sin b \cos C. 
	$
	Further, with three-dimensional vectors  $u$, $v$, $w$, it follows that 
	$
	\sin a \sin b \cos C=(w\times u)\cdot(w\times v), 
	$      
	where $\times$ denotes cross product in $\mathbb{R}^3$. 
\end{lemma}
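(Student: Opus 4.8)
The plan is to work entirely in the embedding $\mathbb{R}^3$, treating $u$, $v$, $w$ as unit vectors, and to establish the vector identity first, from which the law of cosines falls out by a one-line rearrangement. By Lemma 2, on the unit sphere the geodesic distances are exactly the angles between the corresponding vectors, so $\cos a = w\cdot u$, $\cos b = w\cdot v$, and $\cos c = u\cdot v$.

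First I would expand $(w\times u)\cdot(w\times v)$ using the Binet--Cauchy (Lagrange) identity $(w\times u)\cdot(w\times v) = (w\cdot w)(u\cdot v) - (w\cdot v)(u\cdot w)$. Since $w\cdot w = 1$, this equals $u\cdot v - (w\cdot u)(w\cdot v) = \cos c - \cos a\cos b$. This is a purely algebraic step and poses no difficulty.

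Next I would identify this same quantity with $\sin a\sin b\cos C$. The magnitudes are immediate: $|w\times u| = \sin a$ and $|w\times v| = \sin b$, because the angle between $w$ and $u$ (resp. $v$) is $a$ (resp. $b$). The crux is to show that the angle between the vectors $w\times u$ and $w\times v$ is precisely the spherical angle $C$ at the vertex $w$. For this I would introduce the tangent directions of the two geodesics emanating from $w$, namely $t_u = u - (w\cdot u)w$ and $t_v = v - (w\cdot v)w$, whose mutual angle is by definition $C$. Observing that $w\times u = w\times t_u$ and $w\times v = w\times t_v$, and that taking the cross product with the unit vector $w$ acts as a rotation by $\pi/2$ within the tangent plane at $w$ (hence preserves angles between tangent vectors), I conclude that the angle between $w\times u$ and $w\times v$ coincides with the angle between $t_u$ and $t_v$, which is $C$. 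Therefore $(w\times u)\cdot(w\times v) = \sin a\sin b\cos C$, giving the asserted vector form.

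Equating the two evaluations of $(w\times u)\cdot(w\times v)$ yields $\sin a\sin b\cos C = \cos c - \cos a\cos b$, that is, $\cos c = \cos a\cos b + \sin a\sin b\cos C$, the spherical law of cosines. The only genuinely nontrivial point, and the one I would treat most carefully, is the geometric identification of the angle between the two cross products with the spherical angle $C$; passing through $w\times u = w\times t_u$ and the angle-preserving rotation makes the sign come out correctly and automatically, avoiding a spurious replacement of $C$ by $\pi - C$.
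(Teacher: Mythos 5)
Your proposal is correct, but there is nothing in the paper to compare it against: the paper states this lemma without proof, treating the spherical law of cosines as a classical fact (it is invoked, e.g., in deriving the geodesic-distance formula for the principal circle and in bounding $Z_\epsilon(X)$ in Theorem 1). Your argument is a clean, self-contained way to supply the missing proof. The two-step structure is sound: the Lagrange identity gives $(w\times u)\cdot(w\times v)=(w\cdot w)(u\cdot v)-(w\cdot v)(u\cdot w)=\cos c-\cos a\cos b$, and the identification of the angle between $w\times u$ and $w\times v$ with the spherical angle $C$ is handled correctly — writing $w\times u = w\times t_u$ with $t_u=u-(w\cdot u)w$ and observing that $x\mapsto w\times x$ is a rotation by $\pi/2$ of the tangent plane at $w$ (so it preserves angles between tangent vectors) is exactly the right way to avoid the sign ambiguity you mention; one could equally finish by applying the Lagrange identity once more to $t_u,t_v$ directly, since $(w\times t_u)\cdot(w\times t_v)=t_u\cdot t_v$ when $t_u,t_v\perp w$. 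The only caveat worth recording is the degenerate configurations $u=\pm w$ or $v=\pm w$, where $t_u$ or $t_v$ vanishes and $C$ is undefined; there both sides of the vector identity are zero and the law of cosines reduces to $\cos c=\cos a\cos b$, so the statement still holds, but your proof should exclude or dispose of this case explicitly.
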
 

The following property can be obtained from Definition 2.
\begin{proposition}
	Under the same conditions in Definition 2, $f(\epsilon, \lambda):=(f+\epsilon g)(\lambda)$ is smooth on $[-1,\, 1] \times [0,\, 1]$. Hence, for each $\epsilon \in [-1,\, 1]$, $f+\epsilon g:[0,\, 1] \rightarrow S^d$ is a smooth curve on $S^{d}$ and $\lim\limits_{\epsilon \to 0}(f+\epsilon g)(\lambda) = f(\lambda)~ \mbox{for}~\lambda \in [0,\, 1]$.
	\begin{proof}
		For simplicity, we denote $f+g$ as $h$. Let $R_{a,\, b}(\theta)$ be a rotation matrix that rotates points on $S^d$ by $\theta$ in the direction along the geodesic from $a$ to $b$ with $a,\, b \in S^d\subset \mathbb{R}^{d+1}$ satisfying $ a\neq -b \in \mathbb{R}^{d+1}$ and $\theta$ ranging over $[0,\, \pi)$. Then, it has a closed form; formally, as a column vector notation, $R_{a,\,b}(\theta)= I_{d+1} + \sin(\theta)B + (\cos(\theta)-1)(bb^T + cc^T)$, where if $a \neq b$, then $c=\big(a-b(b^T a)\big)/\norm{a-b(b^T a)}$, otherwise $c=0$, and $B=bc^T-cb^T$. For more details, refer to the Section 8.1 in \cite{Jung2012}. Hence, we obtain $f(\epsilon,\, \lambda):=(f+\epsilon g)(\lambda) =R_{f(\lambda),\, h(\lambda)}(\theta)f(\lambda)$, where $\theta=\epsilon\arccos \big(f(\lambda)\cdot h(\lambda)\big)$. Thus, $(f+\epsilon g)(\lambda)\big(=f(\epsilon,\, \lambda)\big)$ is smooth on $[-1,\, 1] \times [0,\, 1]$ since all functions $f$, $h$, $R$, and $\theta$ are smooth. Therefore, for a fixed $\epsilon \in [0,\, 1]$, the smoothness of $(f+\epsilon g)(\lambda)$ with respect to $\lambda$ also follows.  Moreover, the last equality is directly followed by the definition of $f+\epsilon g$.
		\end{proof}
\end{proposition}

In Euclidean space, we have $g(\lambda)= \frac{\partial}{\partial \epsilon} f_{\epsilon}(\lambda)$, where 
$f_{\epsilon}(\lambda) := f+\epsilon g$. From this fact, the magnitude of perturbation $\norm{h-f}$ is defined as $g(\epsilon_0,\, \lambda) := \frac{\partial}{\partial \epsilon} \big|_{\epsilon=\epsilon_0} f_{\epsilon}(\lambda)$, $\norm{g(\lambda)} := d_{Geo}\big(f(\lambda),\, g(\lambda)\big) = \ \big|g(\epsilon_0,\, \lambda) \big|$, $\norm{g}:=\max_{\lambda} \norm{g(\lambda)}$, and finally $\norm{h-f} := \norm{g}\ne \pi.$ The boundedness of $\norm{g}$ guarantees that $\epsilon$-internal division of the geodesic from $f$ to $h$ converges to $f$ uniformly on $\lambda \in [0,\, 1]$ as $\epsilon$ goes to 0. Notice that from the compactness of the unit sphere, $\norm{h-f}$ is inherently equal or less than $\pi$; thus, the assumption of $\norm{h-f} \ne \pi$ implies that $\norm{h-f} < \pi $. 

Moreover, the norm of derivative of perturbation $\norm{(h-f)'}$ is defined as $g'(\epsilon,\, \lambda_0) := \frac{\partial}{\partial \lambda} \big|_{\lambda=\lambda_{0}} g(\epsilon,\, \lambda)$, $\norm{g'(\lambda_0)} := \max_{\epsilon} \norm{g'(\epsilon,\, \lambda_0)}$, $\norm{g'} := \max_{\lambda_0} \norm{g'(\lambda_0)}$, and finally $\norm{(h-f)'} := \norm{g^\prime}$. 

Let $x$ be a point on a sphere. By the continuity of $f$ and the compactness of the domain set, it follows that $\inf_{\lambda \in [0,\, 1]} d_{Geo}\big(x,\, f(\lambda)\big)$ can be attained. Let $d_{Geo}(x,\, f)$ denote the geodesic distance between $x$ and $f$, \textit{i.e.,} $d_{Geo}(x,\, f):=\min_{\lambda \in [0,\, 1]} d_{Geo}\big(x,\, f(\lambda)\big)$. By the continuity of $f$ again, $\{\lambda \in [0,\, 1] \ | \ d_{Geo}\big(x,\, f(\lambda)\big)=d_{Geo}(x,\, f)\}$ is closed and therefore compact. Thus, the projection indices $\lambda_f(x) = \inf\{\lambda \ | \ d_{Geo}\big(x,\, f(\lambda)\big)=d_{Geo}(x,\, f)\}$ and $\lambda_{f+\epsilon g}$ are well defined. The latter holds due to the fact that $f+\epsilon g$ is a continuous curve by Proposition 1. When $\mbox{card}\{\lambda \ | \ d_{Geo}\big(x,\, f(\lambda)\big)=d_{Geo}(x,\, f)\}>1$, the point $x$ is called an ambiguity point of $f$. The set of ambiguity points of the smooth curve has spherical measure 0; thus, the ambiguity points are negligible when calculating the expected value.

The next topological properties of the principal curves established in Euclidean space of \cite{Hastie} are still valid in spherical surfaces.
\begin{proposition} (Measurability of index function) 
	For a continuous curve $f$ on $S^d$, the index function $\lambda_f:S^d \rightarrow [0,\, 1]$ by $x \mapsto \lambda_f(x)$ is measurable.
	\label{measure}
	\begin{proof}
	It follows that of Theorem 4.1 in \cite{hastie1984principal}. It is enough to show that, for any constant $a \in [0,\, 1]$, $\left\{x\in S^d \ | \ \lambda_{f}(x)\ge a \right\}$ is a measurable set on $S^d$. If $a=0$, then the set is $S^d$; thus, we may assume that $c \in (0,\, 1]$. By the definition of the projection index $\lambda_f(x) = \inf\{\lambda \ | \ d_{Geo}\big(x,\, f(\lambda)\big)=d_{Geo}(x,\, f)\}$, the condition $\lambda_f(x)\ge a$ is equivalent to the property that for any $\lambda_1\in[0,\, a)$, there exists $\lambda_2\in[a,\, 1]$ such that $d_{Geo}\big(x,\, f(\lambda_2)\big) < d_{Geo}\big(x,\, f(\lambda_1)\big)$. Technically, we aim to prove that 
	
\noindent	$\lambda_f(x)\ge a
	 \xLeftrightarrow[]{(1)}$
	 for any $\lambda_1\in[0,\, a)$, there exists $\lambda_2\in[a,\, 1]\cap \mathbb{Q}$ such that $d_{Geo}\big(x,\, f(\lambda_2)\big) < d_{Geo}\big(x,\, f(\lambda_1)\big)\xLeftrightarrow{(2)}$ 
	 for any $\lambda_1\in[0,\, a]\cap\mathbb{Q}$, there exists $\lambda_2\in[a,\, 1]\cap\mathbb{Q}$ such that $d_{Geo}\big(x,\, f(\lambda_2)\big) < d_{Geo}\big(x,\, f(\lambda_1)\big)$. Here $\mathbb{Q}$ is the set of rational numbers. If $(1)$ and $(2)$ are verified, we obtain that
	 \begin{eqnarray*}
	     \left\{\lambda_{f}(x)\ge a \right\}	     = \bigcap_{\lambda_1 \in [0,\, a) \cap \mathbb{Q}} \bigcup_{\lambda_2 \in [a,\, 1] \cap \mathbb{Q}} \left\{d_{Geo}\big(x,\, f(\lambda_2)\big) < d_{Geo}\big(x,\, f(\lambda_1)\big) \right\}. \hspace{0cm}
	 \end{eqnarray*}Each set is measurable on $S^d$ because for any $\lambda_1 ~\mbox{and}~ \lambda_2$, the function $x \mapsto d_{Geo}\big(x,\, f(\lambda_1)\big) - d_{Geo}\big(x,\, f(\lambda_2)\big)$ is continuous. Accordingly, $\left\{\lambda_f(x)\ge a\right\}$ is measurable due to the fact that countable unions and intersections of measurable sets are also measurable. It completes the proof. \\\\
	 \textit{Proof of (1).} For any $\lambda_1\in[0,\, a)$, there is $\lambda_2\in [a,\, 1]$ such that $d_{Geo}\big(x,\, f(\lambda_2)\big) < d_{Geo}\big(x,\, f(\lambda_1)\big)$. Since $\mathbb{Q}$ is dense in $\mathbb{R}$ and $f$ is continuous, there is $\lambda_2'\in[a,\, 1]\cap\mathbb{Q}$ such that $d_{Geo}\big(x,\, f(\lambda_2')\big)<d_{Geo}\big(x,\, f(\lambda_1)\big)$, which completes $(1)$. \\\\
	 \textit{Proof of (2).} We want to show that 
	 \begin{eqnarray*} 
	 D &:=&\bigcap_{\lambda_1\in [0,\, a)}\bigcup_{\lambda_2\in [a,\, 1] \cap \mathbb{Q}} \left\{d_{Geo}\big(x,\, f(\lambda_2)\big) < d_{Geo}\big(x,\, f(\lambda_1)\big) \right\} \\ &=&\bigcap_{\lambda_1 \in [0,\, a)\cap \mathbb{Q}} \bigcup_{\lambda_2 \in [a,\, 1]\cap \mathbb{Q}} \left\{d_{Geo}\big(x,\, f(\lambda_2)\big) < d_{Geo}\big(x,\, f(\lambda_1)\big) \right\} \\
	 &=:& E.
	 \end{eqnarray*}The inclusion $D\subseteq E$ is clear. If $x\in E$, for any $\lambda_1\in[0,\, a)\cap\mathbb{Q}$, there is $\lambda_2\in[a, \,1]\cap\mathbb{Q}$ such that $d_{Geo}(x,\, f(\lambda_2)) < d_{Geo}(x,\, f(\lambda_1))$. For such $\lambda_1$ and $\lambda_2$, owing to the continuity of $f$, there is $\Tilde{\lambda_1}\in(\lambda_1,\, a)\cap\mathbb{Q}$ such that $\forall\lambda\in[\lambda_1,\Tilde{\lambda_1})\cap\mathbb{Q} \Rightarrow d_{Geo}(x,\, f(\lambda_2)) < d_{Geo}(x,\, f(\lambda))$. That is,  
	 \begin{eqnarray*}
	 x & \in & \bigcap_{\lambda\in [\lambda_1,\Tilde{\lambda_1}) \cap \mathbb{Q}} \bigcup_{\lambda_3 \in [a,\, 1] \cap \mathbb{Q}} \left\{d_{Geo}\big(x,\, f(\lambda_3)\big) < d_{Geo}\big(x,\, f(\lambda)\big) \right\} \\
	 & =: &F_{\lambda_1,\Tilde{\lambda_1}}. 
	 \end{eqnarray*}
	 Note that $\Tilde{\lambda_1}$ is automatically chosen for each $\lambda_1 \in [0,\, a)\cap\mathbb{Q}$. Since the above derivation is satisfied for any $\lambda_1\in[0,\, a]\cap\mathbb{Q}$, it follows that
	 \begin{eqnarray*}
	 x &\in&\bigcap_{\lambda_1\in[0,\, a)\cap\mathbb{Q},\, \lambda_1 < \Tilde{\lambda_1}}F_{\lambda_1,\, \Tilde{\lambda_1}} \hspace{5cm} \\
	 &=& \bigcap_{\lambda \in [0,\, a)} \bigcup_{\lambda_3 \in [a,\, 1] \cap \mathbb{Q}} \left\{d_{Geo}\big(x,\, f(\lambda_3)\big) < d_{Geo}\big(x,\, f(\lambda)\big) \right\} \\
	 &=& D.
	 \end{eqnarray*}
	 Hence, we obtain $E\subseteq D$; thus, $D=E$. 
	\end{proof}
\end{proposition}
According to \Cref{measure}, $\lambda_{f}(X)$ is a random variable with respect to $X$ as long as $X$ is a random vector on $S^d$ for $d\ge 2$. Thus, a conditional expectation on $\lambda_{f}(X)$ is feasible.

\begin{proposition} (Continuity of projection index under perturbation) 
	If $x$ is not an ambiguity point for continuous curve $f$, then $\lim\limits_{\epsilon \to 0}\lambda_{f+\epsilon g}(x) = \lambda_{f}(x)$. 
	\label{conti}
\begin{proof}
The proof follows the line of the proof of Lemma 4.1 in \cite{Hastie}. It is enough to show that, for any small $\eta >0$  there exists $\delta >0$ such that $|\epsilon| < \delta$ implies  $|\lambda_{f_\epsilon}(x)- \lambda_{f}(x)| < \eta$. 
Define a set $C:= [0,\, 1] \cap (\lambda_{f}(x)-\eta,\,  \lambda_{f}(x) + \eta)^c$ and $d_{C}:= \inf_{\lambda \in C}d_{Geo}\big(x,\, f(\lambda)\big)>d_{Geo}\big(x,\, f(\lambda_{f}(x))\big)$ where $d_C$ is achieved by some $\lambda \in C$ from the compactness of $C$, and the last inequality holds since $x$ is not an ambiguity point of $f$. Choose $\delta=\frac{1}{3}\big[d_{C} - d_{Geo}\big(x,\, f(\lambda_{f}(x))\big)\big]>0$. Then if $|\epsilon| < \delta$, it follows that 
\begin{eqnarray*}
\inf_{\lambda\in C} d_{Geo}\big(x,\, f_{\epsilon}(\lambda)\big) - d_{Geo}\big(x,\, f_{\epsilon}(\lambda_{f}(x))\big) 
&\ge& \inf_{\lambda\in C} d_{Geo}\big(x,\, f(\lambda)\big)
- d_{Geo}\big(f(\lambda),\, f_{\epsilon}(\lambda)\big) \\
& & - d_{Geo}\big(x,\, f(\lambda_{f}(x))\big) - d_{Geo}\big(f(\lambda_{f}(x)),\, f_{\epsilon}(\lambda_{f}(x))\big) \\
&\ge& d_C - d_{Geo}\big(x,\, f(\lambda_{f}(x))\big) - \delta - \delta \\
&=& 3\delta-2\delta > 0 
\end{eqnarray*}
By the definition of $\lambda_{f_\epsilon}(x)$, we obtain $\lambda_{f_{\epsilon}}(x) \notin C$; thus, $| \lambda_{f}(x)-\lambda_{f_{\epsilon}}(x)| < \eta$. It completes the proof.
\end{proof}
\end{proposition}
In the proof of \Cref{conti}, it is possible to apply the triangle inequality on a sphere because the sphere is a metric space equipped with its geodesic distance. The following proposition is an important tool for verifying Theorem 2. 
\begin{proposition} (Uniform continuity of projection index under perturbation) 
	$\lim\limits_{\epsilon \to 0}\lambda_{f+\epsilon g}(x) = \lambda_{f}(x)$ uniformly on the set of non-ambiguity points of $f$. That is, for every $\eta > 0$, there exists $\delta > 0$ such that for any non-ambiguity points $x$, if $|\epsilon| < \delta$, then $|\lambda_{f_{\epsilon}}(x)-\lambda_{f}(x)| < \eta$. 
\end{proposition}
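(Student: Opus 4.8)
The plan is to argue by contradiction, leveraging the sequential compactness of both the sphere $S^2$ and the parameter interval $[0,1]$, together with the uniform convergence $(f+\epsilon g)(\lambda)\to f(\lambda)$ supplied by Proposition 1. Suppose the assertion fails. Then there are $\eta>0$, a sequence $\epsilon_n\to 0$, and non-ambiguity points $x_n$ of $f$ with $|\lambda_{f+\epsilon_n g}(x_n)-\lambda_f(x_n)|\ge \eta$ for every $n$. Writing $a_n=\lambda_f(x_n)$ and $b_n=\lambda_{f+\epsilon_n g}(x_n)$, compactness lets me extract a subsequence along which $x_n\to x_*\in S^2$, $a_n\to a$, and $b_n\to b$, with $|a-b|\ge\eta$ and in particular $a\ne b$.

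Next I would identify $a$ and $b$ as genuine minimizers of the unperturbed distance $\lambda\mapsto d_{Geo}(x_*,f(\lambda))$. For $a$, since $x_n$ is non-ambiguity we have $d_{Geo}(x_n,f(a_n))\le d_{Geo}(x_n,f(\lambda))$ for all $\lambda$, and letting $n\to\infty$ with the joint continuity of $d_{Geo}$ gives $d_{Geo}(x_*,f(a))\le d_{Geo}(x_*,f(\lambda))$ for all $\lambda$. For $b$ the same computation applies to the perturbed curve: from $d_{Geo}(x_n,(f+\epsilon_n g)(b_n))\le d_{Geo}(x_n,(f+\epsilon_n g)(\lambda))$ combined with the uniform bound $d_{Geo}\big((f+\epsilon_n g)(\lambda),f(\lambda)\big)=|\epsilon_n|\,d_{Geo}\big(f(\lambda),(f+g)(\lambda)\big)\le|\epsilon_n|\,\norm{g}$ coming directly from the $\epsilon$-internal-division definition, a passage to the limit shows that $b$ minimizes $d_{Geo}(x_*,f(\cdot))$ too. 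Hence $x_*$ carries two distinct minimizing indices $a\ne b$.

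The hard part will be to convert this into a contradiction, since so far the argument only shows that the common limit $x_*$ behaves like an ambiguity point, while the non-ambiguity of each individual $x_n$ has not yet been contradicted. To close the gap I would establish a \emph{uniform separation estimate}: there exists $m>0$ such that for every non-ambiguity point $x$ and every $\lambda$ with $|\lambda-\lambda_f(x)|\ge\eta$ one has $d_{Geo}(x,f(\lambda))\ge d_{Geo}(x,f(\lambda_f(x)))+m$. Granting this, choosing $\delta$ small enough that the uniform perturbation bound $|\epsilon|\,\norm{g}$ stays below $m/3$ makes the perturbed distance at $\lambda_f(x)$ fall below $d(x,f)+m/3$ while outside the $\eta$-window it exceeds $d(x,f)+2m/3$; hence for $|\epsilon|<\delta$ the perturbed minimizer $\lambda_{f+\epsilon g}(x)$ is confined to the $\eta$-window, contradicting $b_n-a_n\ge\eta$. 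The estimate itself I would assemble from a local quadratic lower bound at each minimizer, obtained in the spirit of the transversality condition $x\in B(\zeta)$ of Theorem 2 from the non-degeneracy of $\lambda\mapsto d_{Geo}(x,f(\lambda))$ encoded by $f''\cdot x$ together with Lemma 1, and then patched over $[0,1]$ using its compactness and the non-self-intersection of $f$.

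The genuinely delicate point, and the one I expect to consume most of the effort, is uniformity of $m$ as $x$ approaches the cut locus of $f$, i.e.\ the ambiguity set: there two competing local minima coalesce in value while remaining $\eta$-apart in the parameter, so a naive pointwise separation constant degenerates. Controlling this is exactly where the standing hypotheses must be used quantitatively — the smoothness of $f$, the closed-curve boundary conventions of the Appendix, and the constraint $\norm{g'}\le 1$ (which bounds the length and hence the reparameterization distortion of $f+\epsilon g$) — to keep $m$ bounded away from zero. I would therefore treat establishing this uniform-gap bound near the ambiguity locus as the crux of the proof, with the compactness extraction and the perturbation comparison above serving as the routine scaffolding around it.
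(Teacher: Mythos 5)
The paper offers nothing to align with step by step here: it explicitly omits the proof of this proposition, saying only that it is ``similar to that of Proposition 3'' and that boundedness of $|f''|$ is required. So your proposal must stand on its own, and on its own terms it has a genuine gap, one you half-diagnose yourself. Your scaffolding is correct: the compactness extraction $x_n \to x_*$, $a_n \to a$, $b_n \to b$ with $|a-b|\ge\eta$, the bound $d_{Geo}\big((f+\epsilon g)(\lambda), f(\lambda)\big) \le |\epsilon|\,\norm{g}$ from Definition 2, and the conclusion that $a\ne b$ are both minimizers of $d_{Geo}(x_*, f(\cdot))$, i.e.\ that $x_*$ is an ambiguity point; and you rightly note this contradicts nothing, since the non-ambiguity set is not closed. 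The failure is in the tool you propose to close the argument: the uniform separation estimate is \emph{false} on the set of all non-ambiguity points, and no quantitative use of smoothness or of $\norm{g'}\le 1$ can rescue it. Take $f$ to be the small circle $\phi=\pi/4$. A point $x$ at geodesic distance $t$ from the circle's pole is a non-ambiguity point, but its excess $\min_{|\lambda-\lambda_f(x)|\ge\eta} d_{Geo}\big(x,f(\lambda)\big) - d(x,f)$ is of order $t$, hence no uniform $m>0$ exists. Worse, the same configuration shows that the unrestricted uniformity you are trying to prove is itself problematic: let $f+g$ be the same circle tilted by a fixed small angle (an admissible perturbation with $\norm{g}<\pi$, $\norm{g'}\le 1$). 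For each $\epsilon$, a point $x$ at distance $t \ll \epsilon\norm{g}$ from the pole, placed on the side toward which the pole moves, sees its azimuth relative to the perturbed pole flip by $\pi$, so $\lambda_{f_\epsilon}(x)$ jumps by essentially half the parameter period while $x$ remains a non-ambiguity point of $f$. Thus no $\delta$ works uniformly over \emph{all} non-ambiguity points.

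The repair is to restrict the set, not to strengthen the estimate. If you prove uniformity on any set $S$ whose closure consists of non-ambiguity points (equivalently, $S$ bounded away from the ambiguity set), then your first two steps already finish the proof outright: the limit $x_*\in\overline{S}$ is then non-ambiguity yet carries two distinct minimizing indices $a\ne b$, an immediate contradiction — and the hard separation lemma becomes unnecessary. This restricted reading is also the one the paper actually needs: in Lemma 5 and Theorems 2--3 the uniformity is invoked only for $X \in A \cap B(\zeta)$, where the transversality condition $|f''(\lambda_f(x))\cdot x| > \zeta$ excises exactly the focal degeneration in the counterexample above (for the small circle, $f''(\lambda)\cdot x = 0$ at the poles), which is presumably the content of the paper's otherwise cryptic remark that bounded $|f''|$ is required. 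Your closing paragraph correctly identifies the coalescence of competing minima near the cut locus as the crux; the error is the hope that the standing hypotheses keep $m$ bounded below there, when in fact the proposition should be read (and proved) on a set that excludes that region.
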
 
For guaranteeing uniform continuity of projection index, it is required that $|f''|$ is bounded. It is directly followed by smoothness of $f$ and compactness of $[0,\, 1]$. A proof is similar to that of Proposition 3; thus, we omit the proof.%, which can be provided on request.
\begin{proposition}
	Spherical measure of the set of ambiguity points of smooth curve $f$ is 0.
\end{proposition}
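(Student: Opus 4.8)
The plan is to show that the set of ambiguity points is contained in the set where the geodesic distance-to-curve function fails to be differentiable, and then to invoke Rademacher's theorem. Define $u(x) := d(x,f) = \min_{\lambda\in[0,1]} d_{Geo}\big(x, f(\lambda)\big)$. By the triangle inequality on the sphere, $u$ is $1$-Lipschitz with respect to $d_{Geo}$. Covering $S^2$ by finitely many charts in which $d_{Geo}$ is bi-Lipschitz equivalent to the Euclidean metric, $u$ becomes Lipschitz in each chart, so Rademacher's theorem applies and $u$ is differentiable (with respect to the spherical measure) at almost every point. Hence it suffices to prove the containment: every ambiguity point is a point of non-differentiability of $u$.

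The \emph{key lemma} is this: if $x \notin f([0,1])$, $0 < u(x) < \pi$, and $u$ is differentiable at $x$, then $x$ has a unique nearest point on $f$. To see this, let $p=f(\lambda_0)$ be any nearest point and let $\gamma$ be the unit-speed minimizing geodesic from $p$ to $x$, with $v$ its unit tangent vector at $x$ pointing away from $p$. Moving from $x$ back toward $p$ gives $u\big(\exp_x(-sv)\big) \le d_{Geo}\big(\exp_x(-sv),p\big) = u(x)-s$ for small $s>0$, so the one-sided directional derivative of $u$ at $x$ in the direction $-v$ is at most $-1$; combined with the Lipschitz bound $\norm{\nabla u(x)} \le 1$, this forces, via Cauchy--Schwarz, $\nabla u(x) = v$. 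Thus the unit tangent $v$ is uniquely determined by $\nabla u(x)$ and does not depend on the choice of nearest point $p$. Since $0<u(x)<\pi$, the exponential map recovers $p = \exp_x\big(-u(x)\,v\big)$ uniquely from $v$, so the nearest point is unique.

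Finally I would dispose of the degenerate cases not covered by the lemma. The image $f([0,1])$ is a smooth curve and hence has spherical measure $0$; the set $\{x : u(x)=\pi\}$ is contained in the antipodal image $-f([0,1])$ and likewise has measure $0$; and $u(x)=0$ holds exactly on $f([0,1])$. Removing these measure-zero sets, every remaining ambiguity point falls under the lemma and so lies in the non-differentiability set of $u$, which has measure $0$ by Rademacher. Combining the pieces yields that the ambiguity set has spherical measure $0$.

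The main obstacle is the key lemma, specifically the Cauchy--Schwarz step that pins the gradient to the geodesic direction and the care needed near the cut locus: one must keep $u(x)<\pi$ so that $\exp_x$ inverts the minimizing geodesic uniquely, since at $u(x)=\pi$ all directions collapse to the antipode and the argument breaks. Everything else --- Lipschitzness of $u$, the chart reduction justifying Rademacher on $S^2$, and the measure-zero bookkeeping for the curve and its antipode --- is routine.
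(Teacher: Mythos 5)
Your proof is correct, and it takes essentially the route the paper itself points to: the paper in fact omits the proof of this proposition, stating only that the details are ``similar with those of Hastie and Stuetzle (1989),'' and their argument is exactly the scheme you use --- the distance-to-curve function is Lipschitz, hence differentiable a.e.\ by Rademacher's theorem, and differentiability at $x$ forces the nearest point on the curve to be unique. Your sphere-specific adjustments --- the chart reduction for Rademacher, pinning the gradient to the geodesic direction and recovering the foot point as $p=\exp_x\big(-u(x)\,\nabla u(x)\big)$, and excising the measure-zero sets $f([0,1])$ and $-f([0,1])$ to avoid the cut-locus degeneracy at $u(x)=\pi$ --- are precisely the adaptations needed, so your write-up supplies the details the paper leaves out.
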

Detailed steps for a proof of Proposition 5  are similar with those of \cite{Hastie}.%; thus, we omit the proof, which can be provided on request. 

Meanwhile, to prove Theorem 2, it is essential to verify that $\lambda_{f_{\epsilon}}$ is differentiable for $\epsilon$ and its derivative is uniformly bounded. Thus, it is necessary to define a subset of $S^2$ as $B(\zeta):= \{x \in S^2 \ | \ |f''(\lambda_{f}(x)) \cdot x| > \zeta \}$ for $\zeta \ge 0$. Obviously, as an inclusion of sets, $\{B(\zeta)\}_{\zeta\ge0}$ is decreasing as $\zeta$ goes to $0$. Moreover, the following lemma implies that, as $\zeta$ goes to 0, $B(\zeta)$ covers $S^2$ almost everywhere.
\begin{lemma}
	The image of smooth function from $[0,\, 1]$ to $S^2$ has measure 0. Moreover,  
	\[
	S^2\setminus B(0)=\{x\in S^2 \ | \ |f''(\lambda_{f}(x))\cdot x | =0 \}
	\]
	is an union of images of two smooth functions from $[0,\, 1]$ to $S^2$, which implies that $S^2\setminus B(0)$ are measure 0. Therefore, the measure of $S^{2} \setminus B(\zeta)$ goes to 0 as $\zeta \rightarrow 0$.
\end{lemma}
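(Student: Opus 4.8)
The plan is to establish the three assertions of the lemma in order, deriving the final measure-zero statement from an explicit geometric description of $S^2\setminus B(0)$. For the first assertion, that a smooth map $\gamma:[0,1]\to S^2$ has image of measure $0$, I would invoke Sard's theorem: since the domain is one-dimensional and the target two-dimensional, the rank of $d\gamma$ is at most $1<2$, so every point of $[0,1]$ is critical and every point of the image is a critical value; Sard's theorem then gives that the image has measure $0$. (Equivalently, a constant-speed smooth curve on a compact interval is Lipschitz, so its image has finite one-dimensional Hausdorff measure and hence vanishing two-dimensional measure.)

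The core of the argument is the second assertion. Fix $x\in S^2\setminus B(0)$ with $\lambda_f(x)\in(0,1)$, which holds for almost every $x$ by the standing assumption that the curve is not short enough to cover $X$, and set $\lambda=\lambda_f(x)$. Since $d_{Geo}(f(s),x)=\arccos\big(f(s)\cdot x\big)$ by Lemma 2, the index $\lambda$ is an interior maximizer of $s\mapsto f(s)\cdot x$, so the first-order condition gives $f'(\lambda)\cdot x=0$; together with the defining relation $f''(\lambda)\cdot x=0$ this shows $x$ is orthogonal in $\mathbb{R}^3$ to both $f'(\lambda)$ and $f''(\lambda)$. I would then verify that these two vectors are linearly independent: Lemma 1 gives $f'\cdot f''=0$, while differentiating $f\cdot f'=0$ yields $f''\cdot f=-|f'|^2$, a nonzero constant under the constant-speed parameterization, so $f''\neq 0$; two nonzero orthogonal vectors are independent. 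Their orthogonal complement in $\mathbb{R}^3$ is therefore a line meeting $S^2$ in exactly the antipodal pair $\pm g(\lambda)$, where $g(\lambda):=\big(f'(\lambda)\times f''(\lambda)\big)/\lVert f'(\lambda)\times f''(\lambda)\rVert$. Because $f'\times f''$ is smooth and never vanishes, $g$ and $-g$ are smooth maps $[0,1]\to S^2$, and we obtain $S^2\setminus B(0)\subseteq g([0,1])\cup(-g)([0,1])$ up to a null set; combined with the first assertion this forces $S^2\setminus B(0)$ to have measure $0$.

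For the third assertion I would use continuity from above of the finite spherical measure $\mu$. The function $h(x):=f''(\lambda_f(x))\cdot x$ is measurable, since $\lambda_f$ is measurable by Proposition 2 and $f''$ is continuous, so each $S^2\setminus B(\zeta)=\{x:|h(x)|\le\zeta\}$ is measurable. As $\zeta\downarrow 0$ these sets decrease, with $\bigcap_{\zeta>0}\{|h|\le\zeta\}=\{h=0\}=S^2\setminus B(0)$. Since $\mu(S^2)=4\pi<\infty$, continuity from above yields $\lim_{\zeta\to 0}\mu\big(S^2\setminus B(\zeta)\big)=\mu\big(S^2\setminus B(0)\big)=0$.

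I expect the main obstacle to be the second step, specifically confirming that every point of $S^2\setminus B(0)$ is captured by the two curves $\pm g$. The delicate points are the use of the interior-projection first-order condition, which genuinely requires $\lambda_f(x)\in(0,1)$, to produce $f'(\lambda)\cdot x=0$, and the everywhere linear independence of $f'$ and $f''$, which guarantees that $g$ is well defined and smooth and that the orthogonal complement is one-dimensional; this is exactly where Lemma 1 and the relation $f''\cdot f=-|f'|^2$ do the real work. Everything else, namely Sard for the image and continuity from above for the limit, is routine once the measurability from Proposition 2 is in hand.
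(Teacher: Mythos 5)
Your proposal is correct and takes essentially the same route as the paper's proof: Sard's theorem for the image of a smooth curve, the characterization of $S^2\setminus B(0)$ by simultaneous orthogonality of $x$ to $f'(\lambda)$ and $f''(\lambda)$, and the two smooth curves $\pm f'\times f''/\lVert f'\times f''\rVert$ covering that set. The only minor differences are that you establish $f''\neq 0$ by differentiating $f\cdot f'=0$ to get $f\cdot f''=-|f'|^2\neq 0$, whereas the paper invokes the fact that a curve on the unit sphere has curvature at least $1$, and you spell out the continuity-from-above step for $\lim_{\zeta\to 0}\mu\bigl(S^2\setminus B(\zeta)\bigr)=0$, which the paper leaves implicit.
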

\begin{proof}
	Suppose that $I:[0,\, 1] \rightarrow S^2$ is smooth. The domain and range of $I$ are the second countable (with usual topology) differentiable manifolds whose dimensions are 1 and 2, respectively. Since $f$ is twice continuously differentiable function and the differential $dI$  has rank 1 which is less than intrinsic dimension of $S^2$, by a generalization of Sard's Theorem, the image $I([0,\, 1])=\{I(x) \in S^2 \ | \ x \in [0,\, 1] \}$ has measure zero. Next, each point $x\in S^2\setminus B(0)$ satisfying $\lambda_{f}(x)\neq 0,\, 1$ is characterized by two equations $f'(\lambda)\cdot x=0$ and $f''(\lambda) \cdot x=0$ for some $\lambda \in [0,\, 1]$. Therefore, we define functions $I_1,\, I_2$ as follows: For all $\lambda \in [0,\, 1]$, 
	\begin{eqnarray*}
	I_{1}(\lambda) &=& f'(\lambda) \times f''(\lambda) / \norm{f'(\lambda)\times f''(\lambda)},\\
	I_{2}(\lambda) &=&-f'(\lambda) \times f''(\lambda) / \norm{f'(\lambda)\times f''(\lambda)}.
	\end{eqnarray*}It is well known that the curvature of a smooth curve lying on the unit sphere is more than 1. It implies that $\kappa=\frac{|f''|}{s^{2}} \ge 1$, where $\kappa$ is the curvature of $f$ and $s := |f'(\lambda)| > 0$ for all $\lambda \in [0,\, 1]$, and hence $f''\neq 0$. We have already known that $f'\cdot f'' = 0$ by Lemma 1. Hence, it is obtained that $f' \times f''\ne 0$. It implies that $I_{1}$ and $I_{2}$ are well defined and smooth. Therefore, we have $S^2\setminus B(0)= I_{1}([0,\, 1]) \bigcup I_{2}([0,\, 1])$, which completes the proof. 
\end{proof}
Lemma 4 means that the constraints of random vector $X$ in Theorems 1 and 2 are almost negligible by setting $\zeta$ infinitesimally small. Denote the set of ambiguity points of smooth curve $f$ on a sphere as $A$, which is a measure zero set by Proposition 5.

\begin{lemma}
	Let $A$ be the set of ambiguity points of smooth curve $f$ on a sphere. Suppose that for any $x\in S^2$, $\lambda_{f}(x) \in (0,\, 1)$ and $x \in A^c \cap B(\zeta)$ for some small $\zeta > 0$. Then $\lambda(\epsilon):=\lambda_{f_\epsilon}(x)$ is a smooth function for $\epsilon$ on an open interval containing 0. Moreover, $\frac{\partial \lambda(\epsilon)}{\partial \epsilon}$ is uniformly bounded on $A^c \cap B(\zeta)$. That is, there are constants $C > 0$ and $\delta > 0$ such that if $|\epsilon_0| < \delta $ and $x \in A^c\cap B(\zeta)$, then $\big|\ \frac{\partial \lambda_{f_{\epsilon}}(x)}{\partial \epsilon}\big|_{\epsilon=\epsilon_{0}} \big| < C$.
\end{lemma}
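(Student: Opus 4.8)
The plan is to characterize the projection index as the solution of a first-order stationarity equation and then apply the Implicit Function Theorem, using the $B(\zeta)$ condition precisely to supply the nondegeneracy hypothesis. First I would reformulate the projection: by Lemma 2, $d_{Geo}(x, f_\epsilon(\lambda)) = \arccos(x \cdot f_\epsilon(\lambda))$, and since $\arccos$ is strictly decreasing, minimizing geodesic distance over $\lambda$ is equivalent to maximizing the inner product $x \cdot f_\epsilon(\lambda)$. Because $\lambda_f(x) \in (0,1)$ and, by Proposition 3, $\lambda_{f_\epsilon}(x) \to \lambda_f(x)$ as $\epsilon \to 0$, the projection index stays in the open interval $(0,1)$ for $|\epsilon|$ small, so it satisfies the interior first-order condition. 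Writing $f_\epsilon = f(\epsilon,\lambda)$, which is jointly $C^\infty$ on $[-1,1]\times[0,1]$ by Proposition 1, I define $F(\epsilon,\lambda) := x \cdot \frac{\partial f}{\partial \lambda}(\epsilon,\lambda)$, so that $F(\epsilon,\lambda_{f_\epsilon}(x)) = 0$ and, at $\epsilon=0$, $F(0,\lambda_f(x)) = x \cdot f'(\lambda_f(x)) = 0$ expresses the usual orthogonality of the projection.

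Next I would apply the Implicit Function Theorem to $F$ at $(0,\lambda_f(x))$. The decisive derivative is
\[
    \frac{\partial F}{\partial \lambda}(0,\lambda_f(x)) = x \cdot f''(\lambda_f(x)),
\]
which is exactly the quantity defining $B(\zeta)$; hence $|\partial_\lambda F(0,\lambda_f(x))| > \zeta > 0$ for $x \in B(\zeta)$. Since $F$ is smooth, the Implicit Function Theorem yields a smooth function $\lambda(\epsilon)$ on an open interval about $0$ with $\lambda(0)=\lambda_f(x)$ and $F(\epsilon,\lambda(\epsilon))=0$, the zero being locally unique. Because $\lambda_{f_\epsilon}(x)$ is a zero of $F$ lying near $\lambda_f(x)$ (Proposition 3), it must coincide with $\lambda(\epsilon)$ for small $\epsilon$; thus $\lambda(\epsilon)=\lambda_{f_\epsilon}(x)$ is smooth in $\epsilon$, proving the first assertion.

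For the uniform bound I would use the Implicit Function Theorem formula
\[
    \frac{d\lambda}{d\epsilon} = -\frac{x \cdot \partial_\epsilon \partial_\lambda f(\epsilon,\lambda(\epsilon))}{x \cdot \partial^2_\lambda f(\epsilon,\lambda(\epsilon))}.
\]
The numerator is bounded, uniformly in $x$ and $\epsilon$, by $M := \max |x \cdot \partial_\epsilon \partial_\lambda f|$ over the compact set $S^2 \times [-1,1] \times [0,1]$, which is finite by smoothness and $|x|=1$. The crux is a uniform positive lower bound on the denominator, and here I would combine two uniformities. By uniform continuity of $(\epsilon,\lambda,x)\mapsto x\cdot \partial^2_\lambda f(\epsilon,\lambda)$ on that compact domain, there is $\delta_1>0$ with $|x\cdot\partial^2_\lambda f(\epsilon,\lambda) - x\cdot f''(\lambda_f(x))| < \zeta/2$ whenever $|\epsilon|<\delta_1$ and $|\lambda-\lambda_f(x)|<\delta_1$; and by the uniform continuity of the projection index under perturbation (Proposition 4), there is $\delta_2>0$ with $|\lambda_{f_\epsilon}(x)-\lambda_f(x)|<\delta_1$ for every non-ambiguity point $x$ whenever $|\epsilon|<\delta_2$. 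Taking $\delta = \min(\delta_1,\delta_2)$, for $|\epsilon|<\delta$ and $x \in A \cap B(\zeta)$ the denominator evaluated at $\lambda(\epsilon)$ exceeds $\zeta/2$ in absolute value, yielding $|d\lambda/d\epsilon| < 2M/\zeta =: C$, uniformly as claimed.

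The main obstacle is precisely this uniform lower bound on the denominator: pointwise nondegeneracy from $B(\zeta)$ together with the Implicit Function Theorem only delivers smoothness and a local derivative bound for each fixed $x$, whereas a bound valid for all $x \in A \cap B(\zeta)$ at once requires that $\lambda_{f_\epsilon}(x)$ converge to $\lambda_f(x)$ \emph{uniformly} in $x$ (Proposition 4), so that a single $\delta$ simultaneously controls the excursion of $\lambda(\epsilon)$ for every such $x$ and keeps it in the range where $x\cdot\partial^2_\lambda f$ has not yet dropped below $\zeta/2$.
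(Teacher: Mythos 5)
Your proof is correct and follows essentially the same route as the paper's: both characterize the projection index by the interior orthogonality condition $F(\epsilon,\lambda)=x\cdot f'_{\epsilon}(\lambda)=0$, apply the implicit function theorem with the $B(\zeta)$ condition supplying the nondegenerate $\lambda$-derivative $|x\cdot f''(\lambda_f(x))|>\zeta$, and obtain the uniform bound by combining Proposition 4 with a $\zeta/2$ lower bound on the denominator along the solution curve. The only cosmetic differences are that you bound the numerator by compactness of $S^2\times[-1,1]\times[0,1]$ where the paper invokes Schwarz's theorem to identify it as $g'$ and uses $\norm{g'}\le 1$, and your denominator estimate uses joint uniform continuity of $x\cdot\partial_{\lambda}^2 f$ where the paper integrates $g''$ and uses boundedness of $f'''$.
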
 
\begin{proof}
Since $x$ is a non-ambiguity point of $f$ and satisfies $\lambda_{f}(x)\neq 0,\, 1$, we obtain $\lambda_{f+\epsilon g}(x)\neq 0,\, 1$ for sufficiently small values of $|\epsilon|$ by Proposition 4. Hence, $\lambda(\epsilon)$ is characterized by orthogonality between $f'_{\epsilon}\big(\lambda(\epsilon)\big)$ and the geodesic through $x$ and $f_{\epsilon}\big(\lambda(\epsilon)\big)$ on a small $\epsilon$ near 0; that is, $ f'_{\epsilon}(\lambda) \cdot \big(x-f_{\epsilon}(\lambda)\big) = f'_{\epsilon}(\lambda) \cdot x = 0$ by the same argument in Lemma 1. Then, we define a map $F:[-1,\, 1] \times [0,\, 1] \to \mathbb{R}$ as $F(\epsilon, \lambda)=f'_{\epsilon}(\lambda)\cdot x$. $F$ is a smooth function by Proposition 1. It follows by the definition of $B(\zeta)$ that 
	\begin{eqnarray}
	\frac{\partial}{\partial \lambda} F(\epsilon,\lambda)\big|_{(0,\, \lambda_f)}
	& = & f''(\lambda_f)\cdot x\ne 0. \nonumber
	\end{eqnarray}By implicit function theorem, for each $x\in A^c \cap B(\zeta)$, $\lambda(\epsilon)=\lambda_{f_{\epsilon}}(x)$ is a smooth function for $\epsilon$ and $F\big(\epsilon, \lambda(\epsilon)\big)=0$ in an open interval containing zero. Next, in order to prove uniform boundedness of $\frac{\partial \lambda(\epsilon)}{\partial \epsilon}$, we should verify that $f''_{\epsilon}\big(\lambda_{f_{\epsilon}}(x)\big)$ uniformly converge to $f''\big(\lambda_{f}(x)\big)$ on $A^c \cap B(\zeta)$ as $\epsilon$ goes to 0. First of all, for each $\lambda$, we have  
	\begin{eqnarray}
	\label{unif_1}
	f_{\epsilon_0}(\lambda) &=& f(\lambda)+ \int_{0}^{\epsilon_0} g(\epsilon_0,\lambda)\, d\epsilon \nonumber \\ 
	&\Rightarrow& f''_{\epsilon_0}(\lambda)=f''(\lambda) + \int_{0}^{\epsilon_0}g''(\epsilon,\lambda)\, d\epsilon \nonumber \\
	&\Rightarrow& \norm{f''_{\epsilon_0}(\lambda)-f''(\lambda)} \le \int_{0}^{\epsilon_0}\norm{g''(\epsilon,\lambda)}\, d\epsilon\le \epsilon_0 M, 
	\end{eqnarray}for some $M > 0$. Note that the above derivatives are differentiation by $\lambda$. Also, the second equation holds since $g(\epsilon,\, \cdot)$ is a twice continuously differentiable function for all $\epsilon$; thus, it is able to change the order of derivative and the integration. The last inequality holds because $g''(\epsilon,\, \lambda)\big(=\frac{\partial^2 g(\epsilon, \lambda)}{\partial \lambda^2}\big)$ is continuous on $[-1,\, 1] \times [0,\, 1]$. Hence, it follows that 
	\begin{eqnarray}
		\label{unif_2}
		\norm{f''_{\epsilon}\big(\lambda_{f_{\epsilon}}(x)\big) - f''\big(\lambda_{f}(x)\big)} \le \norm{f''_{\epsilon}\big(\lambda_{f_{\epsilon}}(x)\big) - f''\big(\lambda_{f_{\epsilon}}(x)\big)} \nonumber 
		  + \norm{f''\big(\lambda_{f_{\epsilon}}(x)\big)-f''\big(\lambda_{f}(x)\big)} \nonumber
		  \to 0, 
	\end{eqnarray}as $\epsilon\rightarrow 0$ uniformly on $x \in A^c \cap B(\zeta)$, because the first term uniformly converges to 0 by (\ref{unif_1}) and the last one also converges to 0 uniformly by Proposition 4 and the boundedness of $f'''$. We have that $|x\cdot f''\big(\lambda_{f}(x)\big)| > \zeta$ owing to $x \in B(\zeta)$, and from (\ref{unif_2}), there exists a constant $\delta > 0$ such that $|\epsilon| < \delta$ $\Rightarrow$ $|x \cdot f''_{\epsilon}\big(\lambda_{f_{\epsilon}}(x)\big)|\ge \frac{\zeta}{2}$. Since $f_{\epsilon}(\lambda)=f(\epsilon, \lambda)$ has continuous second partial derivatives, it is able to change the order of partial derivatives by Schwarz's theorem, as
	\begin{eqnarray}
	\frac{\partial}{\partial \epsilon} \Big|_{\epsilon=\epsilon_0} f'_{\epsilon}\big(\lambda(\epsilon_0)\big) = \frac{\partial}{\partial \epsilon} \Big|_{\epsilon=\epsilon_0}  \frac{\partial}{\partial \lambda} \Big|_{\lambda=\lambda(\epsilon_0)} f_{\epsilon}(\lambda) 
	= \frac{\partial}{\partial \lambda} \Big|_{\lambda=\lambda(\epsilon_0)} \frac{\partial}{\partial \epsilon} \Big|_{\epsilon=\epsilon_0} f_{\epsilon}(\lambda) 
	= g'\big(\epsilon_0, \, \lambda(\epsilon_0)\big), \nonumber 
	\end{eqnarray}for all $|\epsilon_{0}| < \delta$. Therefore, if $|\epsilon_{0}| < \delta$ by applying implicit function theorem to $F$ again, then we obtain that $\lambda(\epsilon)$ is differentiable at $\epsilon=\epsilon_0$ and 
	\begin{eqnarray}
	|\lambda'(\epsilon_0)| = \Big| \frac{-\partial F(\epsilon,\, \lambda)/ \partial \epsilon} {\partial F(\epsilon,\, \lambda) / \partial \lambda} \Big|_{\big(\epsilon_{0},\, \lambda(\epsilon_0)\big)} \Big| &=& \frac{\Big|x\cdot \frac{\partial}{\partial \epsilon} \big|_{\epsilon=\epsilon_0} f'_{\epsilon}\big(\lambda(\epsilon_0)\big)\Big|}{\Big|x \cdot f''_{\epsilon_0}\big(\lambda(\epsilon_0)\big)\Big|}
	\le \frac{\norm{g'}}{\zeta/2} \le \frac{2}{\zeta}, \nonumber
	\end{eqnarray}
	which completes the proof.
\end{proof}

We further consider principal curves on hypersphere $S^d$ for $d\ge 2$. For a smooth curve $f: [0,\, 1] \to S^d$, suppose that $f$ is parameterized by a constant speed with respect to $\lambda$. Lemma 1 and all of the Propositions are still valid on $S^d$. Moreover, Lemmas 4 and 5 can be extended onto $S^d$ as follows.  
\begin{lemma}
Define $C(\zeta)=\left\{x\in S^d \ | \ |f''(\lambda_{f}(x))\cdot x| > \zeta \right\}$ for $d\ge 2$. Then,
\[
S^d\setminus C(0)=\{x\in S^d \ | \ |f''(\lambda_{f}(x)) \cdot x | =0 \}
\]
has spherical ($d$-dimensional Hausdorff) measure zero. Hence, the spherical measure of $S^{d} \setminus C(\zeta)$ goes to 0 as $\zeta \rightarrow 0$.
\end{lemma}
\begin{lemma}
Let $A$ be a set of ambiguity points of smooth curve $f$ on $S^d$ for $d\ge 2$. Suppose that $x \in A^c\cap C(\zeta)$ for a small $\zeta > 0$, and $\lambda_{f}(x)\in (0,\, 1)$. Then $\lambda(\epsilon):=\lambda_{f_\epsilon}(x)$ is a smooth function for $\epsilon$ on an open interval containing zero.
\end{lemma} 
Proofs of Lemmas 6 and 7 are similar to those of Lemmas 4 and 5, respectively. Thus, we omit the proofs.

\vskip 5mm
\noindent{\bf Proof of Theorem 1} 
\begin{proof}
First of all, we prove the theorem on $S^2$. If $f=h$, then nothing to prove. Thus, we assume that the curves $f$ and $f+g(\small = h)$ are not identical and further both are parameterized by $\lambda \in [0,\, 1]$. To prove the result, we need to show that the conditional expectation is zero after exchanging the order of the derivative and expectation.
	
	First, for order exchange, it is necessary to show that the following random variable  
	\begin{eqnarray}
	\label{eq_z}
	Z_\epsilon(X) &=& \frac{\cos\big(d_{Geo}(X,\, f+\epsilon g)\big)-\cos\big(d_{Geo}(X,\, f)\big)}{\epsilon} \nonumber\\ 
	&=& \frac{\cos\big(d_{Geo}\big(X,\, (f+\epsilon g)(\lambda_{f+\epsilon g}(X))\big)\big) - \cos\big(d_{Geo}\big(X,\, f(\lambda_f(X))\big)\big)}{\epsilon} 
	\end{eqnarray}is uniformly bounded for any sufficiently small $|\epsilon| > 0$. Then we apply bounded convergence theorem. Since the projection index of $X$ represents the closest point in the curve, it follows that 
	\begin{eqnarray}
	\label{eql4}
	Z_\epsilon(X) &\le& \frac{\cos\big(d_{Geo}\big(X,\, (f+\epsilon g)(\lambda_{f+\epsilon g}(X))\big)\big) -\cos\big(d_{Geo}\big(X,\, f(\lambda_{f+\epsilon g}(X))\big)\big)}{\epsilon}. 
	\end{eqnarray}For simplicity, let $f_g (\lambda_\epsilon):=(f+g)(\lambda_{f+\epsilon g}(X))$, $f_\epsilon (\lambda_\epsilon):=(f+\epsilon g)(\lambda_{f+\epsilon g}(X))$ and $f(\lambda_\epsilon):=f(\lambda_{f+\epsilon g}(X))$. By applying Lemma 3 to $\cos\big(d_{Geo}(X,\, f_\epsilon(\lambda_\epsilon))\big)$, the inequality of (\ref{eql4}) becomes 
	\begin{eqnarray}
	\label{ze_ub}
	Z_\epsilon(X) &\le& \frac{\cos\big(d_{Geo}\big(X,\, f_\epsilon(\lambda_\epsilon)\big)\big)-\cos\big(d_{Geo}\big(X,\, f(\lambda_\epsilon)\big)\big)}{\epsilon} \nonumber\\
	&=& \frac{\cos(d_{Geo}(X,\, f(\lambda_\epsilon))) (\cos(d_{Geo}(f_\epsilon (\lambda_\epsilon),\, f(\lambda_\epsilon)))-1) + (f(\lambda_\epsilon)\times f_\epsilon (\lambda_\epsilon))\cdot (f(\lambda_\epsilon)\times X)}{\epsilon}. \nonumber\\
	&=& \frac{\cos(d_{Geo}(X,\, f(\lambda_\epsilon))) (\cos (\epsilon d_{Geo}(f_g(\lambda_\epsilon),\, f(\lambda_\epsilon)))-1) + A_\epsilon (f(\lambda_\epsilon)\times f_g (\lambda_\epsilon))\cdot (f(\lambda_\epsilon)\times X)}{\epsilon}, \nonumber \\
	\end{eqnarray}where 
	\begin{align*}
	A_\epsilon = |f(\lambda_\epsilon)\times f_\epsilon (\lambda_\epsilon)|/|f(\lambda_\epsilon)\times f_g (\lambda_\epsilon)| = \sin(\epsilon d_{Geo}(f(\lambda_\epsilon), f_g(\lambda_\epsilon)))/|f(\lambda_\epsilon)\times f_g (\lambda_\epsilon)|.
	\end{align*}The last equality is done by Definition 1. To get the upper bound of $Z_\epsilon(X)$, we further use the following fact, $|\frac{\sin \epsilon C}{\epsilon}|\le |C|$ and $|\frac{1-\cos \epsilon C}{\epsilon}|\le \frac{|\epsilon| C^2}{2}$ for $C\in\mathbb{R}$ and $\epsilon\in\mathbb{R}$. Then, we have 
	\begin{eqnarray*}
	Z_\epsilon(X) &\le& \big|\cos\big(d_{Geo}\big(X,\, f(\lambda_\epsilon)\big)\big)\big| \frac{|\epsilon| B^2}{2} + \frac{B}{|f(\lambda_\epsilon)\times f_g (\lambda_\epsilon)|}\big|(f(\lambda_\epsilon)\times f_g (\lambda_\epsilon))\cdot (f(\lambda_\epsilon)\times X)\big|, 
	\end{eqnarray*}where $B = d_{Geo}\big(f(\lambda_\epsilon),\, f_g(\lambda_\epsilon)\big) \le \norm{g} < \pi$. Note that any smallest geodesic distance on a unit sphere is smaller than $\pi$. In addition, we can assume that $\epsilon$ is less than $1/\pi$ because we are only interested in $\epsilon$ near 0. Thus, we obtain the upper bound of $Z_\epsilon(X)$ in (\ref{ze_ub})   
	\[
	Z_\epsilon(X)\le \  \frac{\pi}{2} + \pi = \frac{3\pi}{2}. 
	\]
	%%%%%%%%%%%%%%%%%%%%%%%%%%%%%%%%%%%%% lower bound
	A lower bound of $Z_\epsilon(X)$ can be similarly obtained. Let $f_g(\lambda):=(f+g)(\lambda_{f}(X))$, $f_\epsilon (\lambda):=(f+\epsilon g)(\lambda_{f}(X))$ and $f(\lambda):=f(\lambda_{f}(X))$. By following the same path, we have    
	\begin{eqnarray}
	\label{ze_lb}
	Z_\epsilon(X) &\ge& \frac{\cos\big(d_{Geo}\big(X,\, (f+\epsilon g)(\lambda_{f}(X))\big)\big) - \cos\big(d_{Geo}\big(X,\, f(\lambda_{f}(X))\big)\big)}{\epsilon} \nonumber\\
	&=& \frac{\cos\big(d_{Geo}\big(X,\, f(\lambda)\big)\big) \big(\cos\big(d_{Geo}\big(f_\epsilon (\lambda),\, f(\lambda)\big)\big)-1\big) + (f(\lambda)\times f_\epsilon (\lambda))\cdot (f(\lambda)\times X)}{\epsilon} \nonumber\\
	&=& \frac{\cos\big(d_{Geo}\big(X,\, f(\lambda)\big)\big) \big(\cos\big(\epsilon d_{Geo}\big(f_g(\lambda),\, f(\lambda)\big)\big)-1 \big) + B_\epsilon (f(\lambda)\times f_g (\lambda))\cdot (f(\lambda)\times X)}{\epsilon}, \nonumber \\
	\end{eqnarray}where 
	\begin{align*}
	B_\epsilon = |f(\lambda)\times f_\epsilon (\lambda)|/|f(\lambda)\times f_g (\lambda)| = \sin(\epsilon d_{Geo}(f(\lambda),\, f_g(\lambda)))/|f(\lambda)\times f_g(\lambda)|.
	\end{align*} 
	By the same way, it can be shown that 
	\[
	Z_\epsilon(X)\ge -\frac{3\pi}{2}. 
	\]
	Hence, we show that  
	\[
	|Z_{\epsilon}(X)|\le \frac{3\pi}{2},    
	\]
	which is bounded for any $0\neq |\epsilon|\le 1/\pi$. Then, by the bounded convergence theorem, it follows that  
	\begin{align*}
	\frac{\partial\mathbb{E}_X\cos(d_{Geo}\big(X,\, f+\epsilon g)\big)}{\partial\epsilon} \Big|_{\epsilon = 0} = \mathbb{E}_X\frac{\partial\cos(d_{Geo}\big(X,\, f+\epsilon g)\big)}{\partial\epsilon} \Big|_{\epsilon = 0}.  
	\end{align*}Thus, the proof is completed provided that the following equation holds 
	\begin{align*}
	\mathbb{E}\Big[\frac{\partial\cos\big(d_{Geo}(X,\, f+\epsilon g)\big)}{\partial\epsilon} \Big|_{\epsilon = 0} \ \Big| \ \lambda_f(X) = \lambda\Big] = 0, ~\mbox{for}~\mbox{\textit{a.e.}}~\lambda.
	\end{align*}
	By the definition of derivative,  
	\[
	\frac{\partial\cos\big(d_{Geo}(X,\, f+\epsilon g)\big)}{\partial\epsilon} \Big|_{\epsilon = 0} = \lim\limits_{\epsilon \to 0}Z_\epsilon(X), 
	\]
	and as shown, $Z_\epsilon(X)$ is bounded. Since $f$ and $f+g$ are continuous, by Proposition 3, if $X$ is not an ambiguity point of $f$ and $f+g$, then  
	\[
	\lim\limits_{\epsilon \to 0}f_g(\lambda_\epsilon)= f_g(\lambda),~ \ \lim\limits_{\epsilon \to 0}f(\lambda_\epsilon) = f(\lambda). 
	\]
	Next, to show the limit of $Z_\epsilon$, we use the fact that $\lim_{\epsilon\to 0}\frac{\sin\epsilon C}{\epsilon}=C$ and $\lim_{\epsilon\to 0}\frac{1-\cos\epsilon C}{\epsilon}=0$ for $C\in \mathbb{R}$ and $\epsilon\in\mathbb{R}$. When $f_g(\lambda)\ne f(\lambda)$, it follows that 
	\begin{align*}
	\lim\limits_{\epsilon \to 0}~\mbox{RHS of (\ref{ze_ub})} 
	= d_{Geo}\big(f(\lambda),\, f_g(\lambda)\big)\frac{f(\lambda)\times f_g (\lambda)}{|f(\lambda)\times f_g (\lambda)|}
	\cdot (f(\lambda)\times X)
	= \mu(\lambda)\cdot (f(\lambda)\times X), 
    \end{align*}where $\mu(\lambda) = d_{Geo}(f(\lambda),\, f_g(\lambda))(f(\lambda)\times f_g (\lambda))/|f(\lambda)\times f_g (\lambda)|$ if $f(\lambda) \ne (f+g)(\lambda)$ and $\mu(\lambda) =0$ otherwise. Similarly, we obtain that 
	\[
	\lim\limits_{\epsilon \to 0}~\mbox{RHS of (\ref{ze_lb})} = \mu(\lambda)\cdot (f(\lambda)\times X). 
	\]
	In summary, if $X$ is not an ambiguity point of $f$ and $f+g$, and $f(\lambda_{f}(X))\ne (f+g)(\lambda_{f}(X))$, then we have 
	\begin{equation}
	\label{limit1}
	\frac{\partial\cos\big(d_{Geo}(X,\, f+\epsilon g)\big)}{\partial\epsilon} \Big|_{\epsilon = 0} = \mu(\lambda_{f}(X))\cdot \big(f(\lambda_{f}(X))\times X\big). 
	\end{equation}
	In the case of $f(\lambda_{f}(X)) = (f+g)(\lambda_{f}(X))$, the equation (\ref{limit1}) also hold because its left and right hand side are 0. From Proposition 4, the limit of (\ref{limit1}) is established for \textit{a.e.} $X$. Note that, since $X$ is a random vector and $\lambda_{f}(X)$ is measurable with respect to $X$ according to the Proposition 2, $\lambda_{f}(X)$ is also a random variable depending on $X$. It implies that conditional expectation on $\lambda_{f}(X)$ is feasible. Hence, the following equality holds
	\begin{align}
	\label{result}
	\mathbb{E}_{X}\Big[\frac{\partial\cos\big(d_{Geo}(X,\, f+\epsilon g)\big)}{\partial\epsilon} \Big|_{\epsilon = 0}\Big] = \mathbb{E}_{X}\Big[\mu(\lambda_{f}(X)) \cdot \big(f(\lambda_{f}(X))\times X \big) \Big].
	\end{align}
	Finally, if $f$ is an extrinsic principal curve, then 
	\[
	\mathbb{E}\big[X \ | \ \lambda_f(X)=\lambda\big]=cf(\lambda)
	\]
	for $\exists c\in \mathbb{R}$. Hence, it follows that 
	\begin{align*}
	\mathbb{E}\big[\mu(\lambda_{f}(X))\cdot \big(f(\lambda_{f}(X))\times X \big) \ | \ \lambda_f(X)=\lambda \big] 
	&= \mathbb{E}\big[\mu(\lambda)\cdot (f(\lambda)\times X) \ | \ \lambda_f(X) = \lambda \big] \\
	&= \mu(\lambda)\cdot (f(\lambda)\times cf(\lambda)) = 0. \hspace{2.5cm} 
	\end{align*}
	Hence, we have 
	\begin{align*}
		\mbox{LHS of (\ref{result})} &= \mathbb{E}_{X}\big[\mu(\lambda)\cdot (f(\lambda)\times X)\big] \hspace{3.4cm} \\
		&= \mathbb{E}_{\lambda}\big[\mathbb{E} \big[\mu(\lambda)\cdot (f(\lambda)\times X) \ | \ \lambda_f(X) = \lambda\big]\big] = 0.
	\end{align*}
	
	To prove the converse, we assume that 
	\begin{align*}
	\mathbb{E}_{\lambda}\big[\mathbb{E}\big(\mu(\lambda)\cdot (f(\lambda)\times X) \ | \ \lambda_f(X) = \lambda\big) \big] \hspace{2.5cm} \\
	= \mathbb{E}_{\lambda} \big[\mu(\lambda) \cdot \mathbb{E}\big[f(\lambda)\times X \ | \ \lambda_f(X) = \lambda\big]\big] = 0, \hspace{0cm}
	\end{align*}
	for all smooth $f+g$ satisfying $\norm{g} \ne \pi$ and $\norm{g'}\le 1$. Since $f+g$ is only concerned with $\mu(\lambda)$, it follows that 
	\begin{align*}
	\mathbb{E}\big[f(\lambda)\times X \ | \ \lambda_f(X) = \lambda\big] = f(\lambda)\times  \mathbb{E}\big[X \ | \ \lambda_f(X) = \lambda\big]
	= 0, ~~\mbox{for}~\mbox{\textit{a.e.}}~\lambda.
	\end{align*}
	Therefore, we have 
	\[
	\mathbb{E}\big[X \ | \ \lambda_f(X)=\lambda\big]=cf(\lambda)
	\]
	for $\exists c\ge 0$, which completes the proof. \\

Next, we consider the hypersphere case $S^d$ for $d\ge 3$. For given smooth curves $f$ and $h(=f+g)$ parametrized by $\lambda \in [0,\, 1]$, if $f=h$, the result is obvious. Thus, we assume that $f$ and $f+g(=h)$ are not identical. Suppose that $X \in A^c \cap B(\zeta)$ for a small $\zeta > 0$ and $\lambda_{f}(X) \in (0,\, 1)~ \mbox{for}~\mbox{\textit{a.e.}}~ X$, where $A$ denotes the set of ambiguity points of $f$. As the proof of the case of $S^2$, we use the bounded convergence theorem to change the order of derivative and expectation. Since $d_{Geo}(x,\, y) = \arccos(x\cdot y)$ for any $x, y\in S^d\subset \mathbb{R}^{d+1}$, we have 
\begin{eqnarray}
\label{eq_zthm}
 Z_\epsilon(X) &:=& \frac{\cos\big(d_{Geo}(X,\, f+\epsilon g)\big) - \cos\big(d_{Geo}(X,\, f)\big)}{\epsilon}\nonumber \\ 
                       &=& \frac{\cos\big(d_{Geo}\big(X,\, (f+\epsilon g)(\lambda_{f+\epsilon g}(X))\big)\big) - \cos\big(d_{Geo}\big(X,\, f(\lambda_f(X))\big)\big)}{\epsilon} \nonumber \\
                       &\le& \frac{\cos\big(d_{Geo}\big(X,\, (f+\epsilon g)(\lambda_{f+\epsilon g}(X))\big) - \cos\big(d_{Geo}\big(X,\, f(\lambda_{f+\epsilon g}(X))\big)}{\epsilon} \nonumber \\
                       &=&  \frac{X\cdot (f+\epsilon g)(\lambda_{f+\epsilon g}(X))- X\cdot f(\lambda_{f+\epsilon g}(X))}{\epsilon} \nonumber \\
                       &=& X\cdot \frac{(f+\epsilon g)(\lambda_{f+\epsilon g}(X)) - f(\lambda_{f+\epsilon g}(X))}{\epsilon}, 
\end{eqnarray}   
where $\cdot$ denotes the standard inner product in $\mathbb{R}^{d+1}$. Hence, we obtain the upper bound of $Z_{\epsilon}(X)$, 
\begin{eqnarray*}
Z_\epsilon(X) &\le& \norm{X} \frac{\norm{(f+\epsilon g)(\lambda_{f+\epsilon g}(X)) - f(\lambda_{f+\epsilon g}(X))}}{\epsilon} \\
&\le& \frac{d_{Geo}\big((f+\epsilon g)(\lambda_{f+\epsilon g}(X)),\, f(\lambda_{f+\epsilon g}(X))\big)}{\epsilon} \\
&\le& \norm{g(\lambda_{f+\epsilon g}(X))}\le \norm{g} \\
&\le& \pi,
\end{eqnarray*}
where $\norm{\cdot}$ denotes the standard norm in $\mathbb{R}^{d+1}$. Similarly, it follows that
\begin{eqnarray*}
Z_\epsilon(X) &\ge&  \frac{\cos\big(d_{Geo}\big(X,\, (f+\epsilon g)(\lambda_{f}(X))\big)\big) - \cos\big(d_{Geo}\big(X,\,f(\lambda_{f}(X))\big)\big)}{\epsilon} \\
                       &=& \frac{X\cdot (f+\epsilon g)(\lambda_f(X))- X\cdot f(\lambda_f(X))}{\epsilon}  \\
                       &=& X\cdot \frac{(f+\epsilon g)(\lambda_f(X)) - f(\lambda_f(X))}{\epsilon} \\
                       &\ge& -\norm{X} \frac{\norm{(f+\epsilon g)(\lambda_f(X)) - f(\lambda_f(X))}}{\epsilon} \\
                       &\ge& -\frac{d_{Geo}\big((f+\epsilon g)(\lambda_f(X)),\, f(\lambda_f(X))\big)}{\epsilon} \\ 
                       &\ge& -\norm{g(\lambda_f(X))}\ge - \norm{g} \\
                       &\ge& -\pi.
\end{eqnarray*}
It means that $Z_{\epsilon}(X)$ is uniformly bounded for $0 \neq |\epsilon|\le 1$. Next, to find the limit of $Z_{\epsilon}(X)$, we have  
\begin{eqnarray*}
 Z_\epsilon(X)  &=& \frac{\cos\big(d_{Geo}\big(X,\, (f+\epsilon g)(\lambda_{f+\epsilon g}(X))\big)\big) - \cos\big(d_{Geo}\big(X,\, f(\lambda_f(X))\big)\big)}{\epsilon}\\
                &=& \frac{X\cdot (f+\epsilon g)(\lambda_{f+\epsilon g}(X))- X\cdot f(\lambda_f(X))}{\epsilon} \\
                &=& X\cdot \frac{(f+\epsilon g)(\lambda_{f+\epsilon g}(X)) - f(\lambda_f(X))}{\epsilon}.
\end{eqnarray*}   
According to the Proposition 3,
\begin{eqnarray}
 \lim_{\epsilon \to 0} Z_\epsilon(X) &=& X\cdot \lim_{\epsilon \to 0}\frac{(f+\epsilon g)(\lambda_{f+\epsilon g}(X)) - f(\lambda_f(X))}{\epsilon} \nonumber \\
                                     &=&: X\cdot \phi(\lambda_{f}(X)). \nonumber
\end{eqnarray}   
For each $X\in A^c\cap B(\zeta)$, define a curve $C:I \to S^d$ by $\epsilon \mapsto C(\epsilon)= (f+\epsilon g)(\lambda_{f+\epsilon g}(X)) \in S^d \subset \mathbb{R}^{d+1}$, where $I$ is an open interval containing zero and $C(0)=f(\lambda_f(X))$. For convenience, let $(f+\epsilon g)(\lambda)=f(\epsilon,\, \lambda)$,\, $\lambda_f(X) = \lambda(0)$ and $\lambda_{f+\epsilon g}(X) = \lambda(\epsilon)$. According to Lemma 7, $\lambda(\epsilon)$ is a smooth function on an interval $I$ containing zero. As $f(\cdot, \cdot)$ is smooth on $[-1,\, 1] \times [0,\, 1]$ by the Proposition 1 and $\lambda(\epsilon)$ is smooth on $\epsilon \in I$, $C(\epsilon)=f\big(\epsilon, \lambda(\epsilon)\big)$ is also smooth on $\epsilon \in I$. Thus, $\phi(\lambda)$ is well defined. Hence, by the definition of tangent space via tangent curves, it follows that 
\begin{eqnarray*}
\phi(\lambda)= \lim_{\epsilon \to 0} \frac{C(\epsilon)-C(0)}{\epsilon}=C'(0) \in T_{f(\lambda)}S^{d}, 
\end{eqnarray*}
where $T_{f(\lambda)}S^d$ is the tangent space of $S^d$ at $f(\lambda)$. Note that, by the symmetry of spheres, any tangent vector in $T_{f(\lambda)}S^d$ is orthogonal to the vector $f(\lambda)$, \textit{i.e.,} $\phi(\lambda)\cdot f(\lambda)=0$. Finally, if $f$ is an extrinsic principal curve, then 
\[
\mathbb{E}\big[X \ | \ \lambda_f(X)=\lambda\big]=cf(\lambda)
\]
for $\exists c\in \mathbb{R}$. Hence, it follows, by the bounded convergence theorem, that 
\begin{align*}
&\frac{\partial\mathbb{E}_{X} \big[\cos\big(d_{Geo}(X,\, f+\epsilon g)\big)\big]}{\partial\epsilon} \Big |_{\epsilon=0} \\ 
 &= \lim_{\epsilon \to 0} \frac{\mathbb{E}_{X}\big[\cos\big(d_{Geo}(X,\, f+\epsilon g)\big)\big] - \mathbb{E}_{X}\big[\cos\big(d_{Geo}(X,\, f)\big)\big]}{\epsilon}  \\
 &= \mathbb{E}_{X} \big[\lim_{\epsilon \to 0} \frac{\cos(d_{Geo}(X,\, f+\epsilon g))-\cos(d_{Geo}(X,\, f))}{\epsilon} \big] \\ 
 &=  \mathbb{E}_{\lambda}\big[\mathbb{E}\big[\lim_{\epsilon \to 0}Z_{\epsilon}(X) \ \big| \ \lambda_{f}(X)=\lambda \big] \big] \\
 &=  \mathbb{E}_{\lambda}\big[\mathbb{E}\big[\phi(\lambda) \cdot X \ \big| \ \lambda_{f}(X)=\lambda \big] \big]\\
 &=  \mathbb{E}_{\lambda}\big[\phi(\lambda) \cdot \mathbb{E}\big[X \ \big| \ \lambda_{f}(X)=\lambda \big] \big]\\
 &=  \mathbb{E}_{\lambda}\big[\phi(\lambda) \cdot cf(\lambda) \big]  \\
 &=  0.
\end{align*}
To prove the converse, we assume that $f$ satisfies 
\begin{eqnarray*}
0 & = & \frac{\partial\mathbb{E}_{X}\big[\cos\big(d_{Geo}(X,\, f+\epsilon g)\big)\big]}{\partial\epsilon} \Big |_{\epsilon=0} \\
  & = & \lim_{\epsilon \to 0} \frac{\mathbb{E}_{X}\big[\cos\big(d_{Geo}(X,\, f+\epsilon g)\big)\big] - \mathbb{E}_{X}\big[\cos\big(d_{Geo}(X,\, f)\big)\big]}{\epsilon}  \\ 
  & = & \mathbb{E}_{X} \big[\lim_{\epsilon \to 0} \frac{\cos\big(d_{Geo}(X,\, f+\epsilon g)\big)-\cos\big(d_{Geo}(X,\, f)\big)}{\epsilon} \big]  \\ 
  & = & \mathbb{E}_{\lambda}\big[\mathbb{E}\big[\lim_{\epsilon \to 0}Z_{\epsilon}(X) \ \big| \ \lambda_{f}(X)=\lambda \big] \big]  \\
  & = &  \mathbb{E}_{\lambda}\big[\mathbb{E}\big[\phi(\lambda) \cdot X \ \big| \ \lambda_{f}(X)=\lambda \big] \big] \\
  & = &  \mathbb{E}_{\lambda} \big[\phi(\lambda)\cdot \mathbb{E}\big[X \ \big| \ \lambda_{f}(X)=\lambda \big] \big],
\end{eqnarray*}
for any smooth curve $h:[0,\, 1] \to S^d$. Since $h$ is arbitrary, $\phi$ can become any vector in $T_{f(\lambda)}S^d$. In addition, $h$ is only concerned with $\phi$. We thus obtain, for \textit{a.e.} $\lambda$, the following condition:  
\begin{eqnarray*}
\phi\cdot \mathbb{E}\big[X \ | \ \lambda_{f}(X)=\lambda \big]=0 ~\mbox{for}~\mbox{any}~\phi \in T_{f(\lambda)}S^d.
\end{eqnarray*}
It means that $\mathbb{E}\big[X|\lambda_{f}(X)=\lambda \big]$ is orthogonal to $T_{f(\lambda)}S^d$. Therefore, it follows that
\begin{eqnarray*}
\mathbb{E}\big[X \ | \ \lambda_{f}(X)=\lambda \big]=cf(\lambda)
\end{eqnarray*}
for $\exists c\ge 0$, which completes the proof.
\end{proof}

\vskip 5mm
\noindent{\bf Proof of Theorem 2}
\begin{proof}
	In the case of $f=h$, the result is obvious. We thus assume that $f$ and $f+g(\small = h)$ are not identical. Further, suppose that $X \in A^c \cap B(\zeta)$ for a small $\zeta > 0$ and $\lambda_{f}(X) \in (0,\, 1)~ \mbox{for}~\mbox{\textit{a.e.}}~ X$. As the proof of Theorem 1, we use the bounded convergence theorem to change the order of derivative and expectation. For this purpose, we define
	\begin{eqnarray}
	Z_\epsilon(X) & = & \frac{d^{2}_{Geo}(X,\, f+\epsilon g)-d^{2}_{Geo}(X,\, f)}{\epsilon} \nonumber\\ 
	& = & \frac{d^{2}_{Geo}(X,\, f_{\epsilon}\big(\lambda_{f_{\epsilon} })\big)-d^{2}_{Geo}\big(X,\, f(\lambda_f)\big)}{\epsilon}, \nonumber
	\end{eqnarray}where $f_{\epsilon} := f + \epsilon g$ for $|\epsilon|\le 1$. Let $\theta(\lambda,\, X)$ be the angle between segments of geodesics from $f(\lambda)$ to $X$ and from $f(\lambda)$ to $(f+g)(\lambda)$. Then, from Lemma 3, it follows that 
	\begin{align*}
	F(\epsilon):=&\cos\big(d_{Geo}\big(X,\,  f_{\epsilon}(\lambda_{f_\epsilon})\big)\big) \\
	=&\cos\big(d_{Geo}\big(X,\, f(\lambda_{f_\epsilon})\big)\big)\cdot \cos\big(\epsilon \norm{g(\lambda_{f_\epsilon})}\big)  \\ 
	&+ \sin\big(d_{Geo}\big(X,\, f(\lambda_{f_\epsilon})\big)\big) \cdot \sin\big(\epsilon \norm{g(\lambda_{f_\epsilon})}\big)\cdot \cos\big(\theta(\lambda_{f_\epsilon},X)\big), 
	\end{align*}
	where $\norm{g(\lambda)} = d_{Geo}\big(f(\lambda),\,  (f+g)(\lambda)\big) < \pi$. 
	
	Firstly, we verify that $Z_{\epsilon}(X)$ is uniformly bounded for a small $|\epsilon| > 0$. By Lemma 5, there are constants $C>0$ and $\eta > 0$ such that if $0 < |\epsilon_0| < \eta$, then $\lambda(\epsilon)$ is differentiable at $\epsilon=\epsilon_0$ and $\big|\frac{\partial \lambda(\epsilon)}{\partial \epsilon} \big|_{\epsilon=\epsilon_0} \big| < C$, where $\lambda(\epsilon)=\lambda_{f_{\epsilon}}(X)$. For convenience, let $\lambda_{f_{\epsilon}}(X) = \lambda_{\epsilon}$ and $\lambda_f(X) = \lambda_0$. If $ 0 < |\epsilon_{0}| < \eta$, then by the triangle inequality on sphere and mean value theorem, it follows that 
	\begin{align*}
		|Z_{\epsilon_{0}}(X)| 
		&= \Bigg|\frac{d_{Geo}\big(X,\, f_{\epsilon_0}(\lambda_{f_{\epsilon_0}                      })\big)-d_{Geo}\big(X,\, f(\lambda_f)\big)}{\epsilon_0} \Bigg|
		\cdot \Big(d_{Geo}\big(X,\, f_{\epsilon_0}(\lambda_{f_{\epsilon_0}})\big) + d_{Geo}\big(X,\, f(\lambda_f) \big)\Big) \\
		&\le 2\pi \cdot  \frac{d_{Geo}\big(f(\lambda_0),\, f_{\epsilon_0}(\lambda_{\epsilon_0})\big)}{\epsilon_0}  \\
		&\le 2\pi \cdot \bigg[\frac{d_{Geo}\big(f(\lambda_0),\, f(\lambda_{\epsilon_0})\big)}{\epsilon}
		+ \frac{d_{Geo}\big(f(\lambda_{\epsilon_0}),\, f_{\epsilon_0}(\lambda_{\epsilon_0})\big)}{\epsilon_0}\bigg] \\
	    &<  2\pi \cdot \big(s\cdot \frac{|\lambda_{0}-\lambda_{\epsilon_0}|}{\epsilon_0} + \norm{g(\lambda_{\epsilon_0})}\big) \\
		&\le 2\pi \cdot (s \cdot C+ \pi), 
	\end{align*}where $s=|f'(\lambda)|$ for all $\lambda$. Therefore, $Z_{\epsilon}(X)$ is uniformly bounded on $X \in A^c\cap B(\zeta)$ for $0 < |\epsilon| < \eta $. 
	
	Secondly, we aim to find the limit of $Z_{\epsilon}(X)$. For this purpose, we define a map $u:(-1,\, 1]\to (1,\infty)$ by $u(x)=\arccos(x) \cdot \frac{1}{\sqrt{1-x^2}}$ if $x \in (-1,\,  1)$, and $u(1)=1$. By simple calculations, $u$ is a monotone decreasing continuous function on $(-1,\, 1]$. Note that $F(\epsilon)$ is differentiable for $|\epsilon| < \eta$. By the mean value theorem to find the limit of $Z_{\epsilon}(X)$, we have 
	\begin{eqnarray}
	\label{r_eq}
	Z_{\epsilon_{0}}(X) &=& \frac{d^{2}_{Geo}\big(X,\, f_{\epsilon_0}\big(\lambda_{f_{\epsilon_0}                      })\big)-d^{2}_{Geo}\big(X,\, f(\lambda_f)\big)}{\epsilon_0} \nonumber \\
	&=&   \frac{\arccos^{2}\big(F(\epsilon_0)\big)-\arccos^2\big(F(0)\big)}{\epsilon_0} \nonumber \\
	&=& -2 \arccos\big(F(\epsilon_{1})\big) \cdot \frac{1}{\sqrt{1-F^{2}(\epsilon_{1})}} \cdot      \frac{dF(\epsilon)}{d\epsilon}\Big|_{\epsilon = \epsilon_{1}}  
	\end{eqnarray}for $0 < |\epsilon_1| < |\epsilon_0| < \eta$. When $F(\epsilon_{1})=1$, the last equality is considered as a limit that is well-defined, because $\lim_{x \to 1}u(x) = 1$ and $u(x)$ is smoothly extended on an open interval containing 1 such that $u(x)$ is differentiable at $x=1$. By applying chain rule to the derivative of $F$, we obtain  
	\begin{align*}
	\lim_{\epsilon_{0} \to 0} \frac{\partial F(\epsilon)}{\partial \epsilon} \Big |_{\epsilon=\epsilon_0} 
	= \lim_{\epsilon_{0} \to 0} \Big[ &\sin\big(d_{Geo}\big(X,f(\lambda_{f_{\epsilon_0}})\big)\big) \\ 
	&\cdot \cos\big(\theta(\lambda_{f_{\epsilon_0}},X)\big) \cdot \Big(\lVert g(\lambda_{f_{\epsilon_0}}) \rVert + \epsilon_{0}\cdot \frac{\partial \norm{g(\lambda_{f_{\epsilon}})}}{\partial \epsilon}\Big |_{\epsilon=\epsilon_0}\Big)\Big] \\
	- \lim_{\epsilon_{0}\to 0}{} &\Big[\sin\big(d_{Geo}\big(X,\, f(\lambda_{f_{\epsilon_0}})\big)\big) \cdot \frac{\partial d_{Geo}\big(X,\, f(\lambda_{f_\epsilon})\big)}{\partial \epsilon} \Big |_{\epsilon=\epsilon_0}\Big].
	\end{align*}
	In addition,
	\[
	\frac{\partial \norm{g(\lambda_{f_\epsilon})}}{\partial \epsilon} \Big |_{\epsilon=\epsilon_0} = \frac{\partial \norm{g(\lambda)}}{\partial \lambda} \Big |_{\lambda=\lambda_{f_{\epsilon_0}}} \frac{\partial \lambda(\epsilon)}{\partial \epsilon} \Big |_{\epsilon=\epsilon_0}, 
	\]
	which exists and does not diverge as $\epsilon_{0}$ goes to 0, since $\norm{g(\lambda)}=d_{Geo}\big(f(\lambda),\,  (f+g)(\lambda)\big)$ is continuously differentiable for $\lambda$ and $\frac{\partial \lambda(\epsilon)}{\partial \epsilon} \big|_{\epsilon=0}$ is bounded by Lemma 5. Moreover, 
	\begin{align*}
	\lim_{\epsilon_{0}\to 0} \frac{\partial d_{Geo}\big(X,\, f(\lambda_{f_\epsilon})\big)}{\partial \epsilon} \Big |_{\epsilon=\epsilon_0}
	&= \lim_{\epsilon_{0} \to 0} \frac{\partial d_{Geo}\big(X,\,  f(\lambda) \big)}{\partial \lambda}\Big |_{\lambda=\lambda_{f_{\epsilon_0}}}
	\cdot \frac{\partial \lambda(\epsilon)}{\partial \epsilon} \Big |_{\epsilon=\epsilon_0} \\
	&= \frac{\partial d_{Geo}\big(X,\, f(\lambda)\big)}{\partial \lambda} \Big |_{\lambda=\lambda_{f}} \frac{\partial \lambda(\epsilon)}{\partial \epsilon} \Big |_{\epsilon=0}=0,
	\end{align*}where $\lambda(\epsilon)= \lambda_{f_\epsilon}$. The last equality is done by the definition of $\lambda_f$. Therefore, we have
	\begin{eqnarray}
	\label{thm2_limit_F}
	\lim_{\epsilon \to 0} \frac{\partial F(\epsilon)}{\partial \epsilon} = \norm{g(\lambda_f)}\cdot \cos\big(\theta(\lambda_{f},\, X)\big)
	\cdot \sin\big(d_{Geo}\big(X,\, f(\lambda_f)\big)\big). 
	\end{eqnarray}
	Thirdly, it follows from (\ref{r_eq}) and (\ref{thm2_limit_F}) that
	\begin{align}
	\label{thm2_limit_1}
	\lim_{\epsilon_{0}\to0}{Z_{\epsilon_{0}}(X)}
	 =& \lim_{\epsilon_{1}\to0}\Big[-2\arccos{F(\epsilon_{1})}\cdot\frac{1}{\sqrt{1-F^{2}(\epsilon_1)}}
	 \cdot\frac{dF(\epsilon)}{d\epsilon}\Big|_{\epsilon=\epsilon_{1}}\Big] \nonumber \\ 
	 =& -2u\Big(\cos\big(d_{Geo}\big(X,\, f(\lambda_f)\big)\big)\Big)\cdot \norm{g(\lambda_f)} 
	  \cdot \cos\big(\theta(\lambda_{f},\, X)\big)\cdot \sin\big(d_{Geo}\big(X,\, f(\lambda_f)\big)\big) \\ 
	 =& -2d_{Geo}\big(X,\, f(\lambda_f)\big)\cdot \frac{1}{\sin\big(d_{Geo}\big(X,\, f(\lambda_f)\big)\big)} \nonumber \\
	 & \cdot \norm{g(\lambda_f)} \cdot \cos\big(\theta(\lambda_{f},\, X)\big)\cdot \sin \big(d_{Geo}\big(X,\, f(\lambda_f)\big)\big) \\
	 \label{thm2_limit_2}
	 =& -2d_{Geo}\big(X,\, f(\lambda_f)\big)\cdot \norm{g(\lambda_f)}\cdot \cos\big(\theta(\lambda_f,\, X)\big),
	 \end{align}In the case of $d_{Geo}\big(X,\, f(\lambda_{f})\big) = 0$, the same result follows since both (\ref{thm2_limit_1}) and (\ref{thm2_limit_2}) are zero. Thus, by Proposition 5, the equation (\ref{thm2_limit_2}) is established for \textit{a.e.} $X \in B(\zeta)$. Next, we notice that, for a smooth curve $f$, it can be shown that $M_{\lambda}:=\{x \in S^2 \ | \ \lambda_{f}(x)=\lambda \}$ is a subset of the great circle perpendicular to $f$ at $f(\lambda)$ by Lemma 2. Let $S_{\lambda}$ be the great circle perpendicular to $f$ at $f(\lambda)$. That is, $M_{\lambda} \subset S_{\lambda} \cong S^1$. Moreover, a connected proper subset of $S_{\lambda}$ is isometric to a line with the same length in $\mathbb{R}$, which makes the intrinsic mean on $M_\lambda$ feasible. Note that if the length is less than $\pi/2$, the intrinsic mean is unique. Thus, $f$ is an intrinsic principal curve of $X$, by the definition of $\theta(\lambda_{f},X)$ and $\cos(\pi-\theta)= -\cos(\theta)$, if and only if
	\begin{align*}
	\mathbb{E}\big[d_{Geo}\big(X,\, f(\lambda_{f})\big) \cdot \cos\big(\theta(\lambda_{f},\, X)\big) \ \big| \ \lambda_{f}(X) = \lambda \big] = 0,
	~~ \mbox{for} ~ \mbox{\textit{a.e.}}~ \lambda. 
	\end{align*} 
	Finally, it follows, from (\ref{thm2_limit_2}) and by the bounded convergence theorem, that
	\begin{align*}
		& \frac{\partial\mathbb{E}_{X}\big[d^2_{Geo}(X,\, f+\epsilon g)\big]}{\partial\epsilon} \Big|_{\epsilon=0} \\  
		&= \lim_{\epsilon\to 0} \big[\frac{\mathbb{E}_{X}\big[d^2_{Geo}(X,\, f+\epsilon g)\big] - \mathbb{E}_{X}\big[d^2_{Geo}(X,\, f)\big]}{\epsilon} \big] \\ 
		 &= \mathbb{E}_{X} \big[\lim_{\epsilon\to 0} \frac{d^2_{Geo}(X,\, f+\epsilon g) - d^2_{Geo}(X,\, f)}{\epsilon} \big] \\ 
		 &= \mathbb{E}_{\lambda}\big[\mathbb{E}\big[\lim_{\epsilon \to 0}Z_{\epsilon}(X) \ \big| \ \lambda_{f}(X)=\lambda\big]\big] \\
		 &= -2\mathbb{E}_{\lambda} \big[\mathbb{E}\big[ d_{Geo}\big(X,\, f\big(\lambda_{f}(X)\big)\big) \cdot \norm{g\big(\lambda_{f}(X)\big)} 
		 \cdot \cos \big(\theta(\lambda_{f},\, X)\big) \ \big| \ \lambda_{f}(X)=\lambda \big] \big] \\
		 &= -2 \mathbb{E}_{\lambda}\big[\norm{g(\lambda)} \cdot \mathbb{E} \big[d_{Geo}\big(X,\, f\big(\lambda_{f}(X)\big)\big) 
		 \cdot \cos\big(\theta(\lambda_{f},\, X) \big) \ \big| \ \lambda_{f}(X)=\lambda \big) \big] \big] \\
		 &= 0.
	\end{align*}Conversely, we assume that  
	\begin{align*}
	\mathbb{E}_{\lambda}\big[\norm{g(\lambda)}\cdot \mathbb{E}\big[d_{Geo}\big(X,\, f\big(\lambda_{f}\big)\big) \cdot \cos\big(\theta(\lambda_{f},X)\big) 
	\big| \ \lambda_{f}(X)=\lambda \big] \big]= 0, 
	\end{align*}
	for all $f+g(=h)$ such that $\norm{g} \ne \pi$ and $\norm{g'}\le 1$. It follows that  
	\begin{align*}
	\mathbb{E}\big[d_{Geo}\big(X,\, f(\lambda_{f})\big)\cdot \cos\big(\theta(\lambda_{f},\, X)\big) \ \big| \ \lambda_{f}(X) = \lambda \big]=0, 
	~~ \mbox{for} ~ \mbox{\textit{a.e.}} ~ \lambda,
	\end{align*}
	which is equivalent to that $f$ is an intrinsic principal curve of $X$.
\end{proof}

\section{Concluding Remarks} 
In this paper, new principal curves are proposed for data on spheres. The extrinsic and intrinsic perspectives are considered, and the stationarity of the principal curves is investigated, supporting that the proposed methods are a direct generalization of the principal curves by \cite{Hastie} to spheres.

For the data on $S^d$, both extrinsic and intrinsic approaches yield similar performance. However, it is questionable whether the extrinsic approach of non-isotropic manifolds, like a torus, will still be valid. For some non-isotropic manifolds, the intrinsic approach may yield better performance because of its inherency. Finally, the principal curve algorithm proposed in this study is a top-down approach. It approximates the structure of data with an initial curve and then gradually improves the estimation. However, for complex structures divided into several pieces or containing intersections, the initialization can significantly affect the final estimate. To cope with this limitation, it is worth studying a bottom-up approach. This approach to spheres is left for future research.  

\section*{Acknowledgments} 
This research was supported by the National Research Foundation of Korea (NRF) funded by the Korea government (2018R1D1A1B07042933; 2020R1A4A1018207).

%\appendices

%%%%%%%%%%%%%%%%%%%%%%%%%%%%%%%%%%%%%%%%%%%%%%%%%%%%%%%%%%%%%%%%%%%%%
% REFERENCES
%%%%%%%%%%%%%%%%%%%%%%%%%%%%%%%%%%%%%%%%%%%%%%%%%%%%%%%%%%%%%%%%%%%%%
% Create a bibliography directory and place your .bib file there.
%\bibliographystyle{ieeetr}
%\bibliographystyle{astron}
\bibliographystyle{apalike}
\bibliography{spc}

\end{document}